\documentclass[%
 reprint,
 amsmath,amssymb,
 aps,pra]{revtex4-1}
\usepackage{epsfig,amssymb,amsmath,mathrsfs,amsthm,graphicx,float,color}
\usepackage[colorlinks=true]{hyperref}
\usepackage[active]{srcltx}
\usepackage[matrix,frame,arrow]{xypic}
%
%

\usepackage[matrix,frame,arrow]{xy}
\usepackage{amsmath}

\newcommand{\qw}[1][-1]{\ar @{-} [0,#1]}



\newcommand{\gate}[1]{*{\xy *+<.6em>{#1};p\save+LU;+RU **\dir{-}\restore\save+RU;+RD **\dir{-}\restore\save+RD;+LD **\dir{-}\restore\POS+LD;+LU **\dir{-}\endxy} \qw}





\newcommand{\multimeasureD}[2]{*+<1em,.9em>{\hphantom{#2}}\save[0,0].[#1,0];p\save !C *{#2},p+LU+<0em,0em>;+RU+<-.8em,0em> **\dir{-}\restore\save +LD;+LU **\dir{-}\restore\save +LD;+RD-<.8em,0em> **\dir{-} \restore\save +RD+<0em,.8em>;+RU-<0em,.8em> **\dir{-} \restore \POS !UR*!UR{\cir<.9em>{r_d}};!DR*!DR{\cir<.9em>{d_l}}\restore \qw}






\newcommand{\multigate}[2]{*+<1em,.9em>{\hphantom{#2}} \qw \POS[0,0].[#1,0];p !C *{#2},p \save+LU;+RU **\dir{-}\restore\save+RU;+RD **\dir{-}\restore\save+RD;+LD **\dir{-}\restore\save+LD;+LU **\dir{-}\restore}
\newcommand{\ghost}[1]{*+<1em,.9em>{\hphantom{#1}} \qw}

\newcommand{\Qcircuit}[1][0em]{\xymatrix @*[o] @*=<#1>}  
 \renewcommand{\Qcircuit}[1][0em]{\xymatrix @*=<#1>}



\newcommand{\pureghost}[1]{*+<1em,.9em>{\hphantom{#1}}}
\newcommand{\multiprepareC}[2]{*+<1em,.9em>{\hphantom{#2}}\save[0,0].[#1,0];p\save !C
  *{#2},p+RU+<0em,0em>;+LU+<+.8em,0em> **\dir{-}\restore\save +RD;+RU **\dir{-}\restore\save
  +RD;+LD+<.8em,0em> **\dir{-} \restore\save +LD+<0em,.8em>;+LU-<0em,.8em> **\dir{-} \restore \POS
  !UL*!UL{\cir<.9em>{u_r}};!DL*!DL{\cir<.9em>{l_u}}\restore}

\newcommand{\poloFantasmaCn}[1]{{{}^{#1}_{\phantom{#1}}}}

\usepackage{graphicx,subfigure}
\vfuzz2pt 

\usepackage[normalem]{ulem}

\expandafter\let\csname equation*\endcsname\relax

\expandafter\let\csname endequation*\endcsname\relax

\usepackage{subfigure}
\vfuzz2pt 


\newcommand{\set}[1]{\mathsf{#1}}
\newcommand{\grp}[1]{\mathsf{#1}}
\newcommand{\spc}[1]{\mathcal{#1}}


\def\d{{\rm d}}

\newcommand{\Span}{{\mathsf{Span}}}

\def\>{\rangle}
\def\<{\langle}

\newcommand{\st}[1]{\mathbf{#1}}

\newcommand{\map}[1]{\mathcal{#1}}


\theoremstyle{remark}

\newtheorem{lemma}{Lemma}
\newtheorem{prop}{Proposition}


\def\Tr{\operatorname{Tr}}

\begin{document}

\preprint{APS/123-QED}

\title{Quantum speedup in the identification of cause-effect relations
}
\author{Giulio Chiribella$^{*1,2,3}$ and Daniel Ebler$^{4,1}$} 
\affiliation{
$^1$ Department of Computer Science, The University of Hong Kong, Pokfulam Road, Hong Kong}
\email{giulio@cs.hku.hk}
\affiliation{
$^2$  Department of Computer Science, University of Oxford, Oxford, OX1 3QD, United Kingdom\\
$^3$  Perimeter Institute for Theoretical Physics, Waterloo, Ontario N2L 2Y5, Canada}
\affiliation{
$^4$  Institute for Quantum Science and Engineering,  Department of Physics, Southern University of Science and Technology, Shenzhen, China.}

\begin{abstract}
The ability to identify cause-effect relations is an essential  component of the scientific method. 
The identification of causal relations  is generally accomplished through  
statistical trials where alternative hypotheses are tested against each other. 
Traditionally, such trials have been based on classical statistics. However, classical statistics  becomes inadequate at the quantum scale, where a richer spectrum of causal relations is accessible.
 Here we show that quantum strategies can greatly speed up the identification of causal relations. 
 We analyse the  task of identifying the effect of a given variable, and we show that the optimal quantum strategy beats all classical strategies by running multiple equivalent tests in a quantum superposition. The same working principle leads to advantages in the  detection of  a causal link between two  variables, and in the identification of the cause of a given variable. 
   \end{abstract}
\maketitle

\section*{Introduction}
  Identifying  causal relations  is a fundamental primitive in a variety of  areas,  including  machine learning, 
  medicine,   and genetics   \cite{spirtes2000causation,pearl2009,pearl2014}. 
    A canonical approach is to formulate different hypotheses on the cause-effect relations characterising a given phenomenon, and test them against each other. 
    For example, in a drug test  some patients are administered the drug, while others are administered a placebo, with the scope of determining whether or not the drug causes recovery.  Traditionally, causal discovery techniques have been based on classical statistics, which effectively describes the behaviour   of  macroscopic variables.   
     However, classical techniques become inadequate when dealing with quantum systems,  whose response to interventions can strikingly differ from that of classical random variables  \cite{chaves2018quantum,van2018quantum}.

 Recently, there has been a growing interest in the extension of causal reasoning to  the quantum domain.   Several  quantum generalizations of the notion of causal network  have been proposed     \cite{leifer2006quantum,chiribella-dariano-2009-pra,coecke2012picturing,leifer2013towards, henson2014theory, pienaar2015graph,costa2016quantum,portmann2017causal,allen2017quantum,maclean2017quantum}  and new  algorithms for quantum causal discovery have been designed \cite{wood2015lesson,fitzsimons2015quantum, ried2015, chaves2015information, giarmatzi2018quantum}.   Besides its foundational relevance, the study of quantum causal discovery algorithms is expected to have applications in the emerging area of quantum machine learning \cite{schuld2015introduction,biamonte2017quantum}, in the same way as  classical causal discovery algorithms have previously impacted  classical artificial intelligence.
   
   An intriguing possibility  is that quantum mechanics may provide enhanced ways   to identify   causal links.    A  clue in this direction comes from Refs. \cite{fitzsimons2015quantum,ried2015}, where the authors  show that  certain quantum  correlations are witnesses of  causal relationships, in  apparent violation of the classical tenet   \textit{correlation does not imply causation}.    This observation suggests that quantum setups for testing causal relationships could overcome some of the limitations of existing classical setups. However, the type of advantage highlighted in \cite{fitzsimons2015quantum,ried2015} only concerns a limited class of setups, where the  experimenter  is constrained  to a subset of the possible interventions.    If  arbitrary interventions are allowed, this particular type of advantage disappears.   A fundamental open question is whether   quantum setups can offer an advantage over all classical setups, without  any restriction  on the experimenter's interventions.

Here  we answer the question in the affirmative, proving that quantum features like superposition and entanglement can significantly speed up  the identification  of  causal relations.   We start from the task of  deciding which variable, out of a list of candidates,  is the effect of a given variable.    We first analyze the problem  in the classical setting,  determining the performance of the best classical strategy. Then, we construct a quantum strategy that reduces the error probability by an exponential amount, doubling the decay rate of the error probability with the number of accesses to the relevant variables. 
Remarkably, the decay rate of  our strategy is  the highest achievable rate allowed by quantum mechanics, even if one allows for  exotic setups where the order of operations is indefinite  \cite{chiribella2013quantum,oreshkov2012quantum}.      

The key ingredient of the quantum  speedup   is the ability to run multiple equivalent experiments in a quantum superposition.  
 The same working principle enables  quantum speedups in a broader set of tasks, including, {\em e.g.}, the task of deciding whether there exists a causal link between two given variables, and the task of identifying the cause of a given variable. 
    
    \section*{Results}
    \setcounter{equation}{1}
 {\em Theory-independent framework for testing causal hypotheses.}  Here we outline a  framework for testing causal hypotheses in general  physical theories  \cite{hardy2001quantum,barnum2007generalized,barrett2007information,chiribella2010probabilistic,hardy2011foliable,chiribella2016quantumtheory}. In this framework, variables are represented as physical systems, each system with its set of states.   The framework applies to   theories satisfying the Causality Axiom \cite{chiribella2010probabilistic}, stating that  the probability of an event at a given time should not depend on  choices of settings made at future times. 
  
   A causal  relation between  variable $A$ and variable $B$ is represented by a  map   describing how the state of  $B$ responds to changes in the state of  $A$.
       If the map discards $A$ and outputs a fixed  state of $B$,   then no causal influence  can be observed.    
     In all the other cases, some change of $A$ will lead to an observable change of $B$. Hence, we say that $A$ is a cause for $B$.  
     
In general, the set of allowed causal relationships depends on the physical theory, which determines which maps can be implemented by physical processes. 
   In classical physics,  cause-effect relations   can be represented by  conditional probability distributions of the form $p(b|a)$, where $a$ and $b$ are the values of the random variables $A$ and $B$, respectively.  In quantum theory, cause-effect relations   are described by quantum channels, {\em i.e.} 
 completely positive trace-preserving maps transforming density matrices of system $A$ into density matrices of system $B$.    

Given a set of variables, one can formulate   hypotheses on the causal relationships  among them. For example,   consider a three-variable scenario, where  variable $A$ may cause either variable $B$ or  variable $C$, but not both.  The causal relation is described by a process $\map C$, with input $A$ and outputs $B$ and $C$. Here we consider two alternative causal hypotheses: either  \textit{$A$ causes $B$ but not $C$},   or   \textit{$A$ causes $C$ but not $B$}. 
   The problem  is to distinguish between these  two hypotheses without having further  knowledge of the  physical process responsible for  the causal relation.  
  This means that the process  $\map C$   is unknown, except for the fact that it must compatible with one and only one of the two hypotheses.    Mathematically, the two hypotheses correspond to two sets of physical processes, and   the problem is to determine which set contains the process $\map C$.

In order to decide which hypothesis is correct, we assume that the experimenter has  black box access to the physical process $\map C$. 
The experimenter   can probe the process for $N$ times, intervening between one instance and the next, as illustrated in Figure   \ref{fig:sequential}. In the end, a measurement is performed and its outcome is used to guess the correct  hypothesis.   

\medskip

               \begin{figure}[ht]
        	\centering
        	\includegraphics[width=0.5\textwidth]{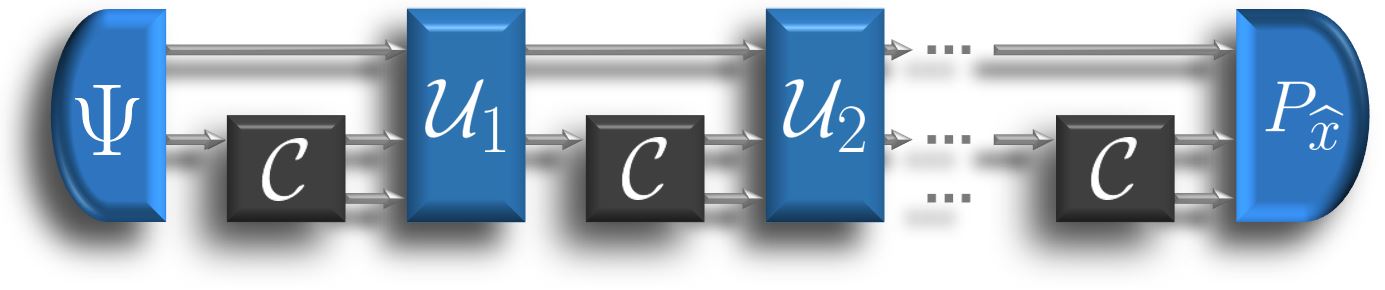}
        	\caption{\footnotesize
        		\textbf{Testing causal hypotheses in the black box scenario.}   The unknown process $\map C$ induces a causal relation between one input variable and two output variables.    The experimenter probes the process for $N$ times, intervening on the relevant  variables at each time step.   The first intervention is the preparation of a state  $\Psi$, involving the  input of the black box and, possibly, an additional reference system (top wire). The subsequent interventions  $\map U_i$   manipulate the output variables and prepare the inputs variables for the next steps.  In the end, the output variables and the reference system are measured, and the measurement outcome is used to  infer  the  causal relation.  }
        	\label{fig:sequential}
        \end{figure}

 An important question is how fast the probability of error decays with  $N$.  
  The decay is typically exponential, with an error probability  vanishing as $p_{\rm err}  (N)    \approx  2^{-R N}$ for some positive constant $R$, which we
     call the {\em discrimination rate}.    
 The operational meaning of the discrimination rate is the following. Given an error threshold $\epsilon$,  the error probability can be made smaller than $\epsilon$ using approximately  $N  > \log\epsilon^{-1}/R$ calls to the unknown process.   The bigger the rate, the smaller the number of calls needed to bring the error below the desired threshold. 

Since the explicit form of the process $\map C$ is unknown, we take $p_{\rm err} (N)$ to be the worst-case  probability over all processes compatible with the two  given causal hypotheses.  If prior information over $\map C$ is available, one may also consider a weaker performance measure, based on  the average  with respect to some prior.  In the following  we stick to the worst case scenario,  as it provides a stronger guarantee on the performance of the test.

\medskip

               \begin{figure}[ht]
        	\centering
        	\includegraphics[width=0.2\textwidth]{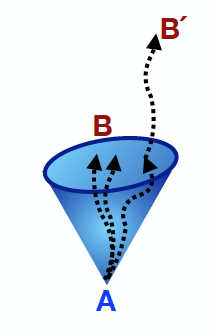}
        	\caption{\footnotesize
        		\textbf{Spacetime picture of a causal intermediary.}   Variable $A$ is localized at a point in spacetime, and its causal influences propagate within its future light cone.  Variable $B$ is distributed over a section of the light cone of $A$ and intercepts all the influences of $A$. Every other variable  $B'$ that is affected by $A$ and comes after $B$ must  be obtained from variable $B$ through some physical process.  
		}    
        	\label{fig:intermediary}
        \end{figure}

{\em Identifying causal intermediaries.}
  A variable  $B$ is  a  causal intermediary for variable $A$ if  all the influences of $A$ propagate through $B$.     Physically, one can think of  $B$ as a slice of the future light cone of $A$, so that all  causal influences of $A$ must pass through $B$, as illustrated in Figure \ref{fig:intermediary}.     Mathematically,  the fact that $B$ is a causal intermediary means that there exist a process $\map C$ from $A$ to $B$ such that  
for every other variable $B'$ and for every  process  $\map C'$ with input  $A$ and output $B'$   one can decompose $\map C'$ as $\map C'  =  \map R\circ \map C$, where $\map R$ is a suitable process from $B$ to $B'$. 

The condition that a variable is a causal intermediary of another has a simple characterisation in all  physical theories where processes are fundamentally reversible, meaning that they can be modelled as the result of a reversible evolution  of the system and an environment \cite{chiribella2010probabilistic}.   The reversibility condition is captured by the expression  $ \map C  =  (\map I_{B}  \otimes  \Tr_{E'})   \map U  (\map I_A \otimes \eta_E) $, 
 where variables $E$ and $E'$ represent the environment (before and after the interaction), $\eta$ is the initial state of the environment,  $\Tr_{E'}$ is the operation of discarding system $E'$ \cite{chiribella2010probabilistic},  and  $\map U$ is a reversible process from $AE$ to $BE'$.  

When the reversibility condition  is satisfied, the variable $A$ can be  recovered  from variables $B$ and $E'$.  If variable $B$ is to be a causal intermediary of $A$, then the process $\map C$ must be correctable, in the sense that its action can be undone by another process $\map R$.  
In addition, if the state spaces of variables $A$ and $B$ are finite dimensional and of the same dimension, then the process $\map C$ must be  reversible.   In classical theory, this means that $\map C$ is an invertible function. In quantum theory, this means that $\map C$ is a unitary channel, of the form $\map C (\rho)  =  U \rho U^\dag$ for some unitary operator $U$.   

In the following, we will consider the  task of identifying which variable, out of a given set of candidates,  is the causal intermediary of  a given variable $A$.  An important feature of this task is that it admits a complete analytical treatment, allowing us to rigorously prove a quantum   advantage over  all classical strategies.   Besides its fundamental interest, this advantage could have applications to the task of monitoring the information flow in  future quantum communication networks, allowing an experimenter to determine which node of a quantum network receives information from a given source node.  

\medskip 

{\em Optimal classical strategy.}    Suppose that   $A$, $B$, and $C$ are  random variables with the same alphabet  of size   $d<\infty$.    In this case, the fact that $X  \in  \{  B,C\}$ is a causal intermediary for $A$ means that the map from $A$ to $X$ is a permutation.  The first (second) causal  hypothesis is that $B$ ($C$) is  a permutation of $A$, while  $C$ ($B$) is  uniformly random.    Other than this, no information about the functional relation between the variables is known to the experimenter. In particular, the experimenter does not know which permutation  relates the variable $A$ to its causal intermediary $X$.  

Let us determine how well one can distinguish between the two hypotheses with  a finite number of experiments.     In principle, we should examine all sequential strategies  as in Figure \ref{fig:sequential}.  However, in classical theory the problem can be greatly  simplified: the optimal discrimination rate can be achieved by a parallel strategy, wherein the $N$ input variables are initially set  to some prescribed set of values \cite{hayashi2009discrimination}.  

    The possibility of an error arises  is when the randomly fluctuating variable   accidentally takes values that are compatible with a permutation, so that the outcome of the test gives no ground  to discriminate between the two hypotheses.   The probability of such inconclusive scenario  is equal to  $P(d,v)/d^{N}$, where $v$ is the number of distinct values of $A$ probed in the experiment and $P(d,v)=  d!/(d-v)!$ is the number of injective functions from a $v$-element set to a $d$-element set.   The probability of confusion is minimal for $v=1$,  leading to the overall error probability  
\begin{align}\label{cprob}
p_{\rm err}^{\rm C}  =  \frac1 {2d^{N-1}} \, .
\end{align}
  As a consequence, the rate at which the two causal hypotheses can be distinguished from each other is  
  \begin{align}\label{classical} R_{\rm C} =
   \log d \, . 
  \end{align}     

{\em A first quantum advantage.}   Classical systems can be regarded  quantum systems that     lost coherence  across the states of a fixed  basis, consisting of the classical states.    But what if coherence is preserved?    Could a  coherent superposition of classical states be a better probe for the causal structure?     

If the  causal relations are restricted to reversible gates that permute the  classical states,  coherence  offers an immediate advantage.  The experimenter  can prepare $N$ probes, each  in  the superposition   $|e_0\>  =  \sum_{i=0}^{d-1}  |i\>  /\sqrt d$.    Since the superposition is invariant under permutations,  the unknown process will produce  either  $N$ copies of the state $ |e_0\>\<e_0| \otimes I/d$ or $N$ copies of the state $I/d\otimes |e_0\>\<e_0|$, depending on which causal hypothesis holds. Using  Helstrom's minimum error measurement \cite{helstrom1969quantum}, the error probability  is reduced to  
\begin{align}\label{pcoh}
p_{\rm err}^{\rm coh }  = \frac1  {2d^N}  \, .
\end{align}  Compared with the  classical error probability  (\ref{cprob}), the error probability of this simple quantum strategy is reduced  by a factor $d$, which does not change the rate, but could be significant when the size of the alphabet  is large.  

Let us consider the full quantum version of the  problem. 
Three  quantum variables $A,B,$ and $C$, corresponding to $d$-dimensional quantum systems, are promised to satisfy one of two causal hypotheses: either  {\em (i)} the state of $B$  is obtained from the state of $A$ through an arbitrary unitary evolution and the state of $C$  is maximally mixed, or {\em (ii)} the state of $C$  is obtained from the state of $A$ through an arbitrary unitary evolution and the state of $B$  is maximally mixed.

Despite the fact that now  the cause-effect relation can be one of infinitely many unitary gates,  it turns out that  the error probability (\ref{pcoh}) can still be attained.  A universal quantum strategy, working for arbitrary unitary gates, is  to  prepare $d$ particles in the singlet state 
\begin{align}   |S_d\>    =  \frac 1 {\sqrt d!}    \sum_{k_1,k_2, \cdots,  k_d}  \,  \epsilon_{k_1k_2\dots k_d}   \,   |k_1\>  |k_2\>  \cdots |k_d\> \, \label{singlet}
\end{align} 
where $\epsilon_{k_1k_2\dots k_d}$ is the totally antisymmetric tensor and the sum ranges over all vectors in the computational basis. Then,  each of the $d$ particles is used as an input to one use of the channel.
Repeating the experiment for $t$ times, and performing Helstrom's minimum error measurement one can  attain the error probability  $p_{\rm err}^{\rm coh }  =(2d^N)^{-1}$, with $N  =  t d$,  independently of the unitary gate representing the cause-effect relationship.    In summary,  the quantum error probability is at least $d$ times smaller than the best classical error probability, even if the cause-effect relationship is described by an arbitrary unitary gate. 

  \medskip 
   
\begin{figure}
                \centering
                \includegraphics[width=0.3\textwidth]{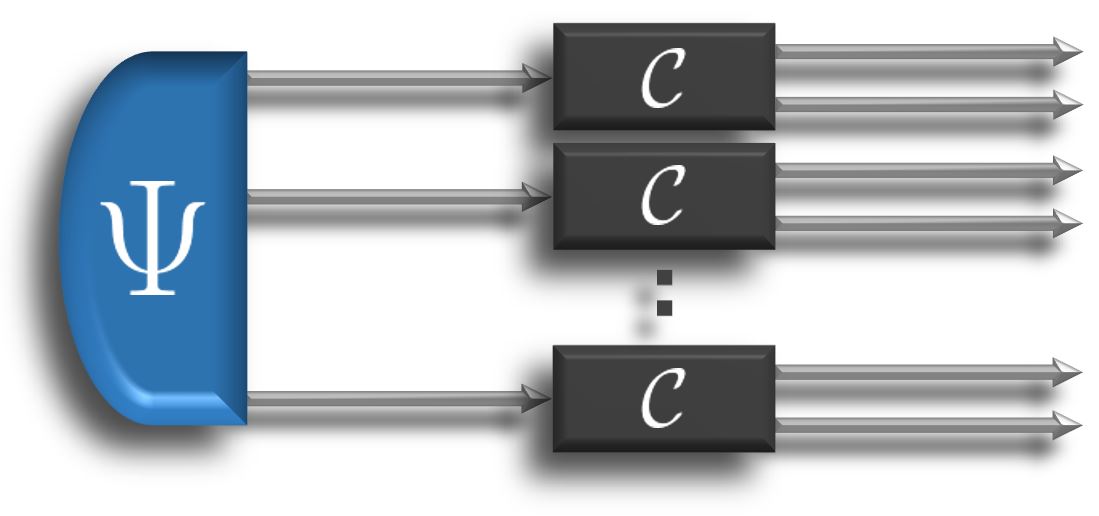}
               \caption{{\bf  Simple parallel strategies.}  The unknown process $\map C$ is probed for  $N$ times, acting in parallel on $N$ identical systems, initially prepared in a correlated state  $\Psi$.  }\label{fig:parallel}
\end{figure} 

{\em Optimality among simple parallel strategies.}   We now show that the value (\ref{pcoh}) is optimal among all  simple strategies where  the unknown process is applied $N$ times in parallel on $N$ identical input systems, as in Figure \ref{fig:parallel}.
 
 Optimality follows from a complementarity relation between the information  about the causal structure and the information about the  functional  dependence between  cause and  effect.  Suppose that  the cause-effect dependence amounts to a unitary gate $U$ in some finite set $\set U$.    The ability of a state $|\Psi\>$ to probe the cause-effect dependence  can be quantified by the probability $p^{\set U}_{\rm guess}$ of correctly guessing the  unitary $U$ from the state $U^{\otimes N } |\Psi\>$. 
  When the set of possibly unitaries has sufficient symmetry, we find that the probability of error in identifying the causal structure satisfies the lower bound 
\begin{align}\label{comple}
p_{\rm err}    \ge  \frac 1{2d^N}    \left\{ 1    +     \frac 1{2  (d^N-1)}  \left(  \frac{  p^{\set U}_{\rm guess}  - \frac 1{|\set U|}  }{ \frac 1  {|\set U|}}   \right)^2     \right\} \ 
\end{align}(Appendix \ref{app1}). 
The higher  the probability of success in  guessing the cause-effect dependence, the higher  the probability of error in identifying the causal structure. 
A consequence of the bound (\ref{comple}) is that the minimum error probability in identifying the causal intermediary  is  $(2d^N)^{-1}$, and is attained  when the success probability  $p^{\set U}_{\rm guess}$ is equal to the random guess probability  $1/|\set U|$.

\medskip

{\em  Exponential reduction of the error probability.} The bound (\ref{comple}) shows that the discrimination rate of simple parallel strategies cannot exceed the classical discrimination rate $\log d$. 
  We now show that that the rate can be doubled  by entangling the $N$ probes with an additional reference system.

  \begin{figure}
        \centering
\includegraphics[width=0.47\textwidth]{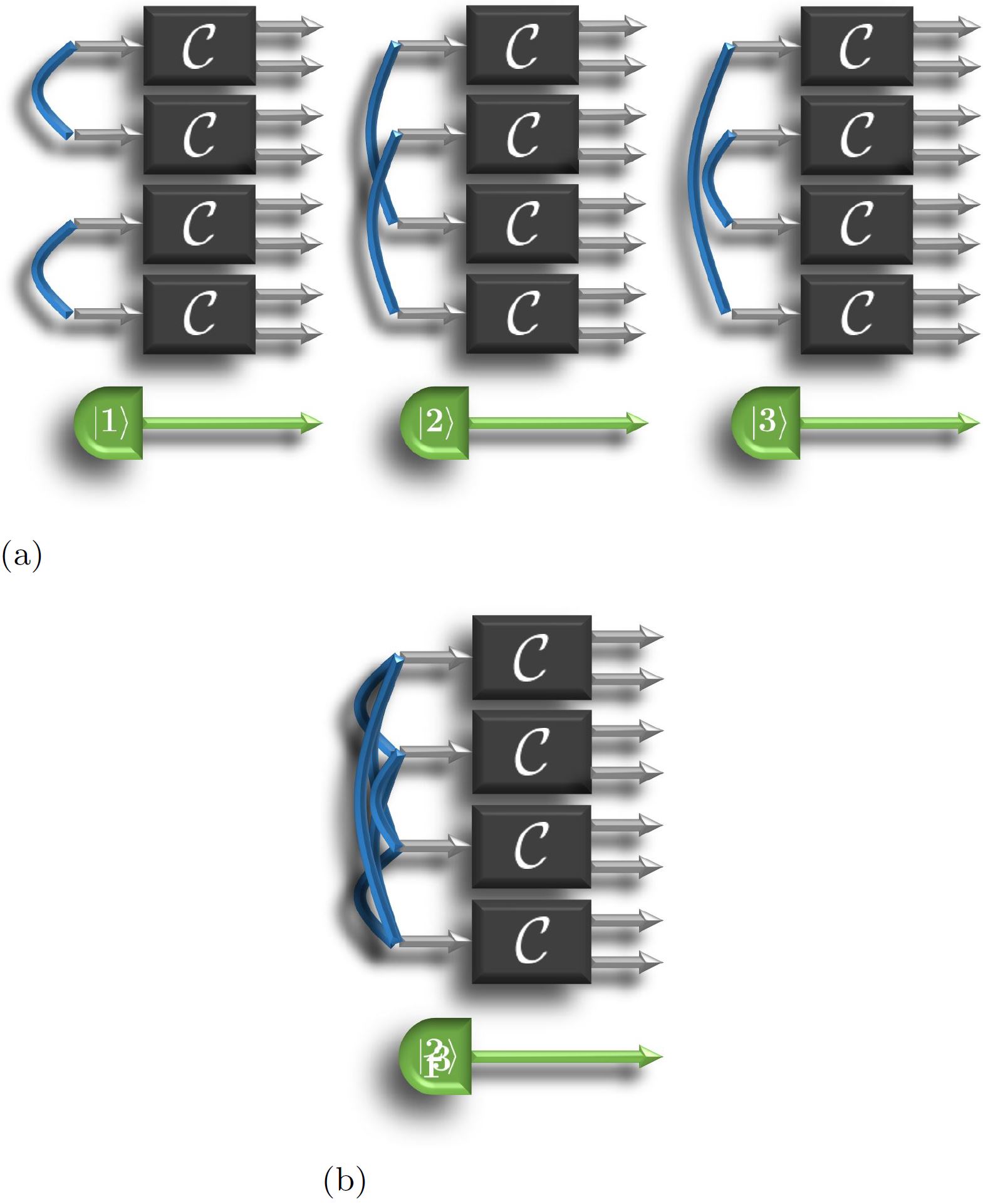}

        \caption{{\bf   Coherent superposition of configurations.}  Subfigure (a) shows the three different ways of dividing four quantum bits into groups of two. These three configurations are all equivalent for the identification of the causal intermediary.   Subfigure (b)  pictorially illustrates  a quantum superposition of  configurations, with the choice of  configuration  correlated with the state of a control system.  }\label{fig:superpos}
\end{figure}

The working principle of our strategy is to build a quantum superposition of equivalent experimental setups.    If no reference system is used,  we know  that the optimal  strategy is to divide the $N$ probes into  $N/d$  groups (assuming for simplicity that $N$ is a multiple of $d$), and to entangle the probes within each group.   Clearly,  different ways of dividing the $N$ inputs into groups of $d$   are equally optimal: it does not matter which particle is  entangled with which, as long as all each particle is part of a singlet state.   Still, we can imagine a machine that partitions the particles according to a certain  configuration $i$ if a control system is in the state $|i\>$. When the control system is in a superposition, the machine will probe the unknown process in a superposition of configurations, as pictorially illustrated in Figure  (\ref{fig:superpos}).   Explicitly, the optimal input state is 
 \begin{align}\label{optimal}
|\Psi \> =    \frac 1 {\sqrt {G_{N,d}}}  \,  \sum_{i=1}^{G_{N,d}}  \,     \left( |S_d\>^{\otimes N/d}\right)_i \otimes |i\>  \, ,
\end{align}   
where $i$ labels  the different ways to partition $N$ identical objects into groups of $d $ elements,   $G_{N,d}$ is the number of such ways,  $ \left( |S_d\>^{\otimes N/d}\right)_i$ is the product of $N/d$  singlet states arranged according to the $i$-th configuration,  and $\{  |i\>  \, , \,  i  = 1,\dots,  G_{N,d}\}$ are orthogonal states of the reference system. 


Classically, there would be  no point in randomizing  optimal configurations, because mixtures cannot reduce the error probability.  But in the quantum case, the coherent superposition  of equivalent configurations brings the error probability down  to 
\begin{align}
 p^{\rm Q}_{\rm{err}}  (r)  &=\frac{r}{2d^N }\left(1-\sqrt{1-   r^{-2}} \right)     \xrightarrow{r\gg 1}  \frac1{4 r  d^N  } \, , \label{quantum1}
\end{align}
 where $r$ is the number of linearly independent states of the form $( |S_d\>^{\otimes  N/d})_i$ (Appendix \ref{app2}).   

To determine how much the error probability can be reduced, we only need to evaluate the  number of linearly independent states. 
     It turns out that this number  grows as $d^N$, up to a polynomial factor (Appendix \ref{app2} again).  Taking the logarithm, we obtain the discrimination   rate  
 \begin{align} R_{\rm Q}  =  -\lim_{N\to \infty}   \frac{ \log p^{\rm Q}_{\rm err}}N  =     2 \log d \, , \label{qmrate}
 \end{align}
 which is  twice the classical discrimination rate (\ref{classical}).   In fact, the asymptotic regime is already reached with a small number of interrogations, of the order of a few tens.   For example,  the causal relation between two quantum bits can be determined with an error  probability smaller than $10^{-6}$  using with 12 interrogations, whereas  20 interrogations are necessary for classical binary variables.

The above strategy  is universal, in that it applies to causal relationships described by arbitrary unitary gates. In particular, it applies to   
gates that permute the classical states.  Hence,  the ability to maintain coherence across the classical states and to generate entanglement with a reference system offers an exponential speedup with respect to the best classical strategy.   In passing, we note that the universal quantum strategy  is insensitive to the presence of perfectly correlated noise, such as the noise due to the lack of a reference frame \cite{bartlett2007reference},  where each of the $N$ input variables is subjected to the same unknown  unitary gate.

\medskip 

{\em  The ultimate quantum limit.}  So far, we examined  strategies where the unknown process is applied in parallel to a large entangled state. 
  Could a general sequence of interventions  achieve  an even better rate? 
  
  Finding the optimal sequential strategy is generally a hard problem. 
   To address this problem, we introduce the  {\em fidelity divergence} of two quantum  channels   $\map C_1$ and $\map C_2$, defined as 
  \begin{align}\label{fid}
\partial F  (\map C_1, \map C_2)  = \inf_R \inf_{\rho_1, \rho_2}  \frac  {  F \Big [ (  \map C_1\otimes \map I_R)  (\rho_1) \, ,    (\map C_2\otimes \map I_R)  (\rho_2)  \Big]}{F( \rho_1,\rho_2)} \, ,
\end{align}  
where  $\rho_1$  and $\rho_2$ are  joint states of the channel's input and of  the reference system $R$. It is understood that the infimum  in the right hand side is taken over pairs of states $(\rho_1, \rho_2)$ for which the fidelity $F(\rho_1,\rho_2)$ is  non-zero,  so that the expression on the right hand side of Equation (\ref{fid}) is well-defined.   

The fidelity divergence quantifies the ability of  channels  $\map C_1$ and $\map C_2$  to  move two states apart from each other.   In the Methods section, we show that the error probability in distinguishing between  $\map C_1$ and $\map C_2$  with  $N$ queries is lower  bounded as 
\begin{align}
p^{\rm seq}_{\rm err}  (\map C_1,\map C_2; N)  \ge  \frac {\partial F   (\map C_1, \map C_2)^N  }4 \, .
\end{align}
  
  In particular, suppose that the two  channels  $\map C_1$ and $\map C_2$ have the form  $\map C_1  =  \map U\otimes I/d$ and $\map C_2  = I/d \otimes \map U$,  where $\map U$ is a fixed unitary channel. In this case,  we find that   the fidelity divergence is  $1/d^2$.  
 Hence, the error probability satisfies the bound
 \begin{align}\label{ratebound}
p^{\rm seq}_{\rm err}  (\map C_1,\map C_2; N)  \ge  \frac {1  }{4 d^{2N}}\, .
\end{align} 

In the causal intermediary problem, the unitary gate $\map U$ is unknown, and therefore the error probability can only be larger than $p^{\rm seq}_{\rm err}  (\map C_1,\map C_2; N) $.  Hence, the identification of the causal intermediary cannot occur at a rate faster than $2 \log d$.   

Equation (\ref{ratebound}) limits all sequential quantum strategies. But in fact   quantum theory is also  compatible with scenarios where physical processes take place in an indefinite order  \cite{chiribella2013quantum,oreshkov2012quantum}. Could the rate be increased if the experimenter had access to exotic phenomena involving indefinite order?    
 
The answer is negative.  In the Methods section we develop the concepts and methods needed to answer this question, and we show that  the minimum error probability in distinguishing between the two channels $\map C_1  =  \map I\otimes I/d$ and $\map C_2  = I/d \otimes \map I$  using arbitrary setups with indefinite order  satisfies the bound 
\begin{align}
p_{\rm err}^{\rm ind}   (\map C_1,\map C_2; N)  \ge  \frac{ 1- \sqrt{ 1-  \frac 1 {d^{2N}}}}2  \, . 
\end{align}
Clearly, this bound applies to the causal intermediary problem, which is  harder than the discrimination of the two specific channels $\map C_1  =  \map I\otimes I/d$ and $\map C_2  = I/d \otimes \map I$.    Hence, the rate $R_{\rm Q}  =   2 \log d$ represents the ultimate quantum limit to the identification of a causal intermediary.       
 
\medskip 

{\em Extension to arbitrary numbers of hypotheses.}  
The quantum advantage demonstrated in the previous sections can be extended to  the identification of the causal intermediary among an arbitrary number  $k$ of candidate variables. 
 The best classical strategy  still consists   in initializing all variables to the same value.  Errors arise when  the  values of two or more output variables are compatible with an invertible function.   In the limit of many repetitions, the minimum  error probability is    $ p^{\rm C}_{{\rm err}, k}   =   (k-1)/(2d^{N-1}) +  O\left({d^{-2 N}}\right)$.  
  (Appendix \ref{app3}). 
For  quantum strategies, the best option among simple parallel strategies  is still to divide the input particles into $N/d$ groups of $d$ particles and to initialize each group in the singlet state. In  Appendix \ref{app4}, we show that this strategy reduces the error probability to
$ p^{\rm coh}_{{\rm err},  k }  =   (k-1)/(2d^N) +  O\left( {d^{-2N} }\right)$, for causal relations represented by arbitrary unitary gates.  

An exponentially smaller error probability  can be achieved using  the input state (\ref{optimal}).    The evaluation of the error probability is  more complex than in the two-hypothesis case, but the end result is the same: when the causal dependency is probed $N$ times, the quantum error probability  decays at the exponential rate $R_{\rm Q}  =  2 \log d$, twice the rate of the best classical strategy (see Appendix \ref{app5} for the technical details).

\medskip

{\em Applications to other tests of causal hypotheses. }  The strategies developed  in the previous sections can be applied to the identification of causal relations in a variety of  scenarios.    For example, they can be used to decide whether there is a causal link between two variables $A$ and $B$. More specifically, they can be used to determine whether variable $B$  is a causal intermediary for   variable $A$ or whether $B$ fluctuates at random independently of $A$.  Also in this case, the error probability of the best classical strategy is $1/(2d^{N-1})$, whereas preparing $N/d$ copies of the singlet yields error probability $1/(2d^N)$.  

By superposing all possible partitions of the $N$ inputs into groups of $d$, one can boost the discrimination rate from $\log d$ to $2 \log d$. 
  One could speculate that, in the future,  such a fast identification could be useful as a quantum version of the ping protocol, capable of establishing whether there exists  a quantum  communication link between  two nodes of a quantum internet \cite{kimble2008quantum}. 


Another application  of our techniques is in the problem of identifying the cause of a given variable. Suppose that  one  of $k$ variables $A_1,A_2,\dots, A_k$ is the  cause  for a given variable $B$.    An example of this situation arises in genetics, when  trying to identify the gene responsible  for a certain characteristic. 
  Here, the  interesting scenario is when the number of candidate causes is large.

Classically, the problem is to find the variable $A_x$ such  that $B$ is a function of  $A_x$.    For simplicity, we  first assume  that all variables have the same $d$-dimensional alphabet, and that the function  from $A_x$ to $B$ is the identity, namely $b=  a_x$.   In this case, the cause can be identified without any error by probing the unknown process for $\lceil \log_d   k \rceil$ times. The identification is done by  a simple search algorithm, where one  divides the candidate variables in $d$ groups and initializes the input variables in the $i$-th group to the value $i$. In this way, $d-1$ groups can be ruled out, and one can iterate the search in the remaining group. Using a decision tree argument \cite{cormen2009introduction}, it is not hard to see that $\lceil \log_d   k \rceil$ is the minimum number of queries needed to identify the unknown process in the worst case scenario.  

In the quantum version of the problem, we find that the number of queries can be cut down by  approximately a half when the number of hypotheses is large.  The trick is to prepare $k$ maximally entangled states, and to apply the unknown process to the first system of each pair.  Repeating this procedure for $N$ times and  using results on port-based teleportation \cite{mozrzymas2018optimal}  we find  that the error  probability  is $p_{\rm err}   =  (k-1)/(d^{2N} + k  -  1)$. Hence,  $N    =\lceil  (1+\epsilon) (\log_d k) /2\rceil $ queries are sufficient to identify the cause with vanishing error probability in the large $k$ limit.  

In Appendix \ref{app6}  we consider the more complex scenario where  the functional dependence between the cause and effect is  unknown, and the only assumption is that the effect is a causal intermediary of the cause. Despite the lack of information about the functional dependence, we show that the correct cause can be still identified with high probability using   $N=\lceil (1+\epsilon)  (\log_d   k) /2\rceil$ calls to the unknown process.  
 The fast identification of the cause is achieved by dividing the $N$ copies of each input variable $A_i$ into  groups of $d$ copies, preparing each group in the singlet state, and entangling the configuration of the groupings with an external reference system.  Once again, the superposition of multiple equivalent setups leads to a quantum speedup over the best classical strategy.

\medskip 

\section*{Discussion}   
We showed that quantum mechanics enhances our ability to  detect direct cause-effect links.   This finding motivates the exploration of  more complex networks of  causal relations, including intermediate nodes  and global  causal dependences between groups of variables  \cite{spirtes2000causation,pearl2009,pearl2014}.   The development of new  techniques for testing causal relations  could find applications to future quantum communication networks, providing a fast way to test the presence of communication links. It could also assist the design of intelligent quantum machines, in a similar way as classical causal discovery algorithms  have been useful in classical artificial intelligence. 
 In view of such applications, it is  important to go beyond the noiseless scenario considered in this paper, and to address scenarios  where the cause-effect relationships are obfuscated by  noise.  The techniques developed in our work already provide some insights in this direction. Quite interestingly, one can  show that the quantum advantage persists in the presence of depolarizing noise, provided that the noise level  is not too high (see Appendix \ref{app7}). A complete study of the noisy scenario, however,  remains an open direction of future research.

Another direction of future investigation is foundational.  Given the advantage of quantum theory over classical theory,   it is tempting to  ask whether alternative physical theories could offer even larger advantages.   Interesting candidates  are theories that admit more powerful dense coding protocols than quantum theory \cite{massar2015hyperdense}, as one might expect super-quantum advantages to arise from the presence of stronger correlations with the reference system.  In a similar vein, one could explore physical theories with higher dimensional state spaces, such as Zyczkowski's quartic theory \cite{zyczkowski2008quartic}, or quantum theory on quaterionic Hilbert spaces \cite{barnum2015some}.     Indeed, it is intriguing to observe that the classical rate $R^{\rm C}   = \log d$    and the quantum rate $R^{\rm Q} =  2\log d$ are equal to the logarithms of the dimensions of the classical and quantum state spaces, respectively.  In general, one may expect a relationship between the dimension of the state space and the rate.  Should super-quantum advantages emerge, it would be natural to ask which physical principle determines the  causal identification power of quantum mechanics.  An intriguing possibility is that one of the hidden physical principles of quantum theory could be a principle on the ability to distinguish alternative causal hypotheses.

\medskip
\section*{Methods}  

{\em Properties of the fidelity divergence.} Here we derive two  properties of the fidelity divergence defined in Equation (\ref{fid}).  First, the fidelity divergence provides a lower bound on  the probability of misidentifying  a  channel with another: 
\begin{prop}
The probability of error in distinguishing between two  quantum channels $\map C_1$ and $\map C_2$ with $N$ queries is lower bounded as $p_{\rm err}^{\rm seq} (\map C_1,\map C_2 ;N) \ge \partial F( \map C_1,\map C_2)^N/4$. 
\end{prop}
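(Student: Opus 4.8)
The plan is to reduce the statement to a fact about the Uhlmann fidelity of the two final states produced by an arbitrary sequential strategy, and then to propagate a single-step contraction inequality through the $N$ queries. Concretely, I would fix any sequential strategy of the form in Figure \ref{fig:sequential}: an initial joint state $\omega$ of the first channel input together with an ancillary memory $M$ (which absorbs the reference wire and all working ancillas), an alternation of channel uses with arbitrary intervention channels $\map V_1,\dots,\map V_{N-1}$ acting on the current output and $M$, and a final measurement. Running the strategy with $\map C_1$ generates states $\alpha_1,\alpha_1',\alpha_2,\dots$ (before and after each query), and with $\map C_2$ a parallel sequence $\beta_1,\beta_1',\dots$; since preparation, interventions and measurement are chosen independently of the unknown hypothesis, the two branches start from the same state, $\alpha_1=\beta_1=\omega$, and are acted on by the same $\map V_k$.

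First I would establish the single-step inequality. Just before the $k$-th query the states $\alpha_k,\beta_k$ live on the channel input together with $M$, and applying the channel gives $\alpha_k'=(\map C_1\otimes\map I_M)(\alpha_k)$ and $\beta_k'=(\map C_2\otimes\map I_M)(\beta_k)$. The essential observation is that the fidelity divergence (\ref{fid}) is defined as an infimum over \emph{all} reference systems and \emph{all pairs} of input states, which is exactly the freedom needed: instantiating $R=M$, $\rho_1=\alpha_k$, $\rho_2=\beta_k$ yields $F(\alpha_k',\beta_k')\ge \partial F(\map C_1,\map C_2)\,F(\alpha_k,\beta_k)$. The interventions $\map V_k$ are completely positive trace-preserving maps applied identically in both branches, so monotonicity of the fidelity under quantum channels gives $F(\alpha_{k+1},\beta_{k+1})\ge F(\alpha_k',\beta_k')$. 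Chaining the two facts produces the recursion $F(\alpha_{k+1},\beta_{k+1})\ge \partial F\,F(\alpha_k,\beta_k)$, and since $F(\alpha_1,\beta_1)=F(\omega,\omega)=1$, induction yields $F(\alpha_N',\beta_N')\ge \partial F(\map C_1,\map C_2)^N$ for the two states fed into the final measurement.

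It then remains to convert this final-state fidelity into an error-probability bound. For two equiprobable states the minimum error probability over all measurements is $\tfrac12\bigl(1-\tfrac12\|\alpha_N'-\beta_N'\|_1\bigr)$ by Helstrom's theorem, and the Fuchs--van de Graaf inequality bounds the trace distance as $\tfrac12\|\alpha_N'-\beta_N'\|_1\le \sqrt{1-F(\alpha_N',\beta_N')}$. Combining with the elementary estimate $1-\sqrt{1-x}\ge x/2$ gives $p_{\rm err}^{\rm seq}\ge \tfrac14 F(\alpha_N',\beta_N')\ge \tfrac14\,\partial F(\map C_1,\map C_2)^N$. Since every step holds for an arbitrary strategy, the bound survives the infimum over all strategies.

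I expect the main conceptual obstacle to be justifying that a single inequality, $F[\text{output}]\ge \partial F\,F[\text{input}]$, applies at every stage of a fully general adaptive protocol. This is precisely where the two design choices in the definition (\ref{fid}) pay off: allowing the two input states to differ accounts for the fact that the pre-query states $\alpha_k,\beta_k$ have already diverged under distinct channels, while including the reference $R$ accounts for the memory $M$ that the intervention channels use to correlate successive queries. Note that one only needs the one-step inequality iterated $N$ times, not any multiplicativity of $\partial F$ itself. A secondary point worth checking is the fidelity convention: the clean exponent $\partial F^N$ (rather than $\partial F^{2N}$) requires $F$ to be the squared Uhlmann fidelity, consistent with the value $\partial F=1/d^2$ used to obtain (\ref{ratebound}).
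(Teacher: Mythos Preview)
Your proof is correct and follows essentially the same route as the paper's: a one-step contraction $F[\text{after query}]\ge \partial F\cdot F[\text{before query}]$ obtained from the definition of $\partial F$, combined with monotonicity of the fidelity under the common intervention channels, iterated $N$ times from $F=1$, and then converted to an error bound via Helstrom, Fuchs--van de Graaf, and $1-\sqrt{1-x}\ge x/2$. Your explicit discussion of why the two design choices in the definition of $\partial F$ (a reference system and two possibly different input states) are exactly what the adaptive setting requires is a useful elaboration of a point the paper leaves implicit.
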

The bound can be obtained in the following way.  Let $\rho^{(N)}_x$  be the output state of a circuit as in  Figure \ref{fig:sequential}.    Then, we have the bound   
\begin{align}
p^{\rm seq}_{\rm err}  (\map C_1,\map C_2  ;N)  & = \frac{1}{2} \left( 1-\frac{1}{2} \left\| \rho^{(N)}_1 - \rho^{(N)}_2 \right\|_1  \right) \nonumber \\
& \geq \frac{1}{2} \left( 1- \sqrt{1- F\Big(\rho^{(N)}_1 , \rho^{(N)}_2\Big)}\right) \nonumber \\
&\geq \frac{1}{2} \left[ 1- \sqrt{1- \partial F^{N} (\map C_1 , \map C_2)}\right] \nonumber \\
& \geq \frac{1}{2} \left[ 1- \left(1- \frac{ \partial  F^{N} (\map C_1 , \map C_2)}{2} \right)\right] \nonumber \\
&=  \frac {\partial F   (\map C_1, \map C_2)^{N}  }4 \, .  \label{boundfid}
\end{align}
The first line follows  from Helstrom's theorem \cite{helstrom1969quantum}, and the second line follows from the Fuchs-Van De Graaf Inequality \cite{fuchs1999cryptographic}. The third line  follows from the definition of the fidelity divergence  (\ref{fid}),  which implies that the fidelity between the states right after the $(t+1)$-th use of the unknown channel $\map C_x$,  denoted by $\rho_{x,  t+1}$,   satisfies the bound 
\begin{align}
\nonumber F\big(\rho_{1, t+1}  ,  \rho_{2 ,t+1}\big)    &\ge   \partial F  (  \map C_1, \map C_2)  F\big(\map U_{t+1} \rho_{1,t}  ,  \map U_{t+1}\rho_{2, t}\big)  \\
&\ge \partial F  (  \map C_1, \map C_2)  F\big( \rho_{1,t }  , \rho_{2,t} \big) \, ,
\end{align} 
where $\map U_{t+1}$ is the $(t+1)$-th operation in Figure \ref{fig:sequential}. 
The fourth line follows from the elementary inequality $\sqrt{1-t}  \le 1-t/2$.

Another important property is that the fidelity divergence can be evaluated on pure states. 
The proof is simple:   let $\rho_1$ and $\rho_2$ be two arbitrary states of the composite system $AR$, where $R$ is an arbitrary reference system.     By Uhlmann's theorem \cite{uhlmann1976transition}, there exists a third system $E$ and two purifications  $|\Psi_1\> ,|\Psi_2\> \in \spc H_A\otimes \spc H_R\otimes \spc H_E$,   such that   $F (\Psi_1,  \Psi_2)  =  F (\rho_1, \rho_2)$.      On the other hand, the monotonicity  of the fidelity under partial trace \cite{wilde2013quantum},  ensures that the  fidelity between the output states $(  \map C_1\otimes \map I_{RE})  (\Psi_1)$ and $(  \map C_2\otimes \map I_{RE})  (\Psi_2)$ cannot be larger than the fidelity between the states $(  \map C_1\otimes \map I_R)  (\rho_1)$ and $(  \map C_2\otimes \map I_R)  (\rho_2)$.   Hence, the minimization  on the right hand side of equation (\ref{fid}) can be restricted without loss of generality to pure states. 
 
 \medskip 
 
{\em Fidelity divergence for the identification of the causal intermediary.}  
 Let us see how the fidelity divergence  can be applied to our causal identification problem.  The two channels are of the form $\map C_{1,U}  (\rho)  =  U\rho  U^\dag  \otimes I/d  $  and $\map C_{2,V}   =  I/d \otimes V\rho V^\dag$, where $U$ and $V$ are two unknown unitary gates. 
   Since we are interested in the worst case scenario, every choice of $U$ and $V$ will give an upper bound to the discrimination rate.   In particular, we  pick $U=  V $.  
 
 \begin{prop}
 The fidelity divergence for the two channels $\map C_{1,U}$  and $\map C_{2,U} $  is $\partial F  (\map C_{1,U},\map C_{2,U})  =  1/d^2$. 
 \end{prop}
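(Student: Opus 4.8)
The plan is to evaluate the infimum in Equation (\ref{fid}) directly, exploiting the product structure of the two channels. First I would remove the unitary $U$ altogether: since $\map C_{1,U}$ and $\map C_{2,U}$ apply the \emph{same} gate $U$ to the input, replacing the pair $(\rho_1,\rho_2)$ by $\big((U^\dagger\otimes I_R)\rho_1(U\otimes I_R),\,(U^\dagger\otimes I_R)\rho_2(U\otimes I_R)\big)$ turns both output states into those of the $U=I$ channels, while leaving the denominator $F(\rho_1,\rho_2)$ unchanged, since fidelity is invariant under the common unitary $U^\dagger\otimes I_R$. This substitution is a bijection on input pairs, so the infimum is unaffected and I may set $U=I$, that is, $\map C_1(\rho)=\rho\otimes I/d$ and $\map C_2(\rho)=I/d\otimes\rho$. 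By the pure-state reduction established above, I may further take $\rho_x=|\Psi_x\>\<\Psi_x|$ on $\spc H_A\otimes\spc H_R$.

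Next I would compute the two fidelities in closed form. The outputs are $\sigma=|\Psi_1\>\<\Psi_1|_{BR}\otimes I_C/d$ and $\tau=I_B/d\otimes|\Psi_2\>\<\Psi_2|_{CR}$. The key point is that $\sqrt\sigma=|\Psi_1\>\<\Psi_1|_{BR}\otimes I_C/\sqrt d$ carries a rank-one projector on $BR$, so that $\sqrt\sigma\,\tau\,\sqrt\sigma=d^{-1}\,|\Psi_1\>\<\Psi_1|_{BR}\otimes M$ with $M:=\<\Psi_1|_{BR}\,\tau\,|\Psi_1\>_{BR}$ an operator on the single system $C$; taking the square root and the trace then collapses the nested expression to $F(\sigma,\tau)=d^{-1}(\Tr\sqrt M)^2$. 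Encoding each $|\Psi_x\>$ by its amplitude matrix $T_x$, so that $\<\Psi_1|\Psi_2\>=\Tr[T_1^\dagger T_2]$, one finds $M=d^{-1}T_2\,(T_1^\dagger T_1)\,T_2^\dagger=d^{-1}(T_2|T_1|)(T_2|T_1|)^\dagger$, whence $\Tr\sqrt M=d^{-1/2}\|T_2|T_1|\|_1$ and $F(\sigma,\tau)=d^{-2}\|T_2|T_1|\|_1^2$. Using $F(\rho,\sigma)=\|\sqrt\rho\sqrt\sigma\|_1^2$, the denominator is $F(\rho_1,\rho_2)=|\Tr[T_1^\dagger T_2]|^2$, so that
\begin{align}
\frac{F(\sigma,\tau)}{F(\rho_1,\rho_2)}=\frac1{d^2}\;\frac{\big\|T_2\,|T_1|\big\|_1^2}{\big|\Tr[T_1^\dagger T_2]\big|^2}\, .\nonumber
\end{align}

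It then remains to minimise the residual ratio over all $T_1,T_2$, and the claim is that it never falls below $1$. Writing the polar decomposition $T_1=W|T_1|$ with $W$ a partial isometry ($\|W\|_\infty\le1$), cyclicity of the trace gives $\Tr[T_1^\dagger T_2]=\Tr\big[W^\dagger\,(T_2|T_1|)\big]$, and the H\"older inequality $|\Tr[XY]|\le\|X\|_\infty\|Y\|_1$ yields $|\Tr[T_1^\dagger T_2]|\le\|T_2|T_1|\|_1$; hence the ratio is $\ge1/d^2$ for every pair of inputs and every reference system, so $\partial F(\map C_{1,U},\map C_{2,U})\ge1/d^2$. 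For the matching upper bound I would take $|\Psi_1\>=|\Psi_2\>$ maximally entangled, so that $T_1=T_2=I/\sqrt d$; then $\|T_2|T_1|\|_1=1=|\Tr[T_1^\dagger T_2]|$ and the ratio equals exactly $1/d^2$. Combining the two bounds gives $\partial F=1/d^2$.

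The main obstacle is the middle step: extracting a closed form for the fidelity between the two \emph{mixed} outputs $\sigma$ and $\tau$. A generic fidelity of full-rank states has no usable expression, and data processing bounds it only from above, which is the wrong direction for a lower bound on $\partial F$. The rescue is the rank-one factor in $\sqrt\sigma$, which reduces the nested square root to that of a single-system operator and produces the trace-norm expression above; after that, keeping track of the amplitude-matrix bookkeeping and of the squared-fidelity normalisation is all that is needed before the optimisation reduces to the one-line H\"older estimate.
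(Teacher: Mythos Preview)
Your proof is correct and follows essentially the same route as the paper's. The paper expands $|\Psi_x\> = \sum_n |\phi_{xn}\>\otimes|n\>$, defines $C = \sum_n |\phi_{1n}\>\<\phi_{2n}|$ (which in your notation is $T_1 T_2^\dagger$), and obtains $F(\rho_1',\rho_2') = (\Tr|C|)^2/d^2$ and $F(\rho_1,\rho_2)=|\Tr C|^2$; your identity $\|T_2|T_1|\|_1 = \|T_1 T_2^\dagger\|_1$ (since $(T_2|T_1|)(T_2|T_1|)^\dagger = (T_1T_2^\dagger)^\dagger(T_1T_2^\dagger)$) shows the two expressions coincide, and your H\"older step $|\Tr[W^\dagger(T_2|T_1|)]|\le\|T_2|T_1|\|_1$ is exactly the paper's $|\Tr C|\le\Tr|C|$ after the polar substitution.
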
  


By the unitary invariance of the fidelity, $\partial F  (\map C_{1,U},\map C_{2,U}) $ is independent of $U$. Without loss of generality, let us pick $U=I$.  
 For a generic reference system $R$ and  two generic pure  states $|\Psi_1\>, |\Psi_2\>  \in  \spc H_A\otimes \spc H_R$, the two output states are 
\begin{align}
\nonumber \rho_1'   & =  (  \map C_{1,I}\otimes \map I_R)  (\Psi_1)   =  (\Psi_1)_{BR} \otimes \frac {I_C}{d}  \\
\rho_2'   & =  (  \map C_{2,I}\otimes \map I_R)  (\Psi_2)   =    \frac {I_B}{d}  \otimes (\Psi_1)_{CR}    \, ,
\end{align} 
up to reordering of the Hilbert spaces.  
 The fidelity can be computed with the relation 
\begin{align}\label{fidsqrt}
F (\rho_1', \rho_2')   &  =  \frac {  \left|  \Tr \left [ \sqrt{  (\Psi_1)_{BR} \, (\Psi_2)_{CR}  \, (\Psi_1)_{BR} } \right]  \right|^2}{d^2} \,,
\end{align}
where we  omitted the identity operators for the sake of brevity.   Let us expand the input states as 
\begin{align}\label{decompo}
|\Psi_x\>    =  \sum_{n}  \,  |\phi_{xn}\> \otimes |n\> \, ,  \qquad x\in \{0,1\}
\end{align} 
where $\{|n\>\}$ is an orthonormal basis for the reference system,  and $\{|\psi_{xn}\>\}$ is a set of unnormalized vectors.  
Inserting Equation (\ref{decompo}) into Equation   (\ref{fidsqrt}), we obtain the expression
\begin{align}
F (\rho_1', \rho_2')   &  =  \frac {  \left|  \Tr \left [ \sqrt{ C^\dag C } \right]  \right|^2}{d^2}   =  \frac {  |\,  \Tr   |C|   \,|^2}{d^2} \, ,
\end{align}
with $C =  \sum_n  \,  |\phi_{1n}\>\<\phi_{2n}|$.  On the other hand, the fidelity between the input states is 
\begin{align}
F(\rho_1,  \rho_2)   =  |\<\Psi_1|\Psi_2\>|^2  =  |\Tr  [  C] |^2 \,.
\end{align}  
  Hence, the fidelity divergence satisfies the bound 
  \begin{align}
 \nonumber  \partial F (\map C_1, \map C_2)      & =  \inf_R \inf_{\rho_1,\rho_2}  \frac {F(\rho_1',\rho_2')}{F(\rho_1,\rho_2)}  \\
 \nonumber   &  = \frac{1}{d^2} \inf_C   \left|   \frac{\Tr  |C|   }{\Tr[C]} \right|^2   \\
    &  \ge \frac 1{d^2} \, ,
  \end{align}
  having used the inequality $|\Tr [C]|  \le \Tr |C| $, valid for every operator $C$.   The inequality holds with the equality sign whenever $C$ is positive. This condition is satisfied, e.g. when the input states $|\Psi_1 \>$ and $| \Psi_2\>$ are identical.   \\

{\em Quantum strategies with indefinite causal order.} In principle,   quantum mechanics  is compatible with  situations where multiple processes are combined in  indefinite   order \cite{chiribella2013quantum,oreshkov2012quantum}.   This suggests that an experimenter could devise new ways to probe  quantum channels, allowing  the relative order among different uses of the same channel to be indefinite.  We call such strategies {\em indefinite testers}. 

Consider the problem of identifying a channel $\map C_x$ from  $N$ uses.  The input resource is the channel  $\map C_x^{\otimes N}$,  representing $N$ identical black boxes that can be arranged in any desired order.  Besides  the product of $N$ independent channels,  the most general class of channels with this property  is the class of  no-signalling channels with $N$ pairs of input/output systems.

  Mathematically, an indefinite tester is a linear map from the set of no-signalling channels to the set of probability distributions over a given set of outcomes.  Equivalently, the tester can be  described by a set of operators $\{T_x\}$, where each operator $T_x$ acts on the Hilbert space $ \bigotimes_i  \, (\spc H^{\rm in}_i  \otimes \spc H^{\rm out}_i)$, where $\spc  H_{i}^{\rm in}$ and  $\spc  H_{i}^{\rm out}$ are the  Hilbert spaces of the input and output system in the $i$-th pair, respectively.       When the test is performed on a no-signalling channel $\map C$, the probability of the outcome $x$ is given by the generalized Born rule $p_x  =  \Tr[ T_x  \,  C ]$, where $C$ is the Choi operator of the channel $\map C$ \cite{choi-1975}.   The normalization of the probabilities 
 \begin{align}\label{testernorm}
 \sum_x \, \Tr[ T_x\, C]   =  1  
 \end{align}
is required to hold for every no-signalling channel $\map C$. 

Consider the problem of distinguishing between a set of no-signalling channels $\{\map C_x\}$ using an indefinite tester.   For every probability distribution $\{\pi_x\}$, the  worst-case probability of error satisfies the bound 
\begin{align}\label{easy}
p^{\rm ind}_{\rm err}  \ge    1  -  \sum_x  \pi_x  \,  \Tr[ T_x  C_x ] \, .
\end{align}
Now, suppose that there exists a constant $\lambda$ and a no-signalling channel $\map C$ such that 
\begin{align}\label{semidefinite}
\lambda \,  C \ge   \pi_x \,  C_x 
\end{align} 
for every $x$. Substituting Equation (\ref{semidefinite})   into Equation (\ref{easy}) 
one obtains the bound 
\begin{align}\label{dualbound}
p^{\rm ind}_{\rm err}  \ge    1  -  \lambda \sum_x    \,  \Tr[ T_x  C ]   =  1-\lambda \, ,
\end{align}
having used the normalization condition (\ref{testernorm}).   The bound (\ref{dualbound}) can be seen as a generalization of the classical Yuen-Kennedy-Lax bound for quantum state discrimination \cite{yuen1975optimum}.   

 We now apply the bound (\ref{dualbound})  to the task of distinguishing between the two channels $\map C_{1, I} =    (\map U \otimes I/d)^{\otimes N}$ and $\map C_{2,I}  =  (  I/d \otimes \map U)^{\otimes N}$. 
   To this purpose, we consider the universal cloning  channel \cite{werner1998optimal}
\begin{align} 
\map C_{\pm}    :  =   \frac {  2}{d^N + 1}   \,  P_{+}  (   \rho \otimes  I^{\otimes N}  )   P_{+}  \, ,
\end{align}
and the universal  {\rm NOT} channel \cite{buvzek1999optimal} 
\begin{align} 
\map C_{\pm}    :  =   \frac {  2}{d^N - 1}   \,  P_{-}  (   \rho \otimes  I^{\otimes N}  )   P_{-}  \, ,
\end{align}
with $P_{\pm}  =  (I  \pm  {\rm SWAP})/2$, and $ {\rm SWAP}$ being the unitary operator that swaps between the even and odd output spaces. 
It is easy to verify that both channels are no-signalling.   Moreover,  we find that the convex combination $\map C  =  p_+  \map C_+  + p_-  \map C_-$ with $p_{\pm} =   \sqrt{   \frac{d^N \pm 1}{2 d^N} }  /  \left(    \sqrt{ \frac{d^N+1}{2 d^N} }  + \sqrt{ \frac{d^N-1}{2 d^N} } \right)
$ satisfies the condition (\ref{semidefinite}) with $\lambda =  \frac 12 \left(\sqrt{ \frac{d^N+1}{2 d^N} }  + \sqrt{ \frac{d^N-1}{2 d^N} } \right)^2$ (see Appendix \ref{app8} for technical details). Hence, the bound (\ref{dualbound})  becomes 
\begin{align}
p^{\rm ind}_{\rm err}  \ge     1- \lambda =  \frac{  1  -  \sqrt{  1    -   \frac 1 {d^{2N}}}}2  \ge  \frac 1{4 d^{2N}} \, .
\end{align}
The above bound implies that the discrimination rate of quantum strategies with indefinite order cannot exceed   $2 \log d$.

\medskip

\section*{Acknowledgments}      We thank the referees of this paper  for valuable comments that led to improvements in the paper. We acknowledge  Robert Spekkens, David Schmidt, Lucien Hardy,  Sergii Strelchuk, and Thomas Gonda for stimulating discussions.   
This work is supported by the National Natural
Science Foundation of China through grant 11675136,
the Croucher Foundation, John Templeton Foundation,
Project 60609, Quantum Causal Structures, the Canadian Institute for Advanced Research (CIFAR), the Hong
Research Grant Council through grants 17300317 and  17300918, and the
Foundational Questions Institute through grant FQXi-RFP3-1325.    This  publication  was  made  possible  through  the
support of a grant from the John Templeton Foundation.
The  opinions  expressed  in  this  publication  are  those  of
the  authors  and  do  not  necessarily  reflect  the  views  of
the John Templeton Foundation. This research was supported in part by Perimeter Institute for Theoretical 
Physics. Research at Perimeter Institute is supported by the Government of 
Canada through the Department of Innovation, Science and Economic Development 
Canada and by the Province of Ontario through the Ministry of Research, 
Innovation and Science.


\appendix

\begin{widetext}
\section{Complementarity relation between tests of the causal structure and tests of the functional dependency between cause and effect.}\label{app1}

Here we provide the proof of the complementarity relation  (7) in the main text. 

\subsection{Bound on the error probability for parallel strategies with no reference system}
\setcounter{equation}{1}
The two causal hypotheses are  that the  quantum channel from $A$ to the composite system $B\otimes C$ is either of the form  
$\map C_{1, U_1}   =    \map U_{1,B}  \otimes {I_C}/d$,  
 or of the form
$\map C_{2, U_2}  =   {I_B}/d \otimes \map U_{2,C}$, 
with  $\map U_1(\cdot)  : =  U_1  \cdot U_1^\dag$, $\map U_2(\cdot)  : =  U_2  \cdot U_2^\dag$.  Here,  $U_1$ and $U_2$ are  unitary operations, unknown to the experimenter but fixed throughout the  $N$ rounds of the experiment.   

Here we consider parallel strategies, where the channel $\map C_{x,  U_x}^{ \otimes N }$  (with $x= 1 $ or $x=2$) is applied in parallel on a multipartite input state, as in the following diagram
\begin{figure}[H]
        	\hspace*{175pt}
        	\includegraphics[width=0.65\textwidth]{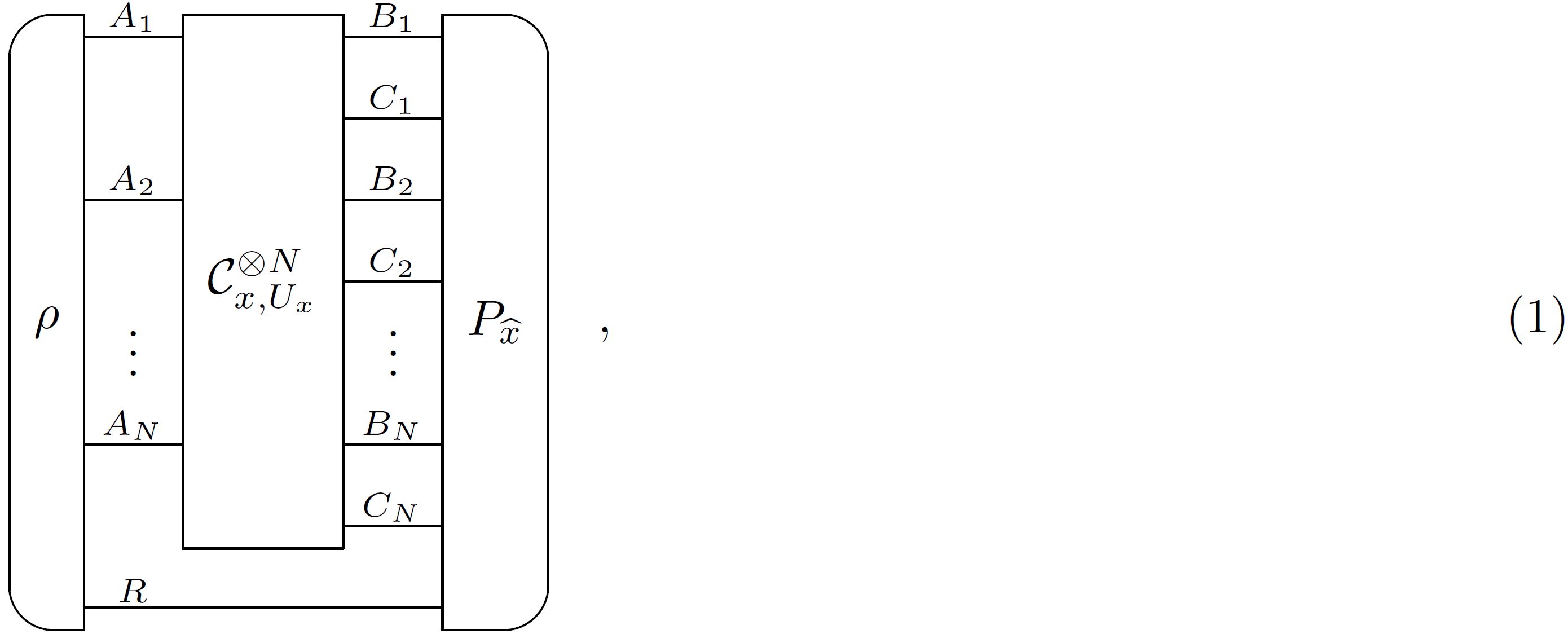}
                	\label{stine}
        \end{figure}
where $R$ is a reference system of fixed dimension. 

The probability to obtain the outcome $\widehat x$ when the channel is $\map C_{x, U_x}$ is equal to 
\begin{align}
p(\widehat x |  x)    =  \Tr \left[  P_{\widehat x} \,  \left(  \map C_{x,  U_x}^{\otimes N} \otimes \map I_R  \right)  (\rho)\right]   \, .
\end{align}
  
For fixed gates $U_1$ and $U_2$, the probability of error  is   
\begin{align}
p_{{\rm err} }  (U_1, U_2)   =    \frac 12    \,  \Tr \left[   P_{1} \,  \left(  \map C_{2,  U_2}^{\otimes N} \otimes \map I_R  \right)  (\rho)  \right]  +  \frac 12\,  \Tr \left[   P_{2} \,  \left(  \map C_{1  ,  U_1}^{\otimes N} \otimes \map I_R  \right)  (\rho) \right]  \, . 
\end{align}
Since $U_1$ and $U_2$ are unknown, we consider the worst-case error probability, namely  
\begin{align}
p_{{\rm err }}^{\rm wc} :  =  \max_{U_1,U_2 \in \grp U}  \, p_{{\rm err} }  (U_1, U_2)    \, ,  
\end{align}
where $\set U$ is a set  of unitary operators.  For example, $\grp U$ can be 
\begin{enumerate}
\item  the group of permutation operators of the form $   U_\pi  =  \sum_{i=1}^{d}  |\pi(i)\>\<i|$, where $\pi$ is an element of the permutation group $S_d$
\item  the group of all unitary operators in dimension $d$. 
\end{enumerate}
In general, we assume that the set $\set U$ is a generalised $N$-design \cite{chiribella2014identifying}, meaning that   (i) $\set U$ is a subset of a group representation $\{  U_g\}_{g\in \grp G}$ for some group $\grp G$, and  (ii) for every operator $A$,  one has the identity
\begin{align}
\frac 1 {|\set U|}\sum_{U \in \set U}  \,  \,    U^{\otimes N}    A     U^{\otimes N \dag}    =  \int_{\grp G}  \d g  \,  U_g^{\otimes N}    A     U_g^{\otimes N \dag}  \, ,      
\end{align}
where $\d g$ denotes the normalized invariant measure over $\grp G$  (for  finite groups,  it is understood that the integral $\int_{\grp G} \d g$ has to be replaced by the sum $\frac 1{|\set G|}  \sum_g$).   

The worst-case error probability is lower bounded by the average error probability
\begin{align}
p_{{\rm err }}^{\rm ave} :  =   \frac 1{ |\set U|^2}   \sum_{U_1,  U_2}\,    p_{{\rm err} }  (U_1, U_2)    \, .  
\end{align}
  By definition, the average  error probability is equal to the error probability in distinguishing between the {\em average channels} 
\begin{align}\label{C1C2}
\map C^{(N)}_1  : =    \frac 1{ |\set U|}\sum_{U_1}\,    \map C_{1,U_1}^{\otimes N} \qquad {\rm and}  \qquad  \map C^{(N)}_2  : =   \frac 1{ |\set U|} \,     \sum_{U_2}    \,  \map C_{2,U_2}^{\otimes N} \,  .
\end{align}

Now, suppose that the experimenter prepares an  $N$-particle state $|\Psi\>  \in \spc H^{\otimes N}$, without using a reference system.  The average error probability has the tight  lower bound 
\begin{align}\label{helstrom}
p^{\rm ave}_{\rm err}      &\ge        \frac {1  -\frac 12  \left \|  \map C_1^{(N)}  (\Psi)  - \map C_2^{(N)}  (\Psi)     \right\|_1 }2 
\end{align}  
achieved by Helstrom's minimum error measurement \cite{helstrom1969quantum}. The distance between the average  output states can be expressed  as 
\begin{align}
\nonumber 
\left \|  \map C_1^{(N)}  (\Psi)  - \map C_2^{(N)}  (\Psi)     \right\|_1    & =  \left \|  \<\Psi\>  \otimes \left(\frac I d\right)^{\otimes N}   - \left(\frac I d\right)^{\otimes N}    \otimes \<\Psi\>      \right\|_1      \qquad \qquad     &\<\Psi\>    :=  \frac 1{ |\set U|}  \,  \sum_{U \in \set U}  \,   U^{\otimes N}   \Psi   U^{\dag \, \otimes N} \\
\nonumber &  =  \frac 1 {d^N}  \,  \sum_{i,j=0}^{d^N-1}  \, |p_i-  p_j|   \qquad     \,  &\<\Psi\>  =  \sum_{i  =0}^{d^N-1}  p_i  \,  |i\>\<i| \\
\nonumber &  =  \frac 1 {d^N}  \,  \sum_{k=1}^{d^N-1}    \left\|  \<\Psi\>   -   S^k  \<\Psi\>  S^{k  \dag  }   \right\|_1  & S :=  \sum_{i=0}^{d^N-1}  \,   |(i+1) \,{\rm mod} \, d^N \>\<   i | \\
\label{mnbv} &  =   \left(  1- \frac 1 {d^N} \right)  \,    \left\|     \<\Psi\>  \otimes \omega   -  \Sigma    \right\|_1  \, ,
\end{align}      
with 
\begin{align}
\omega:  = \frac{  \sum_{k=1}^{d^N  -1}  \,  |k\>\<k|}{d^N-1} \qquad {\rm and}  \qquad  \Sigma  := \frac 1{d^N-1}    \sum_{k=1}^{d^N-1}     S^k  \<\Psi\>  S^{k  \dag  }     \otimes |k\>\<k|   \, .
\end{align}
Now, the pure states 
\begin{align}
\nonumber |\Gamma\>  &:  =  \sum_{i=0}^{d^N-1}  \sum_{k=1}^{d^N-1}  \, \sqrt{\frac{p_i}{d^N-1}}  \,  |i\>\otimes |i\>  \otimes |k\> \otimes |k\>   \\
|\Delta\>   &:  =   \sum_{j=0}^{d^N-1}  \sum_{l=1}^{d^N-1}  \, \sqrt{\frac{p_i}{d^N-1}}  \,  S^k  |j\>\otimes S^k |j\>  \otimes |l\> \otimes |l\>  
\end{align}
are purifications of $\<\Psi\>\otimes \omega$ and $\Sigma$, respectively.  Hence, the monotonicity of the trace distance yields the bound  
 \begin{align}
\nonumber \left \|  \map C_1^{(N)}  (\Psi)  - \map C_2^{(N)}  (\Psi)     \right\|_1      &\le   \left(  1- \frac 1 {d^N} \right)  \,    \left\|    \Gamma   -  \Delta   \right\|_1 \\
\nonumber &  =  \left(  1- \frac 1 {d^N} \right)  \,  2\,  \sqrt{ 1  -  | \<\Gamma  |\Delta  \>|^2}\\
\label{nextnext}  & \le   \left(  1- \frac 1 {d^N} \right)  \,  2\,     \left( 1  -  \frac {| \<\Gamma  |\Delta  \>|^2}2\right)       
  \, .  
\end{align}    
Inserting this bound into Equation  (\ref{helstrom}), we then obtain  
\begin{align}\label{boundbound}
p_{\rm err}^{\rm ave}  \ge  \frac 1 {2d^N}  \, \left[   1  +   \left( d^N  -1   \right) \frac {  |\<\Gamma|\Delta\>|^2}2 \right]
\end{align}

Now, note that we have 
\begin{align}
\nonumber \<\Gamma|\Delta \>   & =  \frac 1 {d^N-1} \, \sum_{i,j=0}^{d^N-1}     \, \sqrt{p_ip_j}\,    ~  \<i| \left(  \sum_{k=1}^{d^N-1}   S^k  |j\>\<j|  S^k \right)  \,  |i\>  \\
\nonumber &   =  \frac 1 {d^N-1} \, \sum_{i,j=0}^{d^N-1}     \, \sqrt{p_ip_j}\,    ~  \<i| \Big(    I-  |j\>\<j|   \Big)  \,  |i\>  \\  
\nonumber &   =  \frac 1 {d^N-1} \, \sum_{i,j=0}^{d^N-1}     \, \sqrt{p_ip_j}\,    ~   (1  - \delta_{ij}) \\
&  =\frac {  \left(\Tr\left[\sqrt{\<\Psi\>}\right]\right)^2 -1}{d^N-1} \end{align}
Hence, the Equation (\ref{boundbound}) yields the bound 
\begin{align}\label{tradeoff1}
p_{\rm err}^{\rm wc}    \ge p_{\rm err}^{\rm ave}  \ge   \frac 1 {2d^N}  \left\{   1  + \frac {\left[  \left(\Tr\left[\sqrt{\<\Psi\>}\right]\right)^2 -1 \right]^2} {  2   (d^N-1)}  \,    \right\} \, .
\end{align}      

It is clear that the minimum of the right-hand-side is obtained when the state $\<\Psi\>$ is pure, in which case, the bound becomes $p_{\rm err}^{\rm wc}  \ge 1/(2d^N)$.

\subsection{Bound on the success probability in the identification of a unitary gate}  
More generally, the bound (\ref{tradeoff1}) can be interpreted as a complementarity relation between the estimation of the causal structure and the estimation of the functional dependence between cause and effect.    

\begin{lemma}
Consider the task of guessing the gate $U  \in \set U$ from the state $|\Psi_U  \>  := U^{\otimes N} |\Psi\>$.   If  $\set U$ is a generalised $N$-design for some group representation  $\{  U \}_{U \in \grp G}$, then the probability of a correct guess satisfies the bound 
\begin{align}\label{symmetry}
p_{\rm guess}^{\set U}  \le    \frac{  \left(  \Tr  \left[  \sqrt{\<\Psi\>}\right]\right)^2}{ |\set  U|}   
\end{align} 
 The bound is attained by the square-root measurement \cite{hausladen1994pretty}, with operators $P_U  =   \< \Psi\>^{-\frac 12}  \,  \Psi_U    \< \Psi\>^{-\frac 12}/|\set U| $.    
\end{lemma}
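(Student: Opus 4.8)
The plan is to treat this as a covariant state-discrimination problem with uniform prior over $\set U$, and to prove simultaneously that the square-root measurement achieves the stated value and that no measurement can beat it. With $|\Psi_U\> = U^{\otimes N}|\Psi\>$, the quantity to bound is $p_{\rm guess}^{\set U} = \max_{\{E_U\}} \frac 1{|\set U|}\sum_{U\in\set U} \<\Psi_U|E_U|\Psi_U\>$, the maximum running over POVMs. The workhorse of the whole argument is that the generalised $N$-design property turns the ensemble average into a group average, $\<\Psi\> = \int_{\grp G}\d g\, U_g^{\otimes N}\Psi U_g^{\dag\otimes N}$, which is invariant under conjugation by every $U_h^{\otimes N}$. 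Hence $\<\Psi\>$, and therefore $\<\Psi\>^{\pm 1/2}$ on its support, commutes with $U^{\otimes N}$ for every $U\in\set U$.

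First I would establish achievability. Define the square-root measurement $P_U = \<\Psi\>^{-1/2}\Psi_U\<\Psi\>^{-1/2}/|\set U|$, with $\<\Psi\>^{-1/2}$ the pseudoinverse on the support of $\<\Psi\>$; completeness on that support follows from $\sum_U\Psi_U = |\set U|\,\<\Psi\>$. The success probability is $\frac 1{|\set U|^2}\sum_U \big(\<\Psi_U|\<\Psi\>^{-1/2}|\Psi_U\>\big)^2$, and the commutation relation immediately collapses each bracket to the $U$-independent constant $c := \<\Psi|\<\Psi\>^{-1/2}|\Psi\> = \Tr[\<\Psi\>^{-1/2}\Psi]$, giving $p_{\rm SRM} = c^2/|\set U|$. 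It then remains to identify $c$ with $\Tr[\sqrt{\<\Psi\>}]$: writing $\Tr[\sqrt{\<\Psi\>}] = \Tr[\<\Psi\>^{-1/2}\<\Psi\>] = \frac 1{|\set U|}\sum_V\Tr[\<\Psi\>^{-1/2}\Psi_V]$ and applying the same commutation to each term shows every summand equals $c$, so $\Tr[\sqrt{\<\Psi\>}] = c$, as claimed.

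Finally I would prove optimality through the Holevo--Yuen--Kennedy--Lax conditions \cite{yuen1975optimum}: a POVM maximizing $\sum_U \Tr[\eta_U E_U]$ with $\eta_U = \Psi_U/|\set U|$ is optimal iff the Lagrange operator $\Upsilon = \sum_U \eta_U P_U$ is Hermitian and $\Upsilon \ge \eta_U$ for all $U$. Using the collapse identity $\Psi_U\<\Psi\>^{-1/2}\Psi_U = c\,\Psi_U$ one computes $\Upsilon = \frac c{|\set U|}\<\Psi\>^{1/2}$, which is manifestly Hermitian. The inequality $\Upsilon\ge\eta_U$ reads $c\,\<\Psi\>^{1/2}\ge |\Psi_U\>\<\Psi_U|$; since $\<\Psi\>^{1/2}$ commutes with every $U^{\otimes N}$, all of these follow by conjugation from the single inequality $c\,\<\Psi\>^{1/2}\ge |\Psi\>\<\Psi|$, which in turn reduces, after conjugating by $\<\Psi\>^{-1/4}$, to $c\,I\ge|w\>\<w|$ with $|w\>=\<\Psi\>^{-1/4}|\Psi\>$ — and this holds because the unique nonzero eigenvalue of $|w\>\<w|$ is $\<w|w\>=c$. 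This verifies optimality, so the maximum equals the value achieved by the square-root measurement. I expect the main obstacle to be the careful bookkeeping of the pseudoinverse on the support of $\<\Psi\>$ — ensuring $|\Psi\>$ lies in that support and that completeness and the operator inequality are only required there — together with the reduction of the Holevo condition to the single eigenvalue estimate $\<w|w\>=c$.
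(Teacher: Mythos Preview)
Your proof is correct and is essentially the same as the paper's, with only a cosmetic difference in packaging. The paper first invokes the Yuen--Kennedy--Lax dual bound $p_{\rm guess}^{\set U}\le\Tr[\Gamma]$ for any invariant $\Gamma\ge\Psi/|\set U|$, exhibits the specific choice $\Gamma=c\sqrt{\<\Psi\>}$ with $c=\Tr[\sqrt{\<\Psi\>}]/|\set U|$, reduces the feasibility inequality to $cI\ge\<\Psi\>^{-1/4}\Psi\<\Psi\>^{-1/4}$ (your $cI\ge|w\>\<w|$), and then separately verifies that the square-root measurement attains the bound; you instead verify the Holevo optimality conditions for the SRM directly, producing the very same Lagrange operator $\Upsilon=c\<\Psi\>^{1/2}/|\set U|$ and the same rank-one reduction. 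Both routes rest on the same ingredients---group invariance of $\<\Psi\>$ giving $[\<\Psi\>^{\pm1/2},U^{\otimes N}]=0$, and the collapse $\Psi_U\<\Psi\>^{-1/2}\Psi_U=c\,\Psi_U$---so the mathematical content is identical.
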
 
\begin{proof} Equation (\ref{symmetry}) follows from  the Yuen-Kennedy-Lax  bound \cite{yuen1975optimum}   $p_{\rm guess}^{\set U}  \le  \Tr[\Lambda]$ where $\Lambda$ is a positive operator satisfying the inequalities $ \Lambda  \ge   \frac 1 {|\set U|} \,   U^{\otimes N}   \Psi U^{\otimes N \dag}$ for all $U \in\set U$.   
Equivalently, one has $U^{\otimes N\dag}  \Lambda  U^{\otimes N}  \le \Psi/|\set U|$ for all $U$, which implies the condition 
\begin{align}
\quad \<\Lambda\> \ge   \frac {\Psi} {|\set U|} \,  , \qquad \<\Lambda\> :  = \frac1 {|\set U|} \sum_U   \,    U^{\otimes N  \dag}  \Lambda  U^{\otimes N}  \, . 
 \end{align} 
Then, the Yuen-Kennedy-Lax  bound  implies the inequality  
\begin{align}
p_{\rm guess}^{\set U}  \le  \Tr[\< \Lambda \>]  \, . 
\end{align}
Since the unitaries $\set U$ form a generalised $N$-design, the operator $\<\Lambda\>$ is invariant under the action of the group representation $\{  U^{\otimes N} \}_{U  \in  \grp G}$.  Moreover, every invariant operator $\Gamma$ can be written as $\<\Lambda\>$ for some suitable $\Lambda$  (in fact, it suffices to take $\Lambda  = \Gamma$).   
Hence, one has the bound 
\begin{align}\label{design}
p_{\rm guess}^{\set U}   \le  \Tr [\Gamma]  \, , \qquad    \forall \Gamma   :  \,  \<\Gamma\>  =  \Gamma  \, ,   \quad          \Gamma    \ge  \frac{\Psi}{|\set U|} \, .    
\end{align} 

In particular, one can take  $\Gamma  =  c\, \sqrt{  \<\Psi\>}$ for some suitable constant $c$. With this choice, the condition $\Gamma \ge \Psi/|\set U|$ is equivalent to 
  \begin{align}
  c  \, I \ge   \frac{ \<\Psi\>^{-\frac 14 }  \Psi \,  \<\Psi\>^{-\frac 14} }{|\set U|}  \, , 
  \end{align} 
  which in turn is equivalent to
  \begin{align}
  c  &  \ge   \frac{\Tr  [\Psi   \< \Psi\>^{-\frac 12}  ] }{|\set U|}  =  \frac{\Tr  [ \<\Psi \>  \< \Psi\>^{-\frac 12} ]}{|\set U|}    = \frac{ \Tr [  \< \Psi \>^{\frac 12}]}{|\set U|} \, .
  \end{align}
  Then, the bound (\ref{design}) becomes $p_{\rm guess}^{\set U}   \le  \Tr [ \sqrt{\< \Psi\>}]^2/|\set U|  $. 
  The bound is attained by the square-root measurement $P_U  =   \< \Psi\>^{-\frac 12}  \,  \Psi_U    \< \Psi\>^{-\frac 12}/|\set U| $, which yields 
  \begin{align}
\nonumber  \Tr  [  P_U  \,  \Psi_U]    & =\frac 1 {|\set U|} \,  \Tr  [  \< \Psi\>^{-\frac 12}  \,  \Psi_U    \< \Psi\>^{-\frac 12}   \Psi_U ]   \\
\nonumber & =  \frac 1 {|\set U|} \,    \Tr  [  \< \Psi\>^{-\frac 12}  \,  \Psi     \< \Psi\>^{-\frac 12}   \Psi  ] \\
\nonumber & =  \frac 1 {|\set U|} \,      \Big|\< \Psi  |     \< \Psi\>^{-\frac 12}  |\Psi\>\Big|^2  \\   
\nonumber & =  \frac 1 {|\set U|} \,      \Big|  \Tr [  \Psi  \< \Psi\>^{-\frac 12}  ]  \Big|^2 \\  
\nonumber 
 & =  \frac 1 {|\set U|} \,      \Big|  \Tr [ \< \Psi \> \< \Psi\>^{-\frac 12}  ]  \Big|^2   \\
 & =  \frac 1 {|\set U|} \,      \Big|  \Tr [ \< \Psi\>^{\frac 12}  ]  \Big|^2   \, ,
  \end{align}
  for every $U$.  
 \end{proof}

Combining the above lemma with Equation  (\ref{tradeoff1})  we obtain the relation 
 \begin{align}\label{tradeoff}
p_{\rm err}^{\rm wc}    \ge  \frac 1{2d^N}    \left\{ 1    +     \frac 1{2(d^N-1)}  \left(  \frac{  p^{\set U}_{\rm guess}  -  \frac 1{|\set U|}  }{  \frac 1  {|\set U|}}   \right)^2     \right\} \ .
\end{align}

\section{Optimal universal  strategy} \label{app2}

Here  we derive the optimal strategy for identifying the causal intermediary when the cause-effect relationship is described by an arbitrary unitary gate.

\subsection{Reduction to the minimisation of the  average probability} 

The problem is to find the strategy    that minimises the worst-case error probability.  Thanks to the symmetry of the problem, the minimisation of the  worst-case error probability  can be reduced to the minimisation of the average  error  probability: 

\begin{lemma}\label{lem:averagepar}
For every fixed reference system $R$ and for every fixed $N$, minimum worst-case error probability in the discrimination of the channels $
\map C_{1,U_1}$  and $\map C_{2,  U_2}$ with $N$ uses is equal to the average error probability
\begin{align}
p_{{\rm err }}^{\rm ave} :  =  \int_{U_1 \in \grp{SU} (d)}  \d U_1 \,  \int_{U_2\in  \grp{SU} (d)  }\d U_2 \,   \, p_{{\rm err} }  (U_1, U_2)    \, ,  
\end{align}
where $\d U$ is the normalised invariant measure. In turn, the average error probability is equal to the minimum error probability in the discrimination of the channels 
\begin{align}\label{C1C2}
\map C^{(N)}_1  : =    \int \d U _1   ~  \map C_{1,U_1}^{\otimes N} \qquad {\rm and}  \qquad  \map C^{(N)}_2  : =    \int \d U_2    ~  \map C_{2,U_2}^{\otimes N} \,  .
\end{align}  
 There exists a state $\rho$ and a measurement $\{  P_1,  P_2\}$ that are optimal for both problems. 
\end{lemma}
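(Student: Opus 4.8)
The plan is to prove the two asserted equalities separately: the second one (average error $=$ minimum error for the averaged channels) is essentially linearity, while the first one (minimum worst-case $=$ average) rests on a group-covariance argument. Throughout I write a strategy as a pair $(\rho,\{P_1,P_2\})$, with $\rho$ an input state on the $N$ probes plus the reference $R$, and I let $p_\star$ denote the minimum error in discriminating the averaged channels $\map C_1^{(N)}$ and $\map C_2^{(N)}$ of Equation (\ref{C1C2}). First I would note that, for any fixed strategy, $p_{\rm err}(U_1,U_2)$ is linear in each channel, with $U_1$ appearing only in the first term and $U_2$ only in the second; averaging commutes with the trace and the measurement, so the average over $U_1,U_2\in\grp{SU}(d)$ produces the output states $(\map C_x^{(N)}\otimes\map I_R)(\rho)$. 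Hence $p_{\rm err}^{\rm ave}$ for a given strategy is exactly the discrimination error of the averaged channels with the same strategy, and minimizing gives $\min_S p_{\rm err}^{\rm ave}(S)=p_\star$. Since a maximum dominates an average, $p_{\rm err}^{\rm wc}(S)\ge p_{\rm err}^{\rm ave}(S)$ for every $S$, so $\min_S p_{\rm err}^{\rm wc}(S)\ge p_\star$.

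The core of the argument is the reverse inequality, obtained by symmetrization. The key observation is that pre-processing each probe by the conjugation channel $\map U_g(\cdot)=U_g\cdot U_g^\dagger$ turns $\map C_{1,U_1}$ into $\map C_{1,U_1 U_g}$ (and similarly under the second hypothesis), so that by right-invariance of the Haar measure the averaged channel is invariant under input twirling,
\begin{align}
\map C_x^{(N)}\circ\map U_g^{\otimes N}=\int\d U_x\,\map C_{x,U_xU_g}^{\otimes N}=\map C_x^{(N)}\, . \nonumber
\end{align}
I would then take a strategy $(\rho,\{P_1^\ast,P_2^\ast\})$ that is optimal for the averaged channels and, without loss of generality, choose its input twirl-invariant, $\rho=\int\d g\,(\map U_g^{\otimes N}\otimes\map I_R)(\rho)$; this is legitimate because the displayed invariance shows that replacing $\rho$ by its twirl leaves the two output states $(\map C_x^{(N)}\otimes\map I_R)(\rho)$, and hence the discrimination error, unchanged.

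For such a twirl-invariant input, applying the \emph{actual} channel reproduces the averaged-channel output regardless of the true gate,
\begin{align}
(\map C_{x,U_x}^{\otimes N}\otimes\map I_R)(\rho)=\int\d g\,(\map C_{x,U_xU_g}^{\otimes N}\otimes\map I_R)(\rho)=(\map C_x^{(N)}\otimes\map I_R)(\rho)\, . \nonumber
\end{align}
Consequently $p_{\rm err}(U_1,U_2)$ is constant in $(U_1,U_2)$ and equal to $p_\star$, so the worst case of this strategy is exactly $p_\star$, giving $\min_S p_{\rm err}^{\rm wc}(S)\le p_\star$. Combining the two bounds yields $\min_S p_{\rm err}^{\rm wc}(S)=p_{\rm err}^{\rm ave}=p_\star$, and the same pair $(\rho,\{P_1^\ast,P_2^\ast\})$ — a twirl-invariant state together with the Helstrom measurement for $\map C_1^{(N)}$ versus $\map C_2^{(N)}$ — is simultaneously optimal for the worst-case, the average, and the averaged-channel problems, which is the last sentence of the lemma.

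I expect the main obstacle to be making rigorous the step that the optimal input may be taken twirl-invariant over the continuous group $\grp{SU}(d)$: one must check that the Haar twirl is a bona fide channel (a well-defined average of unitary conjugations), that the interchange of the channel with the $g$-integral is justified, and that the invariance of $\map C_x^{(N)}$ under it is precisely what forces the output states, and therefore the optimal error, to be unaffected by the symmetrization. Everything else — the linearity reduction and the two chained inequalities — is routine bookkeeping once this covariance is in place.
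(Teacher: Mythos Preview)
Your proof is correct and is precisely the covariance argument the paper has in mind: the paper omits the proof entirely, merely stating that it is ``a simple adaptation of Holevo's argument on the optimality of covariant measurements.'' Your symmetrization of the \emph{input state} (rather than the measurement, as in Holevo's original formulation) is the natural variant here because the group acts on the channel input; the paper later uses exactly the same input-twirl invariance (its Equation~\eqref{twirl} and the ensuing reduction to invariant states in Proposition~\ref{prop:optimal}), so your route aligns with what follows in the appendix.
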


\medskip

We omit the proof, which is a simple adaptation of Holevo's argument on the optimality of covariant measurements \cite{holevo2011probabilistic}, see also  \cite{chiribella2011group}.

\subsection{Optimal form of the input states}  
Let us  search for the optimal quantum strategy.  Note that the channels $\map C_1^{(N)}$ and $\map C_2^{(N)}$ satisfy the condition  
\begin{align}\label{twirl}
\map C^{(N)}_x   =    
\map C^{(N)}_x  \circ \map T^{(N)}_{\rm in}  \, , \qquad\forall  x\in\{1,2\} \, .
\end{align}
where $\map T^{(N)}_{\rm in}$ 
is  the twirling channel 
\begin{align}
\map T^{(N)}_{\rm in}  :=  \int  \d W  \,     \map W^{\otimes N} \, .
\end{align}  
Eq. (\ref{twirl}) implies   that the search of the optimal input state  can be restricted to invariant states---{\em i.~e.~}states satisfying the 
condition 
\begin{align}\label{invariantinput}
   \map T_{\rm in}^{(N)}  (\rho)   =  \rho  \, .
\end{align}
 The structure of the invariant states can be made explicit using the Schur-Weyl duality \cite{fulton2013representation}, whereby the tensor product Hilbert space $\spc H^{\otimes N}$ is decomposed as 
\begin{align}\label{schurweyl}
\spc H^{\otimes N}  =  \bigoplus_{\lambda \in  \set Y_{N,d}}  \,   \big(  \spc R_\lambda\otimes \map M_\lambda\big) \, ,  
\end{align} 
 where $\set{Y}_{N,d}$ is the set of  Young diagrams of $N$ boxes arranged in $d$ rows, while  $\spc R_\lambda$ and $\spc M_\lambda$ are representation and multiplicity spaces for the tensor action of  $\grp {SU} (d)$, respectively. Using the Schur-Weyl decomposition, every invariant state on $\spc H^{\otimes N}\otimes \spc H_R$  can be decomposed as 
 \begin{align}\label{invariantinput2}
 \rho   =   \bigoplus_{\lambda}  \,        q_\lambda \,       \left(      \frac {P_\lambda}{d_\lambda}  \otimes  \rho_{\lambda R}   \right) \, ,
 \end{align}
 where $\{q_\lambda\}$ is a probability distribution,  $P_\lambda$ is the identity operator on the representation space $\spc R_\lambda$,  and $\rho_{\lambda R}$ is a density matrix on the Hilbert space $\map M_\lambda\otimes \spc H_R$. 
 
Note that the set of invariant states (\ref{invariantinput2}) is convex. Since the  (average) error probability  is a linear function of $\rho$, the minimisation can be restricted to the extreme points of this convex set.   Hence, we have the following  
\begin{prop}\label{prop:optimal}
Without loss of generality, the optimal input state for a parallel strategy with reference system $R$ can be taken of the form
\begin{align}\label{optimalstatelambda}
\rho  =     \frac {P_{\lambda_0}}{d_{\lambda_0}}  \otimes    \Psi_{\lambda_0 R}     \, ,
\end{align}
where $\lambda_0\in  \set Y_{N,d}$ is a fixed Young diagram and $ \Psi_{\lambda_0 R}  $ is a pure state on $\spc M_{\lambda_0}\otimes \spc H_R$. 
\end{prop}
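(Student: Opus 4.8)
The plan is to characterise the extreme points of the convex set of invariant input states and then exploit the fact that, for a fixed measurement, the error probability is an affine function of the input state. Writing $\set S$ for the set of states satisfying the invariance condition (\ref{invariantinput}), the twirling identity (\ref{twirl}) has already restricted the search to states in $\set S$, and the Schur--Weyl decomposition (\ref{schurweyl}) has given them the block form (\ref{invariantinput2}). Since $\set S$ is a closed, bounded subset of the finite-dimensional space of Hermitian operators, it is convex and compact, so the whole task reduces to showing that its extreme points are precisely the states appearing in (\ref{optimalstatelambda}).

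First I would show that $\set S$ is the convex hull of the states of the form (\ref{optimalstatelambda}). Starting from a generic invariant state $\rho = \bigoplus_\lambda q_\lambda\,(P_\lambda/d_\lambda \otimes \rho_{\lambda R})$, I would insert the spectral decomposition $\rho_{\lambda R} = \sum_j p_{\lambda,j}\,\ket{\psi^{(j)}_{\lambda R}}\bra{\psi^{(j)}_{\lambda R}}$ of each block, obtaining
\begin{align}
\rho = \sum_{\lambda,j} q_\lambda\, p_{\lambda,j}\,\left(\frac{P_\lambda}{d_\lambda} \otimes \Psi^{(j)}_{\lambda R}\right),
\end{align}
a convex combination of states of the claimed form, with $\Psi^{(j)}_{\lambda R}=\ket{\psi^{(j)}_{\lambda R}}\bra{\psi^{(j)}_{\lambda R}}$. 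Conversely each state $P_{\lambda_0}/d_{\lambda_0} \otimes \Psi_{\lambda_0 R}$ is manifestly invariant, so these states generate $\set S$. To confirm that they are exactly the extreme points, suppose such a state were written as $t\,\sigma_1 + (1-t)\,\sigma_2$ with $\sigma_1,\sigma_2 \in \set S$ and $0 < t < 1$; since the left-hand side is supported in the single block $\lambda_0$ and the blocks carry non-negative weights, both $\sigma_i$ must be supported in that block, i.e.\ $\sigma_i = P_{\lambda_0}/d_{\lambda_0} \otimes \tau_i$, and the purity of $\Psi_{\lambda_0 R}$ then forces $\tau_1 = \tau_2 = \Psi_{\lambda_0 R}$.

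The final step uses linearity. For any fixed two-outcome measurement $\{P_1,P_2\}$ the average error probability
\begin{align}
p_{\rm err}(\rho) = \frac{1}{2}\,\Tr\!\left[P_1\, \map C^{(N)}_2(\rho)\right] + \frac{1}{2}\,\Tr\!\left[P_2\, \map C^{(N)}_1(\rho)\right]
\end{align}
is an \emph{affine} function of $\rho$, and an affine function on a compact convex set attains its minimum at an extreme point. Hence the minimum over $\set S$ equals the minimum over the states (\ref{optimalstatelambda}). Minimising also over the measurement and exchanging the two minimisations --- always permissible --- then shows that the jointly optimal strategy can be realised with an input of the form (\ref{optimalstatelambda}), which is the claim.

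I expect the main subtlety to lie precisely in this last point: the error probability is linear in $\rho$ only once the measurement is fixed, whereas the Helstrom-optimal error probability (\ref{helstrom}) is \emph{convex} rather than linear in $\rho$, and convex functions need not attain their minima at extreme points. The exchange between the minimisation over input states and the minimisation over measurements is what rescues the extreme-point argument, and it is the step that must be justified carefully rather than absorbed into the phrase ``linear in $\rho$''.
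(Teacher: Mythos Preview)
Your argument is correct and follows the same approach as the paper: restrict to invariant states via the twirling identity, use the block form (\ref{invariantinput2}), and appeal to convexity/linearity to reduce to extreme points. The paper's own justification is in fact terser than yours --- it simply asserts that the average error probability is linear in $\rho$ and that therefore the minimisation may be restricted to the extreme points of (\ref{invariantinput2}), without explicitly identifying those extreme points or addressing the linearity-versus-convexity subtlety you flag; your explicit characterisation of the extreme points and the exchange-of-infima argument fill in precisely the steps the paper leaves implicit.
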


    \subsection{Error probability for states of the optimal form\label{app:prob}}
  
The problem is to find the input state that makes the output states most distinguishable.  To this purpose, it is convenient to label operators with the corresponding systems and to use the notation ${\bf A}  :=  A_1A_2  \cdots A_N $, ${\bf B}  :=  B_1 B_2 \cdots B_N$,  ${\bf C}  : =  C_1 C_2 \cdots C_N$, and ${\bf R}:  =  R$.

    When applied to an invariant state of the composite system $\bf A   R$, the two channels $\map C^{(N)}_1$ and $\map C^{(N)}_2$ produce the output states 
    \begin{align}\label{aa}
    \Big(  \map C^{(N)}_1\otimes \map I_{\bf R}  \Big)   (\rho_{\bf AR})   =   \rho_{\bf BR}  \otimes  \left(  \frac I d\right)^{\otimes N}_{\bf C}  \qquad {\rm and} \qquad      \Big (\map  C^{(N)}_2 \otimes \map I_{\bf R} \Big)    (\rho_{\bf AR})  =   \left(  \frac I d\right)^{\otimes N}_{\bf B}\otimes   \rho_{\bf CR}   \, ,
    \end{align}
up to a convenient reordering of the Hilbert spaces.

The minimum error probability  in the discrimination of the output states is given by Helstrom's theorem  \cite{helstrom1969quantum}.  Specifically, one has
\begin{align}\label{errorprob}
p_{\rm{err}}   &= \frac{1}{2} \left( 1 - \frac{1}{2} \left\|   \Delta   \right\|_1 \right)  \, , \qquad \Delta    :=  \rho_{\bf BR}  \otimes  \left(  \frac I d\right)^{\otimes N}_{\bf C}   -      \left(  \frac I d\right)^{\otimes N}_{\bf B}\otimes   \rho_{\bf CR}    \, .
\end{align}
In the following, we compute the trace norm  explicitly for input states of the optimal form 
\begin{align}\label{bb}
\rho  =     \frac {P_{\lambda_0}}{d_{\lambda_0}}  \otimes    \Psi_{\lambda_0 R}  \, ,
\end{align}
It is convenient to decompose the identity operator $I^{\otimes N}$ as 
\begin{align}\label{cc}
I^{\otimes N}   =   \bigoplus_{\lambda\in\set Y_{N,d}}  \,  \Big (    P_\lambda  \otimes Q_\lambda \Big) \, ,  
\end{align} 
where  $P_\lambda$ is the identity operator on the representation space $\spc R_\lambda$ and  $Q_\lambda$ is the identity operator on the multiplicity space  $\map M_\lambda$.  In the following, we denote by $m_{\lambda}=  \Tr[Q_\lambda]$ the dimension of $\map M_\lambda$.
 Combining Eqs. (\ref{aa}), (\ref{bb}), and (\ref{cc}), we obtain 
 \begin{align}
\nonumber  \left\|  \Delta    \right\|_1   &=     \frac {d_{\lambda_0}  m_{\lambda_0}}{d^N}    \, 
  \left\|    \frac{ P_{\lambda_0}}{d_{\lambda_0}}   \otimes \frac{ P_{\lambda_0}}{d_{\lambda_0}}    \otimes \left (  \Psi_{\lambda_0 R}     \otimes \frac{Q_{\lambda_0}}{m_{\lambda_0}}     -   \frac{Q_{\lambda_0}}{m_{\lambda_0}} \otimes  \Psi_{\lambda_0 R}            \right) \right \|_1   
  \\
  \nonumber   &\quad  +  2 \sum_{\lambda \not  =  \lambda_0}   \,  \frac {d_{\lambda}  m_{\lambda}}{d^N}    \, 
  \left\|    \frac{ P_{\lambda_0}}{d_{\lambda_0}}   \otimes \frac{ P_{\lambda}}{d_{\lambda}}    \otimes   \Psi_{\lambda_0 R}     \otimes \frac{Q_{\lambda}}{m_{\lambda}}\right\|_1  \\
  &  =       \frac {d_{\lambda_0}  m_{\lambda_0}}{d^N}    \, 
  \left\|       \Psi_{\lambda_0 R}     \otimes \frac{Q_{\lambda_0}}{m_{\lambda_0}}     -   \frac{Q_{\lambda_0}}{m_{\lambda_0}} \otimes  \Psi_{\lambda_0 R}            \right \|_1   +  2 \left( 1  -  \frac{d_{\lambda_0} m_{\lambda_0}}{d^N} \right)  \label{Delta}
     \end{align}

It remains to compute the trace norm in the first summand.  
  To this purpose, it is convenient to define the states
\begin{align}
\left|\Phi^{\pm}_{n} \right\rangle := \frac{  \left|  \Psi_{\lambda_0 R} \right\> \otimes |n \>    \pm  |n\>   \otimes \left |\Psi_{\lambda_0 R}  \right\>}{\gamma^{\pm}_n} \, , \qquad \gamma^{\pm}_{n}:=  {\sqrt{2 (1\pm \<  n|  \rho  |n\>) }}  \label{fam} \ ,
\end{align}
where   $\rho$ is the marginal state of   $\Psi_{\lambda_0 R}$ on the multiplicity space $\spc M_{\lambda_0}$, and  
   $\{  |n\> \, ,   n=1,\dots,   m_{\lambda_0}\}$  are the eigenvectors of $\rho$. 
   With this definition,   the states  
   \begin{align}\{  \left |\Phi^{k}_n\right\> \, ,    k\in  \{+,-\}  \, ,  n\in\{1,\dots, m_{\lambda_0} \}   \}
   \end{align} are mutually orthogonal. For example, one has  
   \begin{align}
 \nonumber  \<   \Phi^+_m  |\Phi^+_n\>    &  =  \frac{  {\rm Re}  [  \<  m  |  \rho  |  n\>]}{  \gamma^{\pm}_m  \gamma^{\pm}_n  }  \\
   &  =  0 \, ,
   \end{align} 
   the second equality coming from the fact that $\rho$ is diagonal in the basis $\{|n\>\}$.  

In terms of the vectors (\ref{fam}), one can rewrite the relevant terms as  
   \begin{align}
        \Psi_{\lambda_0 R}     \otimes \frac{Q_{\lambda_0}}{m_{\lambda_0}}     -   \frac{Q_{\lambda_0}}{m_{\lambda_0}} \otimes  \Psi_{\lambda_0 R}              &   =  \frac 1{ 2 m_{\lambda_0}}   ~\sum_n  \,         \gamma_n^+       \gamma_n^-  \bigg(   \,  \left| \Phi^+_n\right\>\left\<  \Phi^-_n  \right|   +  \left| \Phi^-_n\right\>\left\<  \Phi^+_n\right|   \bigg)  \, .
   \end{align}
Then, the trace norm is
\begin{align}
\nonumber   \left\|     \Psi_{\lambda_0 R}     \otimes \frac{Q_{\lambda_0}}{m_{\lambda_0}}     -   \frac{Q_{\lambda_0}}{m_{\lambda_0}} \otimes  \Psi_{\lambda_0 R}          \right\|_1    &  = \frac 1{ 2 m_{\lambda_0}}   ~\sum_n  \,         \gamma_n^+       \gamma_n^-    \bigg\|  \,  \left| \Phi^+_n\right\>\left\<  \Phi^-_n  \right|   +  \left| \Phi^-_n\right\>\left\<  \Phi^+_n\right|   \bigg\|_1 \\ 
  &=  \frac{2}{m_{\lambda_0}}    \, \sum_n  \,     \sqrt{ 1- \<  n| \rho  |n\>^2} \, .
   \end{align}
The maximum trace norm is reached when the eigenvalues of $\rho$ are all equal.  In that case, one has 
\begin{align}\label{crucial}
  \left\|     \Psi_{\lambda_0 R}     \otimes \frac{Q_{\lambda_0}}{m_{\lambda_0}}     -   \frac{Q_{\lambda_0}}{m_{\lambda_0}} \otimes  \Psi_{\lambda_0 R}          \right\|_1    &   =  \frac{2}{m_{\lambda_0}}     
  \bigg  (  m_{\lambda_0}   -  r   +  r  \sqrt{ 1-r^{-2}}  \bigg)   \, ,
   \end{align}  
     where $r$ is the rank of $\rho$. 
Combining the above equation with Eqs. (\ref{Delta}) and (\ref{errorprob}) we obtain the error probability  
\begin{align}
\label{perrore}
p_{\rm{err}}  &=    \frac{d_{\lambda_0} }{2 d^N}  \,  f(r)   \,  \qquad   f(r): =   r \,\left(    1   -    \sqrt{ 1-r^{-2}}  \,\right) \, .
\end{align}    
 Note that the function $f(r)$ is monotonically decreasing, and therefore the error  probability is minimised by maximising the rank $r$, {\em i.~e.~}by choosing 
 \begin{align}\label{r}
 r=  \min\{  m_{\lambda_0} \, , d_R \} \, ,
 \end{align} 
 where $d_R$ is the dimension of the reference system.   

\subsection{Minimum error probability}
The probability of error is given by Eq.~(\ref{perrore}).   When the reference system has dimension larger than the multiplicity $m_{\lambda_0}$, one has the equality 
\begin{align}
r=  m_{\lambda_0}
\end{align}
and the error probability becomes
\begin{align}\label{errorprobbecomes}
p_{\rm{err}}  &=    \frac{d_{\lambda_0} }{2 d^N}  \,  f(m_{\lambda_0})   \, ,
\end{align}    
with $f$ defined as in Equation (\ref{perrore}). 

The only way to beat the classical scaling $1/d^N$ is to make $f(m_{\lambda_0})$ exponentially small.   Since $f$ is positive and  monotonically decreasing, this means that $m_{\lambda_0}$ must be exponentially large.  
Note that, for large $m_{\lambda_0}$, the probability of error has  the asymptotic expression  
 \begin{align}\label{larger}
p_{\rm{err}}  &=    \frac{d_{\lambda_0} }{4m_{\lambda_0}  d^N }\,  \left[1  +  O\left({m_{\lambda_0}}^{-2}\right)  \right] \, .
\end{align}    
Asymptotically, the problem is reduced to the minimisation of the ratio $d_{\lambda_0}/m_{\lambda_0}$.

 To find the minimum, it is useful to apply the notion of majorisation Young diagrams. Given two diagrams $\lambda$ and $\mu$ of $N$  boxes arranged in $d$ rows, we say that {\em $\lambda$ majorises $\mu$} if 
 \begin{align}
 \sum_{i=1}^s  \, \lambda_i  \ge \sum_{i=1}^s  \,\mu_i  \qquad \forall   s\in  \{1,\dots, d\} \, , 
 \end{align}     
 where $\lambda_i$ ($\mu_i$) is the length of the $i$-th row of the diagram $\lambda$  ($\mu$).

 \begin{lemma}\label{lem:majorisation}
 If $\lambda$ majorises $\mu$, then $d_\lambda/m_\lambda  \ge   d_\mu/m_\mu$. 
 \end{lemma}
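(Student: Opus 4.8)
The plan is to collapse the inequality into a single explicit ratio of factorials and then finish with a convexity-under-majorisation argument. Writing $l_i = \lambda_i + d - i$ for the shifted row lengths (a strictly decreasing sequence since the $\lambda_i$ are weakly decreasing), the Weyl dimension formula expresses the $\grp{SU}(d)$-irrep dimension as $d_\lambda = \prod_{i<j}(l_i - l_j)\big/\prod_{k=1}^{d-1}k!$, while the Frobenius formula for the dimension of the $S_N$-irrep reads $m_\lambda = N!\,\prod_{i<j}(l_i - l_j)\big/\prod_i l_i!$. The crux of the whole argument is that the Vandermonde factor $\prod_{i<j}(l_i-l_j)$ is common to both and hence cancels in the quotient, leaving
\[
\frac{d_\lambda}{m_\lambda} = \frac{\prod_{i=1}^d l_i!}{N!\,\prod_{k=1}^{d-1}k!}.
\]
Since $N$ and $d$ are fixed, the denominator is a constant independent of the diagram, so comparing $d_\lambda/m_\lambda$ with $d_\mu/m_\mu$ reduces to comparing $\prod_i (\lambda_i + d - i)!$ with $\prod_i (\mu_i + d - i)!$.

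Second, I would verify that majorisation of the diagrams lifts to majorisation of the shifted sequences. Setting $a_i = \lambda_i + d - i$ and $b_i = \mu_i + d - i$, both are strictly decreasing, and adding the common quantity $\sum_{i=1}^s (d-i)$ to each side of the hypothesis $\sum_{i=1}^s \lambda_i \ge \sum_{i=1}^s \mu_i$ gives $\sum_{i=1}^s a_i \ge \sum_{i=1}^s b_i$ for every $s$, with equality at $s=d$ because $\sum_i \lambda_i = \sum_i \mu_i = N$. Hence $(a_i)$ majorises $(b_i)$ in the usual sense for real vectors.

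Third, I would conclude with the Hardy--Littlewood--P\'olya (Karamata) inequality. The function $g(x) = \log\Gamma(x+1)$ is convex on $[0,\infty)$ by the log-convexity of $\Gamma$, and it satisfies $g(n) = \log(n!)$ at integer arguments. Karamata's inequality applied to $(a_i)\succ(b_i)$ then yields $\sum_i g(a_i) \ge \sum_i g(b_i)$, i.e.\ $\prod_i a_i! \ge \prod_i b_i!$, which by the displayed identity is precisely $d_\lambda/m_\lambda \ge d_\mu/m_\mu$, as claimed.

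The only genuine obstacle is the bookkeeping in the first step: rewriting the two dimension formulas in the common $l_i$-parametrisation so that the Vandermonde cancellation becomes manifest. Once that cancellation is spotted, the statement reduces to the textbook fact that a Schur-convex quantity (here $\prod_i l_i!$, a product of a convex function of the parts) is monotone under majorisation, and the remaining steps are routine.
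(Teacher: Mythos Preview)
Your proof is correct. Both you and the paper reduce the ratio to the same explicit formula
\[
\frac{d_\lambda}{m_\lambda}=\frac{1}{N!}\prod_{i=1}^d\frac{(\lambda_i+d-i)!}{(d-i)!}\,,
\]
though you reach it by cancelling the Vandermonde factor between the Weyl and Frobenius formulas, whereas the paper divides the two hook-length expressions directly. The genuine difference is in how the monotonicity of $\prod_i(\lambda_i+d-i)!$ under majorisation is established. The paper proceeds by hand: it proves by induction on $s$ that
\[
\prod_{i=1}^s\frac{(d-i+\lambda_i)!}{(d-i)!}\ \ge\ \prod_{i=1}^s\frac{(d-i+\mu_i)!}{(d-i)!}\cdot(d-s+1+\mu_s)^{\sum_{i\le s}(\lambda_i-\mu_i)}\,,
\]
and reads off the result at $s=d$, where the surplus factor is $1$. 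You instead observe that the shifted sequences $(\lambda_i+d-i)$ and $(\mu_i+d-i)$ inherit the majorisation ordering and then apply Karamata's inequality to the convex function $\log\Gamma(x+1)$. Your route is shorter and more conceptual, identifying the statement as a special case of Schur-convexity; the paper's route is entirely elementary and self-contained, requiring no external inequality. Either argument is complete.
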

 \begin{proof}    For a generic Young diagram $\lambda \in  \set Y_{N+1,d}$, one has 
\begin{align}
d_{\lambda}=\frac{\prod_{(i,j) \in \lambda} d - i+j}{\prod_{(i,j)\in \lambda} \text{hook}(i,j)} \qquad {\rm and}  \qquad m_{\lambda}= \frac{N!}{\prod_{(i,j)\in \lambda} \text{hook}(i,j)} \ , \label{dimensions}
\end{align}
Here the pair $(i,j)$ labels a box in the diagram, with the indices $i$ and $j$ labelling the  row and  the column, respectively.   $ \text{hook}(i,j)$ denotes the  length of the hook consisting of boxes to the right and to the bottom of  the box $(i,j)$. 
Using the above expressions, the dimension/multiplicity ratio reads
\begin{align}
\nonumber \frac{d_{\lambda}}{m_{\lambda}}  &=\frac{\prod_{(i,j) \in \lambda} d   - i+j}{N!}  \\
  &  =    \,   \frac 1 { N!} \,   \prod_{i=1}^d    \frac{     (d-i+ \lambda_i)!      }{(d-i)!} \ .   \label{ratio}
\end{align}
 Now, since $\lambda$ majorises $\mu$, one has the bounds
  \begin{align}
\nonumber   \frac{     (d-1+ \lambda_1)!      }{(d-1)!}    &\ge   \frac{     (d-1+ \mu_1)!      }{(d-1)!}  \,   (d+\mu_1)^{\lambda_1  -  \mu_1}   \\ 
\nonumber   \frac{     (d-1+ \lambda_1)!      }{(d-1)!}   \,    \frac{     (d-2+ \lambda_2)!      }{(d-2)!}    &   \ge       \frac{     (d-1+ \mu_1)!      }{(d-1)!}   \,    \frac{     (d-2+ \mu_2)!      }{(d-2)!}       \,   (d-1 +  \mu_2)^{\lambda_1  +  \lambda_2  -  \mu_1-\mu_2}   \,   \\
\nonumber    & ~\,  \vdots   \\
\prod_{i=1}^s    \frac{     (d-i+ \lambda_i)!      }{(d-i)!}  & \ge  \prod_{i=1}^s    \frac{     (d-i+ \mu_i)!      }{(d-i)!}  \,    (d-s+1  +  \mu_s)^{\sum_{i=1}^s   (  \lambda_i-  \mu_i)}   \qquad \forall s\in  \{1,\dots,d\}\, .
 \end{align}
 Choosing $s=d$ and recalling Eq. (\ref{ratio}), one finally obtains  $d_\lambda/m_\lambda \ge d_\mu/m_\mu$. 
 \end{proof}  
 
 \begin{prop}
Define $t  := N-d   \lfloor N/d  \rfloor $. Then, the ratio $d_{\lambda}/m_{\lambda}$ is 
\begin{enumerate}
\item minimum when $\lambda$ is the  Young diagram with $t$ rows of length $\lceil N/d  \rceil$ and $d-t$ rows of length $\lfloor N/d  \rfloor$
\item   maximum when $\lambda $ is the Young diagram with one row or length $N$. 
 \end{enumerate}
 \end{prop}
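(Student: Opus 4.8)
The plan is to exploit the monotonicity of the ratio under the dominance order, which is exactly the content of the preceding Lemma~\ref{lem:majorisation}: since $\lambda$ majorising $\mu$ implies $d_\lambda/m_\lambda \ge d_\mu/m_\mu$, the ratio $d_\lambda/m_\lambda$ is a monotone non-decreasing function on the poset of Young diagrams $\set Y_{N,d}$ ordered by majorisation. Consequently it attains its maximum at the top element and its minimum at the bottom element of this order, and the entire proposition reduces to identifying those two extremal diagrams and checking that they are indeed the claimed ones.

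The maximum is immediate. The single-row diagram $\lambda=(N,0,\dots,0)$ has partial sums $\sum_{i=1}^s \lambda_i = N$ for every $s\ge 1$, which is the largest value any partial sum of a diagram of $N$ boxes can take. Hence $(N)$ majorises every $\mu\in\set Y_{N,d}$, and Lemma~\ref{lem:majorisation} gives $d_{(N)}/m_{(N)}\ge d_\mu/m_\mu$ for all $\mu$, so the single row realises the maximum ratio.

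For the minimum, set $q:=\lfloor N/d\rfloor$, so that $N=qd+t$ with $0\le t<d$, and let $\lambda^\ast$ be the balanced diagram with $t$ rows of length $q+1$ and $d-t$ rows of length $q$, whose partial sums are $\Lambda_s:=\sum_{i=1}^s\lambda^\ast_i = sq+\min(s,t)$. The goal is to show that $\lambda^\ast$ is the bottom element, i.e.\ that every $\mu\in\set Y_{N,d}$, with parts written in non-increasing order, satisfies $M_s:=\sum_{i=1}^s\mu_i \ge \Lambda_s$ for all $s$. The obvious first move — bounding the partial average of the largest parts below by the global average $N/d$ — yields only $M_s\ge sN/d = sq+st/d$, which has the correct leading term but discards the fractional contribution and is therefore not sufficient. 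Overcoming this is the main obstacle, and I expect to resolve it by combining the averaging idea with the integrality of the $M_s$ via a short case split.

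Concretely, for $s\le t$ I would argue by contradiction: if $M_s<s(q+1)$ then, being an integer, $M_s\le sq+s-1$, so $\mu_s\le q$ and hence every later part is at most $q$; this forces $N\le M_s+(d-s)q\le dq+s-1<dq+t=N$, a contradiction. For $t<s<d$ I would split on $\mu_{s+1}$: if $\mu_{s+1}\le q$ then $\sum_{i>s}\mu_i\le(d-s)q$, giving $M_s=N-\sum_{i>s}\mu_i\ge sq+t$; if instead $\mu_{s+1}\ge q+1$ then each of the first $s$ parts is at least $q+1$, whence $M_s\ge s(q+1)\ge sq+t$ since $s>t$. Finally $s=d$ is the trivial equality $M_d=N=\Lambda_d$. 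In every case $M_s\ge\Lambda_s$, so $\mu$ majorises $\lambda^\ast$, and Lemma~\ref{lem:majorisation} then yields $d_{\lambda^\ast}/m_{\lambda^\ast}\le d_\mu/m_\mu$ for all $\mu$, establishing the minimum and completing the proof.
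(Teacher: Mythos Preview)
Your proof is correct and follows the same route as the paper: both reduce the proposition to Lemma~\ref{lem:majorisation} by identifying the top and bottom elements of the dominance order on $\set Y_{N,d}$. The paper merely asserts that the balanced diagram is majorised by every other diagram and that the single row majorises all others, whereas you actually verify these extremality claims via a clean case analysis on the partial sums, so your argument is strictly more complete.
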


\begin{proof}  The Young diagram $\lambda_0  =   (  \underbrace{\lceil N/d  \rceil  ,   \dots,  \lceil N/d  \rceil}_{t~{\rm times}} ,  \underbrace{\lfloor N/d  \rfloor  ,   \dots,  \lfloor N/d  \rfloor}_{d-t~{\rm times}} )$ is majorised by any other Young diagram in $\set Y_{N,d}$. Hence,  $\lambda_0$ minimises the ratio $d_\lambda/m_{\lambda}$  (by Lemma \ref{lem:majorisation}).    Similarly, the Young diagram  $\lambda_0  =  (N,  \underbrace{0,\dots, 0}_{d-1~{\rm times}})$ majorises every other young diagram and therefore it maximises the ratio $d_\lambda/m_\lambda$.  
\end{proof} 

\medskip 

Summarizing, we showed that 
\begin{enumerate}
\item when $N$ is a multiple of $d$, the optimal Young diagram corresponds to the trivial representation of $\grp {SU} (d)$
\item when $N$ is not a multiple of $d$, the optimal  Young diagram corresponds to the totally antisymmetric representation acting on  $N-d   \lfloor N/d  \rfloor $ particles.  
\item asymptotically, the symmetric subspace is the worst possible choice, leading to the classical rate $R_{\rm C}  = \log d$. 
\end{enumerate}
 
In conclusion, we proved the following 
\begin{prop}\label{prop:optimaltrivial}
When $N$ is a multiple of  $d$, the optimal input state is  $ P_{\lambda_0}/d_{\lambda_0}  \otimes  |\Psi\>\<\Psi|_{\lambda_0 \bf R }$, where $\lambda_0$ is he trivial representation of $\grp{SU} (d)$ in the $N$-fold tensor product  $U^{\otimes N}$, $d_{R}  \ge  m_{\lambda_0}$, and $|\Psi\>_{\lambda_0 \bf R }   \in  \spc M_{\lambda_0} \otimes \spc H_R$  is a maximally entangled state. 
 \end{prop}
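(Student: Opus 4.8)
The plan is to assemble the results already proved in this appendix, so that essentially no new computation is needed. By Proposition \ref{prop:optimal} it suffices to restrict to input states of the form $\rho = P_{\lambda_0}/d_{\lambda_0} \otimes \Psi_{\lambda_0 R}$, with $\lambda_0 \in \set Y_{N,d}$ a single Young diagram and $\Psi_{\lambda_0 R}$ a pure state on $\spc M_{\lambda_0}\otimes \spc H_R$. For such states the error probability is given by Equation (\ref{perrore}), namely $p_{\rm err} = \frac{d_{\lambda_0}}{2d^N}\, f(r)$ with $f$ strictly decreasing and $r = \min\{m_{\lambda_0}, d_R\}$. The optimization therefore factors into two nearly independent choices: the pure state $\Psi_{\lambda_0 R}$, which enters only through the rank $r$, and the Young diagram $\lambda_0$.

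First I would fix $\lambda_0$ and optimize over $\Psi_{\lambda_0 R}$. Since $f$ is decreasing, minimizing $p_{\rm err}$ amounts to maximizing $r$. Equation (\ref{crucial}) shows that the trace norm entering the error probability is maximal precisely when the marginal of $\Psi_{\lambda_0 R}$ on the multiplicity space $\spc M_{\lambda_0}$ has all eigenvalues equal, which gives full rank $r = m_{\lambda_0}$ provided the reference is large enough to carry a purification, i.e. $d_R \ge m_{\lambda_0}$. Under these two conditions the optimal $\Psi_{\lambda_0 R}$ is a maximally entangled state between $\spc M_{\lambda_0}$ and $\spc H_R$, and the error probability collapses to $p_{\rm err} = \frac{d_{\lambda_0}}{2d^N}\, f(m_{\lambda_0})$, exactly as in Equation (\ref{errorprobbecomes}).

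It then remains to choose $\lambda_0$ so as to minimize $d_{\lambda_0}\, f(m_{\lambda_0})$. Here I would invoke the majorisation machinery: by Lemma \ref{lem:majorisation} the ratio $d_\lambda/m_\lambda$ is monotone under majorisation, and by the Proposition that follows it the diagram majorised by every element of $\set Y_{N,d}$ is the one with $t = N - d\lfloor N/d\rfloor$ rows of length $\lceil N/d\rceil$ and $d-t$ rows of length $\lfloor N/d\rfloor$; this is the minimizer of $d_\lambda/m_\lambda$, the quantity that controls $p_{\rm err}$ for large $m_{\lambda_0}$ through the expansion $f(m)\simeq 1/(2m)$. Specializing to $N$ a multiple of $d$ gives $t = 0$, so the optimal diagram is the rectangular $d\times(N/d)$ one. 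Finally I would identify it with the trivial representation of $\grp{SU}(d)$: a rectangular diagram whose $d$ rows all have equal length carries highest weight $(N/d)(1,\dots,1)$, a power of the determinant and hence trivial on $\grp{SU}(d)$, so that $d_{\lambda_0}=1$. Substituting $d_{\lambda_0}=1$ into $p_{\rm err}=\frac{d_{\lambda_0}}{2d^N}f(m_{\lambda_0})$ completes the identification of the optimal input state.

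The main obstacle I anticipate is this last minimization over $\lambda_0$. Lemma \ref{lem:majorisation} controls only the ratio $d_\lambda/m_\lambda$, which governs $p_{\rm err}$ in the large-$m_{\lambda_0}$ limit, whereas the full expression to be minimized is $d_\lambda f(m_\lambda)$; one must rule out a diagram of slightly larger dimension but much larger multiplicity. I would handle this by combining the majorisation bound $m_\mu/d_\mu \le m_{\lambda_0}$, valid for every $\mu\in\set Y_{N,d}$ since $\lambda_0$ is the minimal element of the majorisation order and $d_{\lambda_0}=1$, with the elementary estimate $\tfrac{1}{2m}\le f(m)\le \tfrac{1}{2m-1}$ (which follows from $\sqrt{m^2-1}\ge m-1$). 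These squeeze $d_\lambda f(m_\lambda)$ between $d_\lambda/(2m_\lambda)$ and $d_\lambda/(2m_\lambda-1)$; the residual gap is immaterial in the asymptotic regime that fixes the discrimination rate, which is all that is required for the statement $R_{\rm Q}=2\log d$.
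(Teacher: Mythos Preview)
Your proposal is correct and follows essentially the same route as the paper: reduce to the form of Proposition~\ref{prop:optimal}, use the error-probability formula~(\ref{perrore}) to see that the marginal on $\spc M_{\lambda_0}$ should be maximally mixed of full rank (hence $\Psi_{\lambda_0 R}$ maximally entangled with $d_R\ge m_{\lambda_0}$), and then invoke the majorisation Lemma~\ref{lem:majorisation} and its corollary to single out the rectangular $d\times(N/d)$ diagram, i.e.\ the trivial $\grp{SU}(d)$ representation. You are also right to flag that Lemma~\ref{lem:majorisation} only controls the ratio $d_\lambda/m_\lambda$, so the minimisation of $d_\lambda f(m_\lambda)$ is established in the large-$m_{\lambda_0}$ regime rather than for every finite $N$; the paper handles this the same way, passing to the asymptotic expression~(\ref{larger}) before minimising, which suffices for the rate statement $R_{\rm Q}=2\log d$.
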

 
 Since the trivial representation is one-dimensional, the error probability (\ref{larger}) takes the form
\begin{align}\label{largerthan}
p_{\rm{err}}  &=    \frac{1}{4m_{\lambda_0}  d^N }\,  \left[1  +  O\left({m_{\lambda_0}}^{-2}\right)  \right] \, .
\end{align}    
Moreover, the trivial representation of $\grp{SU} (d)$ corresponds to the Young diagram with $ d$ rows, each of length $N/d$.  
Hence, its multiplicity is given by
\begin{align}
m_{\lambda_0}&= \frac{N!}{\prod_{i=1}^d \frac{\left( \frac{N}{d}+d-i \right)!}{(d-i)!}} \ .
\end{align}
For fixed $d$, the Stirling approximation yields the expression 
\begin{align}\label{multiexpression}
m_{\lambda_0}   =    d^N  \, \left[ \frac{ d^{\frac{d^2}2}    \, e^{\frac{d(d-1)}2}\,  \prod_{i=1}^d (d-i)!} {(2  \pi)^{\frac{d-1}2} \,  N^{\frac{d^2-1}2}} \right] \,  c(N) \, ,
\end{align}  
where $c(N)$ is a function tending to 1 in the large $N$ limit. Taking the logarithm on both sides, one obtains 
\begin{align}
\log m_{\lambda_0}   =    N  \log d  +   O  (  \log N)  \, .   
\end{align}
Inserting this value into the expression of the error probability (\ref{larger}),  we obtain the rate
\begin{align}
\nonumber R  & =  \lim_{N\to \infty}  - \frac{  \log p_{\rm{err}}}{N}   \\
 \nonumber &=    \lim_{N\to \infty}   \frac{ \log  (4 m_{\lambda_0}  d^N)} {N}   \\
  &  =2 \log d  \, .
\end{align}    
\subsection{Quantum superposition of equivalent setups}

Here we   prove that the optimal state can be realized as a coherent superposition of equivalent setups, where the $N$ input variables are divided in groups of $d$, and all the variables in the same group are initialized in the $\grp {SU} (d)$ singlet state.

\begin{prop}  For $N$  multiple of $d$, consider the state 
\begin{align}
|\Psi\>_{{\bf A  R}}  =   \frac 1 {\sqrt{ G_{N,d}}}  \,    \sum_i \,  \left(  |S\>^{\otimes {N/d}}_{{\bf A}} \right)_i \otimes |i\>_{\bf R} \, ,
\end{align}
    where $\{|i\>_{\bf  R}\}_{i=1}^{G_{N,d}}$ is an orthonormal basis for the reference system, indexed by  the possible ways to group $N$  objects into groups of $d$,  and $\left( |S\>^{\otimes {N/d}} \right)_i$ is the product of $N/d$ singlet states, distributed according to the grouping $i$.     Then,  
    \begin{enumerate}
    \item the state  $|\Psi\>_{{\bf A  R}}$ is optimal for the identification of the causal intermediary
    \item the number  $r$ of linearly independent vectors of the form $\left(  |S\>^{\otimes {N/d}}_{{\bf A}} \right)_i$ satisfies the equality 
    \begin{align}
r     =    d^N  \,   \left[ \frac{ d^{\frac{d^2}2}    \, e^{\frac{d(d-1)}2}\,  \prod_{i=1}^d (d-i)!} {(2  \pi)^{\frac{d-1}2} \,  N^{\frac{d^2-1}2}} \right] \, c(N) \, ,
\end{align}  
where $c(N)$ is a function tending to 1  in the large $N$ limit. 
     \end{enumerate}
      \end{prop}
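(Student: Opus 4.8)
The plan is to reduce both claims to the representation theory of $\grp{SU}(d)$ and $S_N$ on $\spc H^{\otimes N}$, the crux being to identify the reduced state of $|\Psi\>_{\bf AR}$ on the $N$ probes with the maximally mixed state on the trivial-representation subspace. The starting observation is that the singlet $|S\>$ of Equation (\ref{singlet}) is totally antisymmetric, hence invariant under $U^{\otimes d}$ for every $U\in\grp{SU}(d)$; consequently every product $(|S\>^{\otimes N/d})_i$ lies in the $\grp{SU}(d)$-invariant subspace $\spc H_{\rm inv}\subseteq\spc H^{\otimes N}$, which by the Schur--Weyl decomposition (\ref{schurweyl}) is the isotypic component of the trivial representation $\lambda_0 = ((N/d)^d)$. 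Since the associated representation space $\spc R_{\lambda_0}$ is one-dimensional ($d_{\lambda_0}=1$), $\spc H_{\rm inv}$ is canonically isomorphic to the multiplicity space $\spc M_{\lambda_0}$, which by Schur--Weyl duality carries the \emph{irreducible} representation of $S_N$ labelled by $\lambda_0$.

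Next I would compute the marginal $\rho_{\bf A}=\Tr_{\bf R}|\Psi\>\<\Psi|_{\bf AR}$. Because the reference states $\{|i\>_{\bf R}\}$ are orthonormal, this marginal is the uniform mixture $\rho_{\bf A}=G_{N,d}^{-1}\sum_i |v_i\>\<v_i|$, with $|v_i\>:=(|S\>^{\otimes N/d})_i$. The symmetric group $S_N$ permutes the particles, and this action permutes the groupings $i$ bijectively and transitively, sending $|v_i\>$ to $\pm|v_{\sigma(i)}\>$, the sign stemming from the antisymmetry of the singlet. The sign is irrelevant for the rank-one projectors, so each $\sigma\in S_N$ merely relabels the terms of the sum and $\rho_{\bf A}$ commutes with the $S_N$ action. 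Since $\rho_{\bf A}$ is supported on $\spc H_{\rm inv}$ and $S_N$ acts irreducibly there, Schur's lemma forces $\rho_{\bf A}=P_{\lambda_0}/m_{\lambda_0}$, the maximally mixed state on the trivial-representation subspace.

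Claim 1 then follows at once: $|\Psi\>_{\bf AR}$ purifies the maximally mixed state on $\spc H_{\rm inv}\cong\spc M_{\lambda_0}$, so it is a maximally entangled state between $\spc M_{\lambda_0}$ and the reference, the reference dimension $G_{N,d}\ge m_{\lambda_0}$ being manifestly large enough. This is precisely the optimal form identified in Proposition \ref{prop:optimaltrivial}, so $|\Psi\>_{\bf AR}$ attains the minimum error probability. Claim 2 follows from the same computation: the range of a positive uniform mixture $G_{N,d}^{-1}\sum_i|v_i\>\<v_i|$ is the span of the $|v_i\>$, so the number $r$ of linearly independent singlet products equals $\mathrm{rank}\,\rho_{\bf A}=m_{\lambda_0}$ (equivalently, one recovers here the first fundamental theorem of invariant theory: the singlet products span all of $\spc H_{\rm inv}$). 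The explicit formula for $r$ is then read off from the hook-length and Stirling evaluation of $m_{\lambda_0}$ already recorded in Equation (\ref{multiexpression}).

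I expect the main obstacle to be the representation-theoretic justification that $\rho_{\bf A}$ is pinned down \emph{exactly}, rather than merely up to symmetry: this rests on the irreducibility of the $S_N$-action on $\spc M_{\lambda_0}$ (a consequence of Schur--Weyl duality) together with the verification that the groupings form a single transitive $S_N$-orbit, so that the projectors are genuinely permuted among themselves and $\rho_{\bf A}$ is $S_N$-invariant. A secondary care point is to work with $\grp{SU}(d)$ rather than $\grp U(d)$: the singlet acquires the factor $\det U$ under $\grp U(d)$, which equals $1$ on $\grp{SU}(d)$, and any residual global phase is immaterial to the channels $\map C_{x,U}$.
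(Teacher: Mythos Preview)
Your proposal is correct and follows essentially the same route as the paper: you identify the singlet products as lying in the trivial $\grp{SU}(d)$ sector, show that the marginal $\rho_{\bf A}$ is the uniform mixture over groupings, invoke $S_N$-invariance together with irreducibility of the multiplicity space to conclude $\rho_{\bf A}=Q_{\lambda_0}/m_{\lambda_0}$ via Schur's lemma, and then read off both optimality (as a purification of the maximally mixed state on $\spc M_{\lambda_0}$, matching Proposition~\ref{prop:optimaltrivial}) and the rank $r=m_{\lambda_0}$. Your added justification of the $S_N$-invariance (transitivity on groupings, irrelevance of the sign at the level of projectors) and the aside on the first fundamental theorem are more explicit than the paper's one-line appeal to Schur's lemma, but the argument is the same; just note that in the paper's conventions the identity on the multiplicity space is denoted $Q_{\lambda_0}$ rather than $P_{\lambda_0}$.
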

  
 \begin{proof}  By definition, $|\Psi\>_{\bf A R}$  is invariant under the $n$-fold action of $\grp {SU} (d)$ on system $\bf A$, meaning that the corresponding density matrix has the optimal form  $  |\Psi\>\<\Psi|_{\bf AR}  =  P_{\lambda_0}/d_{\lambda_0}  \otimes  |\Psi\>\<\Psi|_{\lambda_0 \bf R }$, where $\lambda_0$ is the trivial representation of $\grp {SU} (d)$.  In fact, since the trivial representation is one-dimensional, we may equivalently write $  |\Psi\>\<\Psi|_{\bf AR}  \equiv  |\Psi\>\<\Psi|_{\lambda_0 \bf R }$.  
  
 Now,  the marginal  state  
  \begin{align}
  \nonumber  \rho_{\bf A}    &:  = \Tr_{\bf R} \left[ |\Psi\>\<\Psi|_{\bf AR}  \right]\\
 \label{nanna}
    &  = \frac 1 {G_{N,d}}     \sum_i \,   \left(  |S\>\<S|^{\otimes {N/d}}_{{\bf A}} \right)_i  
    \end{align}  
     is invariant under permutations.  Hence, the Schur lemma implies the relation 
     \begin{align}\label{nonno}
\rho_{\bf A}   =  \frac {Q_{\lambda_0}}{m_{\lambda_0}} \, .
     \end{align}
Since $|\Psi\>_{\bf AR}$ is a purification of $\rho_{\bf A}$, we conclude that $|\Psi\>_{\bf AR}$ is a maximally entangled state between $R$ and the multiplicity system $M_{\lambda_0}$.  Hence, $|\Psi\>_{\bf AR}$ coincides with the optimal input state of Proposition \ref{prop:optimaltrivial}. 

Moreover, comparing Equations (\ref{nanna}) and (\ref{nonno}) we obtain that  the rank of $\rho_{\bf A}$ is equal to the multiplicity $m_{\lambda_0}$. Since the rank of $\rho_{\bf A}$ is the number of linearly independent vectors of the form $ \left(  |S\>^{\otimes {N/d}}_{{\bf A}} \right)_i $, we conclude that the number of such vectors is $m_{\lambda_0}$.   Finally, $m_{\lambda_0}$ can be expressed as in Equation (\ref{multiexpression}). 
\end{proof}

\section{Optimal classical strategy for  $k$ causal hypotheses \label{app3}}
 Here we provide the optimal classical strategy for the case  where exactly one out of $k$ possible variables $B_1,B_2,\dots,B_k$ is the causal intermediary  of $A$. The result is stated in the following
 \begin{lemma}
The minimum error probability
 in the identification of the causal intermediary among $k\ge 2$ alternatives is
 \begin{align}\label{classicalp} 
\nonumber  p^{\rm C}_{\rm err}     =   \frac{k-1}{2d^{N-1}}     +    O \left(  \frac 1 {d^{2N}}\right)     \, . 
 \end{align} 
 \end{lemma}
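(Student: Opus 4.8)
The plan is to reduce the problem to a parallel strategy with classically fixed inputs, to identify the combinatorial confusion events that generate errors, and finally to optimize over the input configuration. First I would invoke the result of Ref.~\cite{hayashi2009discrimination} that in classical theory the optimal discrimination is attained by a parallel strategy in which the $N$ copies of the input variable $A$ are set to prescribed values, no adaptivity or randomization being of any help. Such a strategy is specified by the multiset of input values, which I would summarize by the number $v$ of \emph{distinct} values probed. Under hypothesis $x$ (that $B_x$ is the causal intermediary) the output string of $B_x$ is the image of the inputs under an unknown permutation $\pi$, and is therefore consistent with an injective function of the inputs, whereas each of the remaining $k-1$ output strings is an independent, uniformly random string in $\{0,\dots,d-1\}^N$.

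Next I would pin down the error mechanism. The likelihood of hypothesis $x$ given the observed outputs is nonzero only if the string of $B_x$ is compatible with some injective function of the inputs, and by the permutation symmetry of the prior over $\pi$ it takes one and the same value for every such compatible $x$. Hence the minimum-error decision is to guess uniformly among the variables whose outputs are compatible with an injective function, and the conditional error, given that $j$ variables are so compatible (exactly one of which is the genuine intermediary), equals $(j-1)/j$. A single random variable is compatible with probability $P(d,v)/d^N$, since each of the $P(d,v)=d!/(d-v)!$ injective functions from the $v$ distinct inputs into the alphabet fixes one admissible output string out of $d^N$. Letting $M$ be the number of the $k-1$ random variables that are accidentally compatible, so that $M$ is binomially distributed with $k-1$ trials and success probability $P(d,v)/d^N$, I obtain
\begin{align*}
p^{\rm C}_{\rm err}(v) = \mathbb{E}\!\left[\frac{M}{M+1}\right] = \sum_{m=0}^{k-1}\binom{k-1}{m}\left(\frac{P(d,v)}{d^N}\right)^{m}\left(1-\frac{P(d,v)}{d^N}\right)^{k-1-m}\frac{m}{m+1}\, .
\end{align*}

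The optimization over $v$ is then immediate: $P(d,v)$ is non-decreasing in $v$ and minimized at $v=1$, where $P(d,1)=d$ and the confusion probability equals $1/d^{N-1}$; since $m/(m+1)$ is increasing in $m$ and a larger success probability stochastically increases $M$, the error $\mathbb{E}[M/(M+1)]$ is monotone in $v$ and hence minimized by setting all inputs to a single value. Writing $p:=1/d^{N-1}$ and expanding for large $N$, the $m=1$ term contributes $\tfrac{k-1}{2}\,p\,(1-p)^{k-2}=\tfrac{k-1}{2}p+O(p^2)$, while every term with $m\ge 2$ is $O(p^2)$; since $p^2=d^2 d^{-2N}$ with $d$ fixed, this yields $p^{\rm C}_{\rm err}=\tfrac{k-1}{2d^{N-1}}+O(d^{-2N})$, as claimed.

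The main obstacle I anticipate is the reduction step, namely justifying rigorously that neither adaptivity nor randomization can beat the fixed-input parallel strategy, complicated here by the fact that the permutation $\pi$ is unknown and the hypotheses are therefore composite. I would dispose of the composite-hypothesis issue by observing that the symmetric single-value strategy has an error independent of $\pi$: the optimal decision depends only on the \emph{pattern} of coincidences among the output strings, never on the particular output values, so the worst case over $\pi$ coincides with the permutation-averaged case analyzed above. This lets me appeal directly to the standard classical channel-discrimination result for the optimality claim, while the explicit binomial computation supplies achievability and the exact leading-order expansion.
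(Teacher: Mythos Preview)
Your proposal is correct and follows essentially the same approach as the paper: identify the confusion probability $P(d,v)/d^N$ for each non-intermediary variable, model the number $M$ of accidental matches as a binomial, compute the error as $\mathbb{E}[M/(M+1)]$, and minimize over $v$ to get $v=1$. You are slightly more explicit than the paper in justifying the Bayesian-optimal decision rule and in arguing monotonicity in $v$ via stochastic ordering, but the combinatorial core and the expansion to leading order are identical.
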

 \begin{proof} Suppose that the $i$-th output variable is not the causal intermediary.  The probability that it takes values compatible with a permutation is $P(d,v)/d^{N}$, where $v$ is the number of distinct values of $A$ probed in the experiment and $P(d,v)=  d!/(d-v)!$ is the number of injective functions from a $v$-element set to a $d$-element set.  

 Hence, the probability that the $i$-th variable---and \emph{only} the $i$-th variable---is confusable with the true causal intermediary is    
\begin{align}
p_i    =     \frac{P(d,v)}{d^{N}}  \, \left[1- \frac{P(d,v)}{d^{N}}  \right]^{k-2} \, .
\end{align}
Similarly, the probability that that variables $i_1$, $i_2,  \dots,  i_t$ (and {\em only}  variables $i_1$, $i_2,  \dots,  i_t$) are confusable   with the true causal intermediary is 
\begin{align}p_{i_1i_2\dots i_t}  = \left[ \frac {P(d,v)}{d^{N}}\right]^t \,     \left[1-  \frac{P(d,v)} {d^{N}}\right]^{k-t-1} \,.
\end{align}  
When this situation arises, one has to resort to a random guess, with probability of error $t/(t+1)$. In total, the probability of error is equal to 
 \begin{align}
\nonumber  p^{\rm C}_{\rm err}     & =  \sum_{ t=1}^{k-1}   \,      \frac{  t}{t+1}   \,        \begin{pmatrix}    k-1  \\  t  \end{pmatrix}    \,  \left[\frac{P(d,v)}{d^{N}} \right]^t   \left[1- \frac{P(d,v)}{d^{N}}  \right]^{k-t-1}  \\
  &   =   \frac{(k-1)  \,  P(d,v)}{2d^{N}}     +    O \left(  \frac 1 {d^{2N}}\right)     \, . 
 \end{align}     
 Since the coefficient $P(d,v)$ is minimum when $v=1$, the optimal strategy is to initialize all input variables in the same value, thus obtaining probability of error $  p^{\rm C}_{\rm err}   = \frac{k-1}{2d^{N-1}}     +    O \left(  \frac 1 {d^{2N}}\right)     $.  
 \end{proof}

 \section{Optimal  quantum strategy  for $k$ hypotheses without reference system \label{app4}}

Here we provide the best strategy among all quantum strategies that do not use a reference system.  \begin{lemma}
 The best quantum strategy without reference system is  to divide the $N$ input variables into $N/d$ groups of $d$ elements each and, within each group, to prepare the singlet state
\begin{align}   
|S_d\>    =  \frac 1 {\sqrt d!}    \sum_{k_1,k_2, \cdots,  k_d}  \,  \epsilon_{k_1k_2\dots k_d}   \,   |k_1\>  |k_2\>  \cdots |k_d\> \,
\end{align} 
where $\epsilon_{k_1k_2\dots k_d}$ is the totally antisymmetric tensor and the sum ranges over all vectors in the computational basis. The corresponding error probability  is
\begin{align}\label{perrqck}
p_{\rm err}^{\rm QC}=\frac {k-1}{2d^N}  + O \left(  \frac 1 {d^{2N}}\right)   \, .
\end{align}
 \end{lemma}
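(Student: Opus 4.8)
The plan is to split the claim into an achievability half (the singlet partition attains the stated value) and an optimality half (no reference-free strategy does better), and to dispatch achievability first since it is by far the cleaner computation. The starting point is the defining invariance of the singlet, $U^{\otimes d}\ket{S_d}=(\det U)\ket{S_d}$, which gives $U^{\otimes d}\ket{S_d}\bra{S_d}U^{\dag\otimes d}=\ket{S_d}\bra{S_d}$ for every $U$. Writing $\ket{\sigma}:=\ket{S_d}^{\otimes N/d}$ for the full input, this makes the output of $\map C_{x,U}^{\otimes N}$ completely independent of the unknown gate: under the $x$-th hypothesis the $x$-th candidate carries the fixed pure state $\ket{\sigma}\bra{\sigma}$ and every other candidate is left maximally mixed. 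The task therefore reduces to discriminating the $k$ \emph{fixed} states $\rho_x=\ket{\sigma}\bra{\sigma}_x\otimes\bigotimes_{j\neq x}(I/d^N)_j$, one per hypothesis.

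The crucial observation is that these $k$ states commute. Completing $\ket{\sigma}$ to an orthonormal basis $\{\ket{0}=\ket{\sigma},\ket{1},\dots\}$ of each candidate's $d^N$-dimensional space makes every $\rho_x$ diagonal in the product basis, with weight $d^{-N(k-1)}$ on the strings $(i_1,\dots,i_k)$ obeying $i_x=0$ and zero elsewhere. The optimal measurement is thus projective in this basis and the problem becomes classical, formally identical to Appendix \ref{app3} with the confusion probability $P(d,v)/d^N$ replaced by $\bra{\sigma}(I/d^N)\ket{\sigma}=d^{-N}$. A candidate $j$ is compatible with the outcome iff $i_j=0$; the true hypothesis always lies in this tie, whose size is $1+\mathrm{Bin}(k-1,d^{-N})$, and a tie of $t$ candidates forces Bayes error $(t-1)/t$. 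Summing over tie sizes,
\begin{align}
p_{\rm err}^{\rm QC}=\sum_{t=1}^{k}\binom{k-1}{t-1}\,d^{-N(t-1)}\left(1-d^{-N}\right)^{k-t}\frac{t-1}{t}=\frac{k-1}{2d^N}+O\!\left(d^{-2N}\right),
\end{align}
the $t=2$ term dominating while every $t\geq 3$ term is $O(d^{-2N})$, which is exactly the asserted value.

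For optimality I would reuse the reduction of Appendix \ref{app2}: the worst-case error over the unknown gates is at least the average, which equals the error for the twirled channels, and these are covariant, so by Schur--Weyl the optimal input may be taken invariant and supported on a single Young diagram $\lambda_0$. Since the minimum error probability is a \emph{concave} function of the input state, its minimum over this convex set is attained at an extreme point, forcing the multiplicity-space factor to be pure---there is no reference available to purify a mixed one---which is precisely the regime of Eq.~(\ref{perrore}) at rank $r=1$. The $k$-hypothesis version of that calculation should then give a leading error $\tfrac{(k-1)d_{\lambda_0}}{2d^N}$, minimised by the one-dimensional trivial representation $d_{\lambda_0}=1$, the representation realised exactly by $\ket{S_d}^{\otimes N/d}$.

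The hard part will be this last $k$-hypothesis calculation for $d_{\lambda_0}>1$: there the $k$ output states no longer commute, because the maximally mixed factor $P_{\lambda_0}/d_{\lambda_0}$ smears coherently over the representation space, so the classical reduction used above breaks down and one faces a genuine $k$-state quantum discrimination. I would attack it with the pretty-good (square-root) measurement, exploiting the symmetry permuting the $k$ candidates to block-diagonalise the Gram matrix of the $\{\rho_x\}$; each pairwise overlap is of order $d_{\lambda_0}/d^N$ (matching the two-hypothesis value $\tfrac{d_{\lambda_0}}{2d^N}$ read off from Eq.~(\ref{perrore}) with $r=1$), so the $k$-outcome error reproduces the $(k-1)$-fold pairwise contribution and is strictly increasing in $d_{\lambda_0}$, confirming that the trivial representation---and hence the singlet partition---is optimal.
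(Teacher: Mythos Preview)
Your achievability half is correct and clean. The paper does not split the argument this way; it computes the exact minimum error for \emph{every} Young diagram $\lambda_0$ via the Yuen--Kennedy--Lax dual and then minimises over $\lambda_0$, but for the trivial representation its answer coincides with your classical-ties computation.

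Your optimality half has a real gap rooted in a false claim. You assert that for $d_{\lambda_0}>1$ ``the $k$ output states no longer commute, because the maximally mixed factor $P_{\lambda_0}/d_{\lambda_0}$ smears coherently over the representation space,'' and on that basis you abandon the direct computation in favour of a pretty-good-measurement argument that you do not actually carry out. But the states
\[
\rho_x=\left(\frac{P_{\lambda_0}}{d_{\lambda_0}}\otimes\Psi_{\lambda_0}\right)_x\otimes\left(\frac{I}{d}\right)^{\otimes N(k-1)}_{\bar x}
\]
commute for \emph{every} $\lambda_0$: each is a tensor product over the $k$ candidate registers, and on any single register at least one of $\rho_x,\rho_y$ is a scalar multiple of the identity. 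The paper states this commutativity explicitly and uses it to solve the Yuen--Kennedy--Lax dual in closed form, obtaining $p_{\rm err}=\tfrac{(k-1)d_{\lambda_0}}{2d^N}+O(d^{-2N})$ for arbitrary $\lambda_0$, minimised at $d_{\lambda_0}=1$. In fact your own classical-ties argument extends verbatim: $P_{\lambda_0}/d_{\lambda_0}\otimes\Psi_{\lambda_0}$ is uniformly mixed on a $d_{\lambda_0}$-dimensional support, a fake candidate lands in that support with probability $d_{\lambda_0}/d^N$, and the tie-breaking sum of Appendix~\ref{app3} gives $\tfrac{(k-1)d_{\lambda_0}}{2d^N}$ directly. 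The ``hard part'' you anticipate does not exist; the missing step is precisely the commutativity observation you already exploited in the $d_{\lambda_0}=1$ case.
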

\begin{proof} Let us denote by $x$ the ``true causal intermediary", namely the quantum system $B_x$ whose state depends  on the state of $A$, and by $\map C_{x,U}$ the channel  defined by the relation
\begin{align}
\map C_{x,U}  (\rho)   =     \big[\map U (\rho)\big]_x  \otimes \left(\frac{I}d \right)^{\otimes (k-1)}_{\overline x} \, ,
\end{align}  
where the subscript $x$ indicates that the operator $\map U(\rho)$ acts on the Hilbert space of system $B_x$ and the subscript $\overline x$  indicates that the operator acts on the  Hilbert space of the remaining $k-1$ systems. 

By the same arguments used in Lemma \ref{lem:averagepar}, the discrimination of the causal hypotheses can be reduced  to the discrimination of the channels 
\begin{align}
\map C_x^{(N)}  = \int \d U \,   \map C_{x,  U}^{\otimes N} \, ,  \qquad x\in  \{1,\dots, k\} \, .
\end{align}
Again, one can show that, for every reference system $R$,  the optimal state can be chosen of the form 
\begin{align}
\rho  =       \frac{  P_{\lambda_0}}{d_{\lambda_0}} \otimes    \Psi_{\lambda_0  R} \, ,
\end{align}
where $P_{\lambda_0}$ is the projector on the $\grp {SU} (d)$ representation space with Young diagram $\lambda_0$, $d_{\lambda_0}  =  \Tr[  P_{\lambda_0}]$, and $\Psi_{\lambda_0 R}$ is a pure state of the composite system $\spc M_{\lambda_0} \otimes \spc H_R$,  $\spc M_{\lambda_0}$  being the   $\grp {SU(d)}$ multiplicity space associated to $\lambda_0$.   

Here we consider the case where the reference system $R$ is trivial.  In this case, the problem is to distinguish among the states 
\begin{align}
\rho_x  :=         \left(     \frac{  P_{\lambda_0}}{d_{\lambda_0}} \otimes    \Psi_{\lambda_0} \right)_x\otimes \left(\frac  I d  \right)^{\otimes   N\,  (k-1)}_{\overline x}  \qquad x\in  \{ 1,\dots,  k\}\, . 
\end{align}
Using the Yuen-Kennedy-Lax formula \cite{yuen1975optimum},  the  maximum  success probability in distinguishing among these states is 
\begin{align*}
p_{\rm succ}    &  =    \min  \Big \{   \Tr [\Gamma]  ~|~ \Gamma  \ge \frac 1  k  \,\rho_x  \, ,  \quad \forall x \in  \{1,\dots, k\}  \Big\}   \, .   \end{align*}

Note that the states $\{\rho_x\,,  k=1,\dots, k\}$ commute.  Hence,   they can be diagonalized in the same basis and the operator $\Gamma$ can be chosen to be diagonal in that basis without loss of generality.  With a similar argument, one can restrict the search for the optimal $\Gamma$ over  the operators of the form
\begin{align}
\Gamma=\bigoplus_{\lambda_1,\lambda_2,\dots,\lambda_k}  \,  P_{\lambda_1}  \otimes P_{\lambda_2}  \otimes \cdots \otimes P_{\lambda_k}  \otimes   \Gamma_{\lambda_1,\dots, \lambda_k} \, , 
\end{align}
where   $\Gamma_{\lambda_1,\dots, \lambda_k}$ is an operator acting on the tensor product  space $\spc M_{\lambda_1} \otimes \spc M_{\lambda_2}\otimes \cdots \otimes \spc M_{\lambda_k}$.  Note that the operators $\Gamma_{\lambda_1,\dots,  \lambda_k}$ can be set to zero for all $k$-tuples  $(\lambda_1,\dots,  \lambda_k)$ such that $\lambda_i  \not  =  \lambda_0$ for every $i\in  \{1,\dots,  k\}$.  
   Now, suppose that $\lambda_i  = \lambda_0 $ and $\lambda_j  \not = 0$ for the remaining $j\not =  i$.   In this case, we must have 
     \begin{align}
  \Gamma_{\lambda_1, \dots, \lambda_{i-1}  \lambda_0  \lambda_{i+1}  \dots    \lambda_k}  \,     \ge      \frac1 {k d_{\lambda_0}    d^{N(k-1)} }  \,   \,  Q_{\lambda_1} \otimes \cdots \otimes Q_{\lambda_{i-1}}  \otimes \Psi_{\lambda_0}  \otimes   Q_{\lambda_{i+1}}  \otimes \cdots \otimes Q_{\lambda_k} \, ,
  \end{align} 
where $Q_{\lambda}$ is the identity operator on the multiplicity space $\spc M_\lambda$.  
Taking the trace on both sides, we obtain the relation  
   \begin{align}
   \Tr \left[   \Gamma_{\lambda_1, \dots,   \lambda_{i-1}  \lambda_0  \lambda_{i+1}  \dots      \lambda_k } \right] \,     \ge      \frac1 {k d_{\lambda_0}    d^{N(k-1)} } \,     m_{\lambda_1}   \, \dots  m_{\lambda_{i-1} }  \,m_{\lambda_{i+1}}  \dots  m_{\lambda_k} \, .   \end{align}

Similar bounds can be found for the operators $\Gamma_{\lambda_1,\dots, \lambda_k}$ where two or more indices are equal to $\lambda_0$.  For example, consider  the terms where $\lambda_i  =  \lambda_j  =  \lambda_0$, while $\lambda_l  \not  =  0$ for the remaining values of $l$.  In this case, we have the conditions
\begin{align}
\label{uno} \Gamma_{\lambda_1, \dots,      \lambda_k}    &  \ge        \frac1 {k d_{\lambda_0}    d^{N(k-1)} } \,    \Big(     \Psi_{\lambda_0}  \otimes  Q_{\lambda_0} \Big)_{ij} \otimes \Big ( Q_{\lambda}\Big )_{\overline{ij}}  \\
   \label{due}\Gamma_{\lambda_1, \dots,      \lambda_k}    &  \ge        \frac1 {k d_{\lambda_0}    d^{N(k-1)} } \,    \Big(      Q_{\lambda_0} \otimes  \Psi_{\lambda_0}  \Big)_{ij} \otimes \Big ( Q_{\lambda}\Big )_{\overline{ij}}   \, ,
\end{align}   
where we introduced the shorthand notation 
\begin{align} \Big ( Q_{\lambda}\Big )_{\overline{ij}}  :  =  Q_{\lambda_1}  \otimes  \cdots \otimes Q_{\lambda_{i-1}}  \otimes  Q_{\lambda_{i+1}}  \otimes \cdots \otimes Q_{\lambda_{j-1}} \otimes Q_{\lambda_{j+1}} \otimes  \cdots \otimes Q_{\lambda_k} \, .
\end{align}
We now combine conditions (\ref{uno}) and (\ref{due}) can be combined into a single condition. To this purpose, we expand $Q_{\lambda_0}$ as
\begin{align*}
Q_{\lambda_0}=\Psi_{\lambda_0} + \Psi^\perp_{\lambda_0} \, ,
\end{align*}
which allows for rewriting (\ref{uno}) and (\ref{due}) as
\begin{align}
\Gamma_{\lambda_1, \dots,      \lambda_k}    &  \ge        \frac1 {k d_{\lambda_0}    d^{N(k-1)} } \,    \Big(     \Psi_{\lambda_0}  \otimes  \Psi_{\lambda_0} + \Psi_{\lambda_0}  \otimes \Psi^\perp_{\lambda_0} \Big)_{ij} \otimes \Big ( Q_{\lambda}\Big )_{\overline{ij}}  \\
  \Gamma_{\lambda_1, \dots,      \lambda_k}    &  \ge        \frac1 {k d_{\lambda_0}    d^{N(k-1)} } \,    \Big(      \Psi_{\lambda_0}  \otimes  \Psi_{\lambda_0} + \Psi^\perp_{\lambda_0} \otimes \Psi_{\lambda_0}    \Big)_{ij} \otimes \Big ( Q_{\lambda}\Big )_{\overline{ij}}   \, .
\end{align} 
Now, since $\Psi_{\lambda_0}  \otimes \Psi^\perp_{\lambda_0}$ and $\Psi^\perp_{\lambda_0} \otimes \Psi_{\lambda_0}$ are orthogonal vectors, it is also true that
\begin{align*}
\Gamma_{\lambda_1, \dots,      \lambda_k}    &  \ge        \frac1 {k d_{\lambda_0}    d^{N(k-1)} } \,    \Big(     \Psi_{\lambda_0}  \otimes  \Psi_{\lambda_0} + \Psi_{\lambda_0}  \otimes \Psi^\perp_{\lambda_0} + \Psi^\perp_{\lambda_0} \otimes \Psi_{\lambda_0} \Big)_{ij} \otimes \Big ( Q_{\lambda}\Big )_{\overline{ij}} \, ,
\end{align*}
which can be rewritten as  
\begin{align}
 \Gamma_{\lambda_1, \dots,      \lambda_k}      \ge        \frac1 {k d_{\lambda_0}    d^{N(k-1)} } \,    \Big(   Q_{\lambda_0}\otimes   Q_{\lambda_0}  -  \Psi^\perp_{\lambda_0}  \otimes  \Psi^\perp_{\lambda_0} \Big)_{ij} \otimes \Big ( Q_{\lambda}\Big )_{\overline{ij}}  \, .
\end{align}
Tracing on both sides, one obtains 
\begin{align}
\Tr \left[  \Gamma_{\lambda_1, \dots,      \lambda_k}    \right]  \ge        \frac1 {k d_{\lambda_0}    d^{N(k-1)} } \,        ( 2m_{\lambda_0}-1 )    \,  \left(\prod_{l\not  =  i,j}  \, m_{\lambda_{l}}\right) \, .
\end{align}

Likewise, a term with $\lambda_{i_1}  =  \lambda_{i_2}  =  \dots  =  \lambda_{i_t}  =  \lambda_0$ and all the remaining $\lambda_l$ different from $\lambda_0$ will satisfy the condition
\begin{align}
 \Gamma_{\lambda_1, \dots,      \lambda_k}      \ge        \frac1 {k d_{\lambda_0}    d^{N(k-1)} } \,    \Big(   Q_{\lambda_0}^{\otimes t}  -  \Psi^{\perp   \otimes t}_{\lambda_0}   \Big)_{i_1 \dots  i_t} \otimes \Big ( Q_{\lambda}\Big )_{\overline{i_1\dots i_t}}  \, ,
 \end{align}
leading to the inequality  
\begin{align}
\Tr \left[  \Gamma_{\lambda_1, \dots,      \lambda_k} \right] \ge  \frac1 {k d_{\lambda_0}    d^{N(k-1)} } \,    \left[  m_{\lambda_0}^t   -  (m_{\lambda_0}  -1)^t\right]     \, \prod_{l\not  =  i_1,\dots, i_t} \,     m_{\lambda_l} \, .   
\end{align}
Note that one can choose the operator $\Gamma$ in such a way that the equality holds in all bounds. 
With this choice, the probability of success is  
\begin{align}
\nonumber p_{\rm succ}  & = \sum_{\lambda_1,\dots,  \lambda_k}      d_{\lambda_1}  \dots  d_{\lambda_k}   \,  \Tr[\Gamma_{\lambda_1,\dots \,  \lambda_k}] \\
 \nonumber  &  =    \sum_{t=1}^{k}    \begin{pmatrix}   k \\t  \end{pmatrix}     \,        \frac{\left(d_{\lambda_0}m_{\lambda_0}  \right)^t} {k d_{\lambda_0}     d^{N(k-1)} } \,    \left[  1   -  \left(1 -\frac 1 {m_{\lambda_0}}  \right)^t\right]     \,    \left(d^{N}  - d_{\lambda_0}  m_{\lambda_0}\right)^{k-t}  \\
   &  =     \frac{  d^N } {k d_{\lambda_0}    } \, \sum_{t=1}^{k}    \begin{pmatrix}   k \\t  \end{pmatrix}     \,           p_{\lambda_0}^t \,   (1-p_{\lambda_0})^{k-t} \left[  1   -  \left(1 -\frac 1{m_{\lambda_0}}\right)^t\right]   \, ,  
\end{align}
having defined the Schur-Weyl measure $p_{\lambda} :  =    d_\lambda  m_\lambda/ d^N$. 

Expanding the term in square brackets, we obtain  
\begin{align}
\nonumber p_{\rm succ}  & =    \frac{  d^N } {k d_{\lambda_0}    }  \sum_{t=1}^{k}    \begin{pmatrix}   k \\t  \end{pmatrix}     \,     \,    p_{\lambda_0}^t \,   (1-p_{\lambda_0})^{k-t}  \,      \left[   \sum_{s=1}^t    \,  \begin{pmatrix}   t \\s  \end{pmatrix}   \,  \frac{(-1)^{s+1}}{  m_{\lambda_0}^s}  \right] \\
\nonumber &    =         \frac{  d^N } {k d_{\lambda_0}    }  \,  \sum_{s=1}^k\,       \,  \frac{(-1)^{s+1}}{  m_{\lambda_0}^s}    \,  \left[   \sum_{t=s}^{k}                \begin{pmatrix}    k \\t  \end{pmatrix}     \,   \begin{pmatrix}   t \\s  \end{pmatrix}  \,    p_{\lambda_0}^t \,   (1-p_{\lambda_0})^{k-t} \right]  \\
\nonumber &    =     \frac{  d^N } {k d_{\lambda_0}    } \,     \sum_{s=1}^k\,        \frac{(-1)^{s+1}  \,   p_{\lambda_0}^s }{  m_{\lambda_0}^s}      \begin{pmatrix}    k \\s  \end{pmatrix}   \,  \\
\nonumber &    =     \frac{  d^N } {k d_{\lambda_0}    } \,   \left[ 1  -   \left(  1  -\frac{p_{\lambda_0}}{m_{\lambda_0}} \right)^k \right]   \\
&  =  1   -  \frac{(k-1) d_{\lambda_0}}{2 d^N}   +   O\left[ \left(    \frac {d_{\lambda_0}}{d^N} \right)^2\right] \, .
\end{align}\
Hence, the error probability  is 
\begin{align}
p_{\rm err}  =    \frac{(k-1) d_{\lambda_0}}{2 d^N}   +   O\left[ \left(    \frac {d_{\lambda_0}}{d^N} \right)^2\right]  \, .
\end{align} 
Again, the optimal choice for $N$ multiple of $d$ is to pick $\lambda_0$ to be the trivial representation of $\grp {SU} (d)$, in which case the  error probability is   
\begin{align}
p_{\rm err}  =    \frac{(k-1)}{2 d^N}   +   O \left(    \frac {1}{d^{2N}} \right)  \, .
\end{align}   
Note that, however, the choice of representation $\lambda_0$ does not affect the asymptotic rate: indeed, for every $\lambda_0$ we have 
\begin{align}
\nonumber R   & =    - \liminf_{N\to \infty}  \frac {  \log p_{\rm err}}{N}   \\
\nonumber  &  =     \log d     -   \liminf_{N\to \infty}  \frac {  \log  [(k-1) \, d_{\lambda_0}/2 ]}{N}  \\
 \nonumber &  =  \log d  \\
  &  \equiv R_{\rm C} \, .
\end{align}
Note also that the rate is independent of the number of hypotheses, as in the case of the Chernoff bound for quantum states \cite{li2014second}.  \end{proof}

 \section{Optimal quantum strategy for $k$ causal hypotheses with arbitrary reference system \label{app5}}  
Here we provide  the optimal  quantum strategy using a reference system.  We will prove the following lemma:
\begin{lemma} 
The optimal input state is 
\begin{align}
|\rho \> =    \frac 1 {\sqrt {G_{N,d}}}  \,  \sum_{i=1}^{G_{N,d}}  \,     \left( |S_d\>^{\otimes N/d}\right)_i \otimes |i\>  \, ,
\end{align}   
where $i$ labels the different ways to divide $N$ identical objects into groups of $d $ elements,  $G_{N,d}=\frac{N!}{(d!)^{N/d} (N/d)!}$ is the total number of such ways,  $ \left( |S_d\>^{\otimes N/d}\right)_i$ is the product of $N/d$  singlet states arranged according to the configuration  $i$,  and $\{  |i\>  \, , \,  i=  1,\dots,  G_{N,d}\}$ are orthogonal states of the reference system, chosen to be of dimension equal to or larger than $G_{N,d}$. The corresponding error probability is upper bounded as
\begin{align}
 p^{\rm Q}_{\rm{err}}  (r)  &\leq \frac{k-1}{2d^N m(N,d)}   \label{quantumk}
\end{align}
 where   $m(N,d)$ is the dimension of the multiplicity space of the trivial representation, given by (for $N/d$ being an integer)
 \begin{align}\label{asymptoticmulti}
 m(N,d)    =    d^N  \,   \left[ \frac{ d^{\frac{d^2}2}    \, e^{\frac{d(d-1)}2}\,  \prod_{i=1}^d (d-i)!} {(2  \pi)^{\frac{d-1}2} \,  N^{\frac{d^2-1}2}} \right]  \,  c(N)  \, ,
  \end{align}
  with $\lim_{N\to \infty} c(N)  = 1$. 
\end{lemma}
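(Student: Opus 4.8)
The plan is to mirror the two-step structure already developed for the two-hypothesis problem (Appendices \ref{app1} and \ref{app2}) and for the reference-free $k$-hypothesis problem (Appendix \ref{app4}), upgrading the latter to include the reference system. First I would invoke the covariance argument of Lemma \ref{lem:averagepar}, adapted to $k$ hypotheses exactly as in Appendix \ref{app4}, to reduce the worst-case error probability to the error probability for the twirled channels $\map C^{(N)}_x = \int \d U\, \map C_{x,U}^{\otimes N}$, and to restrict the optimal input to the form $\rho = P_{\lambda_0}/d_{\lambda_0} \otimes \Psi_{\lambda_0 R}$. The majorisation analysis of Appendix \ref{app2} (Lemma \ref{lem:majorisation} and Proposition \ref{prop:optimaltrivial}) then fixes $\lambda_0$ to be the trivial representation of $\grp{SU}(d)$ when $N$ is a multiple of $d$, and forces $\Psi_{\lambda_0 R}$ to be maximally entangled between the multiplicity space $\spc M_{\lambda_0}$ and $R$. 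Finally, the last proposition of Appendix \ref{app2} identifies the state $|\rho\>$ of the statement with precisely this optimal maximally entangled state, which settles the optimality claim and, via the Stirling estimate of Appendix \ref{app2}, supplies the expression (\ref{asymptoticmulti}) for $m(N,d)$.

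It then remains to bound the error probability. Writing $m:=m_{\lambda_0}=m(N,d)$ and using that the trivial representation is one-dimensional, the twirled channels send the optimal state to the output states
\begin{align}
\rho_x = \big(\Psi_{\lambda_0 R}\big)_{B_x R} \otimes \bigotimes_{j \ne x} \Big(\tfrac{I}{d}\Big)^{\otimes N}_{B_j}\, , \qquad x \in \{1,\dots,k\}\, ,
\end{align}
in which the signal block $B_x$ carries the maximally entangled state $\Psi_{\lambda_0 R}$ supported on the $m$-dimensional space $\spc M_{\lambda_0}$, while each of the remaining $k-1$ decoy blocks is maximally mixed and uncorrelated with $R$. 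I would then apply the Yuen-Kennedy-Lax formula as in Appendix \ref{app4}, the only new ingredient being that the reference now purifies the signal, so that its marginal on $\spc M_{\lambda_0}$ is the full maximally mixed state $I_m/m$ rather than a rank-one projector.

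The upper bound follows by controlling the relevant pairwise overlaps. The natural effect for detecting block $j$ is the projector $E_j$ onto the maximally entangled state $\Psi_{\lambda_0}$ shared between the multiplicity space of $B_j$ and $R$; a direct computation gives $\Tr[E_x \rho_x] = 1$ on the correct block and $\Tr[E_j \rho_x] = 1/(m\, d^N)$ on each of the $k-1$ decoys, the extra factor $1/m$ relative to the reference-free analysis of Appendix \ref{app4} being exactly the gain produced by spreading the signal coherently over the whole multiplicity space. Bounding the error by half the total false-detection weight, $p^{\rm Q}_{\rm err}\le \tfrac12\sum_{j\ne x}\Tr[E_j\rho_x]=(k-1)/(2 d^N m)$, reproduces the announced bound; for $k=2$ this recovers the leading order of the exact two-hypothesis result of Appendix \ref{app2} up to the expected slack. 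Substituting the Stirling value (\ref{asymptoticmulti}) of $m(N,d)$ and taking the logarithm then yields the rate $R_{\rm Q}=2\log d$.

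The main obstacle is precisely the last step. Because the signal block is entangled with $R$, the output states $\rho_x$ no longer commute, since both $\rho_x$ and $\rho_{x'}$ act non-trivially, and incompatibly, on the shared reference; the clean simultaneous-diagonalisation argument of Appendix \ref{app4} is therefore unavailable, and the Yuen-Kennedy-Lax dual cannot be solved in closed form, which is why only an upper bound is claimed. The technical heart is then to handle the non-orthogonality of the effects $\{E_j\}$, which all share $R$: one must verify that their pairwise overlaps are themselves of order $1/(m d^N)$, so that correcting $\{E_j\}$ into a genuine POVM perturbs the normalisation only at subleading order and the leading $(k-1)/(2 d^N m)$ behaviour survives.
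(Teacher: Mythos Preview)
Your reduction of the problem (first paragraph) is correct and matches the paper: covariance, restriction to $P_{\lambda_0}/d_{\lambda_0}\otimes\Psi_{\lambda_0 R}$, majorisation to select the trivial representation, and identification of $|\rho\rangle$ with the maximally entangled state on $\spc M_{\lambda_0}\otimes R$ are exactly how the paper sets things up. The Stirling estimate for $m(N,d)$ is also reused verbatim.

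The error-probability argument, however, has a genuine gap, and the paper does \emph{not} proceed the way you suggest. Your bound $p_{\rm err}\le\tfrac12\sum_{j\ne x}\Tr[E_j\rho_x]$ is not a theorem: the effects $E_j$ all act nontrivially on the shared reference, they do not commute, and $\sum_j E_j$ need not be $\le I$, so there is no ``half the false-detection weight'' inequality to invoke. You correctly identify this as the obstacle, but the proposed fix---perturbatively correcting $\{E_j\}$ into a POVM---is not carried out, and even if it were, a square-root or Barnum--Knill normalisation would generically cost a constant factor, so recovering the precise coefficient $1/2$ in $(k-1)/(2d^Nm)$ this way is not automatic.

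The paper avoids this difficulty by a different, fully explicit route. First (Step~1) it performs the \emph{orthogonal} Schur--Weyl projection on each output block $B_j$, recording the subset $\set A\subseteq\{1,\dots,k\}$ of blocks that land in the $\lambda_0$ sector; this is a commuting, lossless measurement that does not touch $R$. Conditional on $\set A$ with $|\set A|=l$, the remaining problem is to discriminate the $l$ states $\sigma_x=\Phi_x\otimes(Q_{\lambda_0}/m)^{\otimes(l-1)}$ on $\spc M_{\lambda_0}^{\otimes l}\otimes R$, weighted by the binomial factor $p_{\lambda_0}^{l-1}(1-p_{\lambda_0})^{k-l}$. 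Second (Step~2) each such block is further split by the \emph{type} $\st n$ of the eigenbasis label on the decoy systems---again an orthogonal, lossless decomposition. Third (Step~3), within each type block, the pure states $|\Phi_{x,\st j}\rangle$ have pairwise overlaps equal to either $0$ or $1/m$, and an explicit Gram--Schmidt orthonormalisation gives $|\langle\Psi_{x,\st j}|\Phi_{x,\st j}\rangle|^2=1-(x-1)/m^2+O(m^{-3})$; averaging over $x$ yields $p^{(\st n)}_{\rm succ}\ge 1-(l-1)/(2m^2)$. The factor $1/2$ you wrote down is thus not a general overlap heuristic but the arithmetic $\tfrac1l\sum_{x=1}^l(x-1)=(l-1)/2$ coming from the Gram--Schmidt bookkeeping. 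Re-averaging over $l$ with the binomial weights (Step~4) then gives $p_{\rm err}\le(k-1)p_{\lambda_0}/(2m^2)=(k-1)/(2d^Nm)$.

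In short: the missing idea in your proposal is the two-level orthogonal decomposition (by $\lambda_0$-sector subset, then by type), which replaces the uncontrolled non-commuting POVM correction by a sequence of lossless projective steps followed by an explicit Gram--Schmidt construction whose overlap structure is simple enough to compute directly.
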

The proof consists of four  steps:  
\medskip

{\em Step 1:  reduction to the permutation register.}  We apply  $N$  uses of the channel $  {\map{C}}_x$  to a state of the optimal form (\ref{optimalstatelambda}), where the pure state $|\Psi_{\lambda_0}\>$ is set to be  the maximally entangled state $|\Phi_{\lambda_0}\>  = \sum_{i=1}^{m_{\lambda_0}}   |i\>\otimes |i\> /\sqrt{m_{\lambda_0}}$. The output state is   
\begin{align}
\rho_x^{\rm out}   =  \left(   \frac{P_{\lambda_0}}{d_{\lambda_0}} \otimes   \Phi_{\lambda_0} \right)_x  \otimes  \left(   \frac  I   d\right)^{\otimes N  (k-1)}_{\overline x}  \, ,    
  \end{align}
 where the subscript  $x$  indicates  that the corresponding operator acts on the   $N$ Hilbert spaces with label  $x$  (and on the  reference), while the subscript $\overline x$ indicates that the corresponding operator acts on  all systems except those with label  $x$.   
 
Breaking down the identity operator as $I   =     (P_{\lambda_0}\otimes Q_{\lambda_0})   \oplus ( I-   P_{\lambda_0}\otimes Q_{\lambda_0}  )$, we can decompose $\rho_x^{\rm out}$ into orthogonal blocks where  exactly $l$ output systems are  in the sector $\lambda_0$. 
Explicitly, we have 
\begin{align}\label{blocks}
\rho_x^{\rm out}   =   \bigoplus_{l=1}^k      \bigoplus_{\set A  \in  \set S_{l}}      \,  q  (\set A|x )  \, \Big (       \rho_{\set A,x}    \otimes \chi_{\overline{\set A}} \Big) \, ,    
  \end{align}
where  $\set S_{l}$ denotes the set of all  $l$-element subsets  of $\{1,2,\dots, k\}$, $\rho_{x,\set A}$ is the  quantum state defined by
\begin{align}\label{Astates}
\rho_{\set A, x}   & =    \left (   \frac{P_{\lambda_0}}{d_{\lambda_0}}     \otimes      \Phi_{\lambda_0}    \right)_x   \otimes  \left[   \bigotimes_{i\in\set A  \, , i\not  =  x}    \,        \left(  \frac{P_{\lambda_0} }{d_{\lambda_0}}   \otimes \frac{ Q_{\lambda_0} }{m_{\lambda_0}} 
   \right)_i   \right]    \, , 
  \end{align}
  $\chi_{\overline {\set A}}$ is the quantum state defined by
   \begin{align}
\chi_{\overline{\set A}}         &  =  \bigotimes_{i\not \in\set A  }    \, \left(         \frac{      I^{\otimes N}   -      P_{\lambda_0}    \otimes   Q_{\lambda_0}  
   }{d^N   -  d_{\lambda_0}  m_{\lambda_0}}    \right)_i      \, ,  
  \end{align}
  and  $q  (\set A|x)$ is the conditional probability distribution defined by
\begin{align}\label{qA}
q  (\set A|x)   =   \left\{  
 \begin{array}{ll}   p_{\lambda_0}^{l-1}  \,  (1-p_{\lambda_0})^{k-l}    \qquad & {\rm for}~ x\in\set A  \\
 &\\ 
    0     \qquad &  {\rm for}~x\not \in \set A  \, ,  
    \end{array}
    \right.    
\end{align}
 $p_{\lambda}  :=   {  d_\lambda  m_\lambda}/{d^N}$ being the Schur-Weyl measure,

From Eq. (\ref{blocks}) one can see that blocks with  different  values of $l$ and/or different subsets $\set A$ are orthogonal for every value of $x$.   Hence, one can extract first the information about the block and then  the information about $x$.    Mathematically, this means performing a non-demolition measurement with outcomes $(l,  \set A)$, which projects the state into the block  labelled by $(l,\set A)$. 
When such a measurement is performed on the state $\rho_x^{\rm out}$, the outcome $(l,\set A)$ can occur  only if $\set A$ contains $x$---in which case the probability of occurrence is $q (\set A|x)$.   
Conditionally on the outcome, the system is left in the state $\rho_{\set A, x} \otimes \chi_{\overline{\set A}}$ and the problem is to identify $x$ within the set $\set A$.     
Hence, the probability of success for fixed $x$ is 
\begin{align}\label{probA}
p_{\rm succ}  (x)   =      \sum_{l=1}^k      \sum_{\set A  \in  \set S_{l}}      \,  q (\set A|x)    ~     p_{\rm succ}^{(\set A)}  (x)  \, ,     
\end{align}  
where $p_{\rm succ}^{(\set A)}  (x)$ is the probability of correctly identifying the state $\rho_{\set A, x} \otimes \chi_{\overline{\set A}}$.  

Note that,       for $x\in \set A$,  the optimal success probability $p_{\rm succ}^{(\set A)} (x)$ does not depend on the specific subset $\set A$, but only on its cardinality $l$: indeed, $p_{\rm succ}^{(\set A)}  (x)$ coincides with  the probability $p_{\rm succ}^{(l)}  (x)$ of correctly identifying the label of the states 
   \begin{align}\label{lstatescompact}
  \sigma_x   =     \Phi_{x}  \otimes \left(   \frac{ I_m}{m} \right)^{\otimes {l-1}}_{\overline x}    \, ,  \qquad x\in  \{1,2,\dots,  l\} \, ,
  \end{align}
  where  we used the shorthand notation $\Phi_x  :  =  ( \Phi_{\lambda_0})_x$, and used  $I_m$ to denote the identity matrix in dimension $m$, with  $m=  m_{\lambda_0}$ (these are the states that arise from Eq. (\ref{Astates}) after discarding the representation spaces).    We denote  by $p_{\rm succ}^{(l)}$ the average success probability
  \begin{align}
  p_{\rm succ}^{(l)}   =  \frac 1 l  \,  \sum_{x=1}^l  \,  p_{\rm succ}^{(l)}  (x) \, .         
  \end{align}
  
 Averaging the success      probability (\ref{probA}) over $x$, we obtain 
 \begin{align}
\nonumber 
p_{\rm succ}  &   =  \frac 1k  \,\sum_{x=1}^k  \, p_{\rm succ}   (x)  \\
 \nonumber    &   =     \frac 1k  \,\sum_{x=1}^k     \sum_{l=1}^k      \sum_{\set A  \in  \set S_{l}}      \,  q ({\set A}| x)    ~     p_{\rm succ}^{(\set A)}  (x)     \\
  \nonumber    &   =     \frac 1k      \sum_{l=1}^k      \sum_{\set A  \in  \set S_{l}}   \,\sum_{x\in \set A}  \, p_{\lambda_0}^{l-1} \,  (1-p_{\lambda_0})^{k-l}       ~     p_{\rm succ}^{(\set A)}   (x)    \\
   \nonumber    &   =     \frac 1k      \sum_{l=1}^k   \sum_{\set A\in\set S_l}         \, p_{\lambda_0}^{l-1} \,  (1-p_{\lambda_0})^{k-l}     \,  l  \,      p_{\rm succ}^{(l)}              \\
 \nonumber 
 & =  \frac 1k \sum_{l=1}^k   \Big|   \set S_l  \Big| ~  p_{\lambda_0}^{l-1} \,    (1-p_{\lambda_0})^{k-l}  \, l\,   p_{\rm succ}^{(l)}  \\
  \label{probl} & = \frac 1k \sum_{l=1}^k \begin{pmatrix} k \\   l \end{pmatrix}    \,  p_{\lambda_0}^{l-1} \,    (1-p_{\lambda_0})^{k-l}  \, l \,   p_{\rm succ}^{(l)}   \, .  
\end{align}

The next step is to compute $p_{\rm succ}^{(l)}$.

\medskip 

{\em Step 2: reduction to type states. }  The state  $\sigma_x$ in Eq. (\ref{lstatescompact})  is the product of a maximally entangled state  and a  $(l-1)$ copies of the maximally mixed state. 
    The latter can be diagonalized  as  
    \begin{align}
   \left(  \frac{I_m}m\right)_{\overline x}^{\otimes (l-1)}  =  \frac{1}{m^{l-1}} \,   
    \sum_{{\st j} }    |\st j\>\<\st j| \, , 
    \end{align}
  where  $|\st j\>$ is the basis vector  $|\st j\>   =  |j_1\>\otimes |j_2\> \otimes \dots \otimes  |j_{l-1}\>$ corresponding to the sequence     
  ${\st j}  =  (j_1,j_2,\dots,  j_{l-1})   \in  \{1,\dots,  m\}^{\times (l-1)}$.

Now, let us introduce the shorthand   
\begin{align}
|\Phi_{x,\st j\>}     :=   |\Phi\>_{x}  \otimes  |  \st j\>_{\overline x}    \, .
\end{align}
Note that for $x\le y$ one has 
  \begin{align}\label{product}
\<  \Phi_{x, \st j}    |\Phi_{y,\st k}  \>        & =  
\left\{  
\begin{array}{lll}
1  \qquad & x=y \, ,  \quad  &\st j =\st k\\
& & \\ 
\frac 1 m  &  x\not  =  y  \, , \quad      &   j_i  =   k_i  \,,  ~\forall i   <x \\
   &&   j_{i}   =  k_{i+1}  \, , \forall x  \le  i  <  y-1\\
&&  j_{y-1}  =  k_x \\
   &&   j_{i}=   k_i   \, , \forall i\ge y  \\
& &\\ 
0   & & {\rm otherwise}     \, .      
\end{array}
\right.   
\end{align}
 
 Let $\st n  =  (n_1, n_2,  \dots,  n_m)$ be a partition of $l-1$ into $m$ nonnegative integers.  Recall that the sequence $\st j  =  (j_1,j_2,\dots,  j_{l-1})$ is said to be {\em of type $\st n$} if it $n_1$ entries of $\st j$ are equal to $1$, $n_2$ entries are equal to $2$, and so on.   
 Eq. (\ref{product}) tells us that  the vectors $ |\Phi_{x, \st j}\>$ and $|\Phi_{y,\st k}\>$  are orthogonal whenever the sequences $\st j$ and $\st k$ are of different type.   Using this fact, we can define the orthogonal subspaces
 \begin{align}\label{typesubspace}
 \spc H_{\st n}     =  \Span  
 \Big\{    |\Phi_{x,\st j}\>       ~|~     x   \in  \{1,\dots, l \}   \, ,  ~  \st j  \in  \set S_{ \st n}    \, \Big\}  \, ,
 \end{align}
 where $\set S_{\st n}$ is the set of all sequences of length $l-1$ and of type $\st n$. 
 Hence, we can decompose the states $\sigma_x$ in Eq. (\ref{lstatescompact}) as  
\begin{align}\label{blocksagain}
\sigma_x     &=     \bigoplus_{\st n} \,    p(\st n)  \,      \sigma_{\st n ,x}    \, , 
  \end{align}
with 
\begin{align}
p(\st n)  &   =   \frac{  C_{\st n}}{m^{l-1}} \,  \qquad {\rm and}  \qquad  \sigma_{\st n,x}    =        \frac 1 {C_{\st n}}  ~  \sum_{\st j \in \set S_{\st n}}     |\Phi_{x,\st j}\>\<\Phi_{x,\st j}| \, ,
\end{align}
where $C_{\st n}  =  (l-1)!/[  n_1!  n_2!  \cdots  n_m!]$ is the number of sequences of type $\st n$.   
 
Eq. (\ref{blocksagain}) tells us that, in order to distinguish  the states $\sigma_x$, one can perform an orthogonal measurement  that projects on the subspaces $\{\spc H_{\st n}\}$ (\ref{typesubspace}).  If the measurement outcome is $\st n$, one is left with the task of distinguishing among the states $\sigma_{\st n,  x}$.    The success probability  of this strategy is  
\begin{align}\label{psuccstn}
p_{\rm succ}^{(l)}   &   =   \sum_{\st n}\,    p(\st n)  ~    p_{\rm succ}^{(\st n)} \, ,       
\end{align}
where $p_{\rm succ}^{(\st n)}$ is the probability of correctly distinguishing the states $\{\sigma_{\st n,x} ~|~ x  \in  \{1,\dots, l\}  \}$.  

\medskip 

{\em Step 3: lower bound on the probability of success.}    The probability of correctly distinguishing the states $\{\sigma_{\st n,x} ~|~ x  \in  \{1,\dots, l\}  \}$ is lower bounded by the probability of correctly distinguishing among all their eigenstates  
\begin{align}
\Big\{    |\Phi_{x,\st j}\>       ~|~     x   \in  \{1,\dots, l \}   \, ,  ~  \st j \in\set S_{\st n}    \, \Big\}  \, .
\end{align} 
Note that the total number of states is $l    \, C_\st n$. 

We now construct a measurement that distinguishes these states with high success probability. The measurement is constructed through a Grahm-Schmidt orthogonalization procedure.     
We define a first batch of $C_{\st n}$  vectors as  
\begin{align}
|\Psi_{1, \st j}\>   :=   | \Phi_{1,\st j}\>   \qquad \st j  \in  \set S_{\st n} \, . 
\end{align}
This definition is well-posed, because the above vectors are orthonormal, due to Eq. (\ref{product}). 

A second batch of  vectors is constructed from the vectors $\{ |\Phi_{2,\st j}  \>  \, ,   \st j\in  \set S_{\st n}\}$ via the Grahm-Schmidt procedure, which yields 
\begin{align}
|\Psi_{2,\st j}  \>  :   =  \frac {  |\Phi_{2,\st j}\>   -  \frac 1 m  \,   | \Phi_{1,   \st j^{12}}  \> }{\sqrt{1-  \frac 1 {m^2}}} \, , 
\end{align}
where  $\st j^{12}$ is the sequence such that $\<  \Phi_{1,  \st j^{12}} |\Phi_{2,  \st j}  \>  =  1/m$.  

A third batch of  vectors is constructed from the vectors $\{ |\Phi_{2,\st j}  \>  \, ,   \st j\in  \set S_{\st n}\}$.    Now, the Grahm-Schmidt procedure yields 
\begin{align}
|\Psi_{3,\st j}\>    :  =   \frac {  |\Phi_{3,\st j}  \>    -       \frac 1m  \,     |\Phi_{2, \st j^{23}}  \>          -  \frac 1 m  \,  |\Phi_{1, \st j^{13}}\> }{ \sqrt{  1  -   \frac {2}{m^2}}}   +    O  \left(   \frac 1 {m^2}  \right) \,   | \Gamma_{3,\st j}  \>    +  O  \left(   \frac 1 {m^3}  \right)\,  |{\rm  Rest}_{3,\st j} \>   \, ,
\end{align} 
where $| \Gamma_{3,\st j}  \>$ is a vector of the form  $|\Phi_{1,\st k}\>$ for some suitable $\st k$ and   $|{\rm Rest}_{3,\st j}\>$ is a suitable unit vector, which is irrelevant for computing the leading order of the success probability. 

In general, the $x$-th batch of vectors is 
\begin{align}
|\Psi_{x,\st j}\>    :  =   \frac {  |\Phi_{x,\st j}  \>    -       \frac 1m  \,     \sum_{  y  =  1}^{x-1}  |\Phi_{y, \st j^{yx}}  \>  }{\sqrt {1  -  \frac {x-1}{m^2}}}  +    O  \left(   \frac 1 {m^2}  \right) \,   | \Gamma_{x,\st j}  \>    +  O  \left(   \frac 1 {m^3}  \right)\,  |{\rm  Rest}_{x,\st j} \>   \, ,
\end{align} 
where $| \Gamma_{x,\st j}  \>$ is a normalized combination of  vectors of the form  $|\Phi_{z,\st k_z}\>$,  $z<x-2$, while $|{\rm Rest}_{x,\st j}\>$ is a suitable unit vector.   
 
Note that one has  
\begin{align}\label{sudata}
\< \Phi_{x,\st j} |  \Psi_{x,\st j} \>  &  =      \sqrt{  1  - \frac{ x-1}{m^2}}  +    O  \left(   \frac 1 {m^3}  \right)   \, ,  \qquad \forall  x\in  \{1,\dots, l\} \, , \quad \forall \st j\in\set S_{\st n} \, ,
\end{align}
having used the fact that the product $\<\Phi_{x,\set j}|   \Gamma_{x,\st j}\>$ is  $O(1/m)$.  

Using Eq. (\ref{sudata}), we can  now  evaluate the probability of correctly distinguishing the states $  \{  |\Phi_{x,\st j}  \>  \}$.  
 On average over all possible states, the probability of success is 
 \begin{align}
 \nonumber p_{\rm succ}^{(\st n)}  &  =    \frac 1 {  l      C_{\st n}}  \,   \sum_{x=1}^l  \, \sum_{\st j\in\set S_\st n} \,     \,  \Big |  \< \Psi_{x,\st j} |  \Phi_{x,\st j} \>   \Big |^2  \\
 \nonumber  &  =       \frac 1 {  l      C_{\st n}}  \,   \sum_{x=1}^l  \, \sum_{\st j\in\set S_\st n} \,     \left [  1  - \frac{ x-1}{m^2}  +    O  \left(   \frac 1 {m^3}  \right)  \right]    \\
   \nonumber   &  =   \frac 1l   ~   \sum_{x=1}^l   \,     \left [  1  - \frac{ x-1}{m^2}  +    O  \left(   \frac 1 {m^3}  \right)  \right]  \\
   &  =    1   -       \,      \frac{l-1}{2m^2}  +        O  \left(   \frac 1 {m^3}  \right)    \, .    
 \end{align}
 Since measuring on the basis $\{  |\Psi_{x,\st j}\>\}$ is not necessarily the optimal strategy,  we arrived at  the lower bound  
 \begin{align}\label{gsbound}
 p_{\rm succ}^{(\st n)}  \ge      1   -       \,      \frac{l-1}{2m^2}  +        O  \left(   \frac 1 {m^3}  \right)   \, . 
 \end{align}
Note that the (leading order of the) r.h.s. is independent of the type $\st n$.

 \medskip 
 
{\em Step 4: putting everything together.}   Combining the results obtained so far, we can lower bound the success probability in distinguishing among $k$ causal structures.   Inserting the lower bound (\ref{gsbound}) into Eq. (\ref{psuccstn}), we obtain 
\begin{align}
\nonumber p^{(l)}_{\rm succ}  &  =  \sum_{\st n}  \,  p(\st n) ~    p_{\rm succ}^{\st n}  \\
 \nonumber  & \ge   1   -       \,      \frac{l-1}{2m^2}  +        O  \left(   \frac 1 {m^3}  \right)   \, . 
\end{align} 
Then, we can insert the above bound into Eq. (\ref{probl}). Reverting to the full notation $m_{\lambda_0} \equiv m$, we obtain
\begin{align}
\nonumber p_{\rm succ}  &  =  \frac 1 k  \,     \sum_{l=1}^k  \,   \begin{pmatrix}  k  \\   l \end{pmatrix}  \,    p_{\lambda_0}^{l-1} \,  (1-p_{\lambda_0})^{k-l}  \,  l\,    p_{\rm succ}^{(l)}  \\
\nonumber  & \ge     \frac 1 k  \,     \sum_{l=1}^k  \,   \begin{pmatrix}  k  \\   l \end{pmatrix}  \,    l\,    p_{\lambda_0}^{l-1} \,  (1-p_{\lambda_0})^{k-l}   \left[   1   -       \,      \frac{l-1}{2m_{\lambda_0}^2}  +        O  \left(   \frac 1 {m_{\lambda_0}^3}  \right)  \right]   \\
\nonumber  &   =   1  -  \frac {(k-1)  p_{\lambda_0}}{   2  m_{\lambda_0}^2} +   O  \left(   \frac 1 {m_{\lambda_0}^3}  \right)  \\
&   =   1  -  \frac {k-1  }{   2  d^N  }   \frac{d_{\lambda_0}}{m_{\lambda_0}}   +O  \left(   \frac 1 {m_{\lambda_0}^3}  \right)   \, .   
\end{align}
  Hence, the error probability of the optimal quantum strategy is upper bounded as 
  \begin{align}
  p_{\rm err}   \le \frac {k-1  }{   2  d^N  }   \frac{d_{\lambda_0}}{m_{\lambda_0}}  +   O  \left(   \frac 1 {m_{\lambda_0}^3}  \right)\, .
  \end{align}  
Recalling that the ratio $d_\lambda/m_\lambda$ is minimised by the representation with ``minimal" Young diagram (in the majorisation order), we conclude that, when $N$ is a multiple of $d$, the optimal error probability satisfies the bound  
\begin{align}
p_{\rm err}  \le  \frac{  k-1 }{2 d^N  \,  m(N,d)}  +  O  \left(   \frac 1 {m_{\lambda_0}^3}  \right)  \, ,  \qquad {\rm with}  \qquad m(N,d)    =    d^N  \,  \left[ \frac{ d^{\frac{d^2}2}    \, e^{\frac{d(d-1)}2}\,  \prod_{i=1}^d (d-i)!} {(2  \pi)^{\frac{d-1}2} \,  N^{\frac{d^2-1}2}} \right]  \,  c(N) \quad {\rm and} \qquad c(N)  \to 1  \, .
\end{align}
Hence, the asymptotic decay rate is lower bounded as 
\begin{align}
\nonumber R_{\rm Q}  &  =     -  \lim_{N\to \infty}   \frac{\log p_{\rm err}} N  \\
  &  \ge     2 \log d  \, .   
\end{align}
On the other hand,   the r.h.s. is equal to the decay rate for $k=2$, which is a lower bound for the decay rate for $k\ge 2$.  
In conclusion, we obtained that the optimal decay rate is {\em equal} to $R_{\rm Q}  =  2 \log d$.  \qed

\section{Quantum speedup in the identification of a cause}\label{app6}   

We  consider the scenario where  $k$  quantum variables $A_1,\dots, A_k$ are candidate causes of a given effect $B$.  For simplicity, we assume that all  variables  are quantum systems of dimension $d<\infty$. 
The causal relation is described by a quantum channel $\map C_{x, \map U}$ of the form $\map C_{x, \map U}  (\rho ) =    \map U  (  \Tr_{\overline x}[  \rho])$, where  $\Tr_{\overline x}$ denotes the partial trace over all input systems except $A_x$, with $x\in  \{1,\dots, k\}$, and  $\map U$ is a generic unitary channel, acting on the remaining system $A_x$. 
  The problem is to identify the value of $x$.

  \subsection{Fixed unitary gates}
  
  Suppose first that the unitary gate $\map U$ is fixed. Without loss of generality, we can assume $\map U  = \map I$, so that the channel $\map C_{x, \map I}$ is simply the partial trace over all systems except $x$.       The distinguishability of the channels $\{\map C_{ x , \map I} \}_{x=1}^k$
 has been studied extensively in the optimization  of port-based teleportation  \cite{mozrzymas2018optimal}. A simple strategy is to entangle each input system with a reference system, obtaining the output state $\rho_x : = \Phi^+_{B  R_x} \otimes  \left(\frac I d\right)^{\otimes {k-1}}_{\overline x}$, where $\Phi^+$ is the maximally entangled state, $R_x$ is the $x$-th reference system,  and the subscript $\overline x$ indicates that the operator $(I/d)^{\otimes (k-1)}$ acts on the Hilbert space of all reference systems except  $R_x$.  
 
 For $k  \ge d$, the optimal probability of success in distinguishing between the states $\{\rho_x\}_{x=1}^k$ is $p_{\rm succ}  =  d^2/(k-1  +  d^2 )$ \cite{mozrzymas2018optimal}.    If the unknown process is probed for $N$ times, the output state is $\rho_x^{\otimes N}$ and the probability of success is $p_{\rm succ}  =  d^{2N}/(k-1  +  d^{2N} )$.

\subsection{Unknown unitary gates} 
Let us consider the scenario  where  the unitary gate $\map U$ is completely unknown.   By the same argument as in   Appendix 1, the minimum  worst-case error probability is    equal to the minimum error probability in distinguishing between the average channels  
 \begin{align}
 \map C_x^{(N)}  =      \int \d \map U_1 \d \map U_2 \cdots \d \map U_k  \quad  \map C_{x , \map I}^{\otimes N}  \circ \big(    \map U_1 \otimes \map U_2 \otimes \cdots \otimes \map U_k \big)^{\otimes N} \, .
 \end{align}
The symmetry of the problem implies that the optimal input states are of the form  
\begin{align}
\rho_{\bf A R}  =     \frac  { P_{\lambda_1}}{d_{\lambda_1}}   \otimes  \frac  { P_{\lambda_2}}{d_{\lambda_2}}   \otimes \cdots \otimes \frac  { P_{\lambda_k}}{d_{\lambda_k}}   \otimes  \Psi_{  M_{\lambda_1}  M_{\lambda_2} \cdots M_{\lambda_k}  R } \, ,
\end{align}
where $P_{\lambda_i}$  is the projector on the representation space $\spc R_{\lambda_i}$ in the tensor product  $(\spc H^{\otimes N})_i$ of the $N$ systems corresponding to variable $A_i$, and the subscript $M_{\lambda_i}$ denotes the multiplicity space in $(\spc H^{\otimes N})_i$. 

When the input variables are initialized in the state $\rho_{\bf A R}$, the output is 
\begin{align}
\rho_{{\bf B R}, x}  =     \frac  { P_{\lambda_x}}{d_{\lambda_x}}       \otimes \Tr_{\overline {M_{\lambda_x}}}  \left [\Psi_{  M_{\lambda_1}  M_{\lambda_2} \cdots M_{\lambda_k}  R }\right]   \, ,
\end{align}
where $\Tr_{\overline {M_{\lambda_x}}}$ is the trace over all multiplicity spaces except $M_{\lambda_x}$.   

We now show that the true cause can be perfectly identified using at $  O(\log_d  k )$  queries to the unknown process.  We first provide an {\em exact} strategy using $\log_d k$ queries (at the leading order), and then show that the number of queries can be reduced to  $1/2  \log_d k$ (at the leading order) if a small  error, vanishing in the large $k$ limit, is tolerated. 

Our exact strategy  disregards the reference system $R$. In this strategy, we prepare the multiplicity systems in the product state 
\begin{align}|\Psi\>_{  M_{\lambda_1}  M_{\lambda_2} \cdots M_{\lambda_k} }  =    |\psi_1\>_{M_{\lambda_1}} \otimes |\psi_2\>_{M_{\lambda_2}}  \otimes \cdots \otimes |\psi_k\>_{M_{\lambda_k}}  \, .
\end{align}  
 We divide the indices $i$ into   $L$  groups, labelled as $G_1, G_2,  \dots, G_L$ and assign a distinct Young diagram to each group, so that $\lambda_i =  \lambda_j  $ for $i,j$ in the same group.   Within each group, we choose the states $|\psi_i\>_{M_{\lambda_i}}$ to be orthogonal. This choice constrains the number of indices in group $G_l$ to be at most the dimension of the multiplicity space $\map M_{\lambda_{G_l}}$, where $\lambda_{G_l}$ is the Young diagram assigned to the group $G_l$.    In turn, this implies that the condition 
\begin{align}\label{kmin}
k\le \sum_{l=1}^L m_{  \lambda_{G_l}}   \le \sum_\lambda    m_\lambda           
\end{align}  
must be satisfied.   Both bounds can be saturated, as one can choose $L$ to be the number of Young diagrams in the decomposition of the tensor representation $U^{\otimes N}$.  
On the other hand, the multiplicities  are lower bounded as   $m_\lambda   \ge  \begin{pmatrix}   N\\ \lambda \end{pmatrix}/ (N+1)^{d(d-1)/2}$  where $\begin{pmatrix}   N\\ \lambda \end{pmatrix}  =  N! /(\lambda_1!  \lambda_2! \cdots  \lambda_k!)$ is the multinomial coefficient \cite{harrow2005applications,christandl2006spectra}. Hence, we have the bound   $\sum_\lambda  m_\lambda    \ge d^N  / (N+1)^{d(d-1)/2}$, meaning that  condition (\ref{kmin}) can be satisfied with $N \ge  \log_d k  +  O(  \log  \log  k)$.   Hence, the unknown cause can be identified with zero error using approximately $\log_d k$ queries. 

We now construct a strategy that identifies the correct cause with  $1/2  \log_d k  +  O(  \log  \log  k)$ queries and with vanishing error probability.  In this strategy, all the input variables are initialized in the same sector, namely $\lambda_1 =  \lambda_2 = \dots = \lambda_k  \equiv \lambda$.  Specifically, we take $N$ to be a multiple of $d$ and choose $\lambda$ to be the Young diagram corresponding to the trivial representation of $\grp{SU} (d)$. The strategy uses the reference system $R  =    M_\lambda^{\otimes k}$ and the input state 
\begin{align}
\rho_{{\bf A R}}  =    \left(  \frac  { P_{\lambda}}{d_{\lambda}}  \right)^{\otimes k} \otimes \left(\Phi^{+ }_{\lambda }\right)^{\otimes k}\, ,
\end{align}
where $\Phi^+_{\lambda }$ is the projector on the maximally entangled state of two identical copies of $M_\lambda$. Then, the output state is 
\begin{align}
\rho_{{\bf B R}, x}  =    \frac  { P_{\lambda}}{d_{\lambda}}    \otimes \left(\Phi^{+ }_{\lambda }\right)_x \otimes \left (\frac {Q_\lambda}{m_\lambda}  \right)^{\otimes (k-1)}_{\overline x}\, ,
\end{align}
where the maximally entangled state  $\left(\Phi^{+ }_{\lambda }\right)_x$  involves the output system $B$ and the $x$-th reference system, while all the remaining reference systems are in the maximally mixed state $Q_\lambda/m_\lambda$.  Distinguishing among the states $\rho_{{\bf BR} , x}$ is equivalent to distinguishing the states $\left(\Phi^{+ }_{\lambda }\right)_x \otimes \left (\frac {Q_\lambda}{m_\lambda}  \right)^{\otimes (k-1)}_{\overline x}$. This problem has been solved in the context of port-based teleportation, and the minimum error probability is known to be $ p_{\rm err}   =  (k-1)  /  (m_\lambda^2  +  k-1)   $  \cite{mozrzymas2018optimal}.      Using Equation (\ref{asymptoticmulti}), we then obtain  
\begin{align}
p_{\rm err}  \le \frac{k-1}{m_\lambda^2}   =   \frac {k-1}{d^{2N}} \,\left[ \frac{ (2\pi)^{d-1} \,  N^{d^2-1}  }{   d^{d^2}   \,  e^{d(d-1)} \,  \prod_{i=1}^d (d-i)!   \, c(N)  }\right] \, ,
\end{align} 
with $\lim_{N\to \infty}  c(N)  =1$. 
Hence, a vanishing error probability can be obtained by setting $N  = \lceil   (\log_d  k  )(1+\epsilon)/2  \rceil$ with $\epsilon >  0$.

\section{A quantum advantage in the presence of noise}\label{app7}

Here we consider the task of identifying  causal intermediaries when the cause-effect relation is obfuscated by depolarizing noise, corresponding to the channel 
 $\map D_{p}  =   (1-p)  \, \map I  +  p  I/d$, where $p$ is the probability of depolarization.  
 
For simplicity, consider the case of one input variable $A$ and two output variables $B$ and $C$. Suppose that the experimenter  prepares $N$ copies of the maximally entangled state and sends half of each entangled state through one instance of the unknown process.  
 With this choice, the output state consists of $N$ copies of the state $\Sigma_x$, $x  \in \{1,2\}$, with
\begin{align}
\Sigma_1   = \left[ (1-p)  \Phi +  p  \frac{I\otimes I}{d^2}  \right]_{ BR}\otimes \left(\frac{I}d \right)_{ C}  \qquad {\rm and }\qquad \Sigma_2   =  \left(\frac{I}d \right)_{ B} \otimes \left[ (1-p)  \Phi  +  p  \frac{I\otimes I}{d^2}  \right]_{  CR} \, ,
\end{align}
where $\Phi$ is the projector on the canonical maximally entangled state.  Then, the quantum Chernoff bound \cite{audenaert2007discriminating}  yields the rate  
\begin{align}
\nonumber R    &=  - \log  \min_{0\le s \le 1}   \Tr  [  \Sigma_1^s  \Sigma_2^{1-s}]  \\
\nonumber &  = -  \log     \min_{0\le s \le 1}   \frac 1 {d^2}      
 \left[   \left(1-  p    +   \frac p {d^2} \right)^s      +  (d^2-1)  \left(    \frac p{d^2}\right)^s  \, \right] \,  \left[   \left(1-  p    +   \frac p {d^2} \right)^{1-s}      +    (d^2-1) \left(    \frac p{d^2}\right)^{1-s}  \, \right]  \\
 &  =  2\log  d   - 2  \log   \left[ \sqrt{1-  p    +   \frac p {d^2}}  +   \sqrt{ \frac p{d^2}} \right] \, .
\end{align}

When $p$ is small enough,  the rate can be larger than $\log d$, the best classical rate in the noiseless scenario. Since noise can only increase the error probability, this implies a quantum-over-classical advantage in the noisy scenario.  The same result holds for the discrimination of $k\ge 2$ hypotheses, as the quantum Chernoff bound for multiple states is equal to the worst-case Chernoff bound among all pairs  \cite{li2014second}.

We now provide a partial discussion of the scenario where the functional dependence between cause and effect is unknown. This scenario can be modelled by concatenating the depolarizing channel with a completely unknown unitary gate acting on the input variable. The full  analysis of the probability of error is substantially more complex, and we leave it as a topic of  future research.  Here we evaluate the error probability in the simplified scenario where the depolarization is heralded, meaning that when the system is depolarized to the maximally mixed state, the process outputs a classical outcome.    Taking this piece of information into account,   the error probability becomes  $p_{\rm err}  =  \sum_{k=0}^N  \,  (1-p)^k \, p^{N-k}  \begin{pmatrix}  N\\  k \end{pmatrix}  \,  p_{{\rm err} , k}$, where $p_{{\rm err}, k}$ is the probability of error with $k$ noiseless experiments.  
 
 The evaluation of $p_{{\rm err}, k}$ is as follows.  The input state of $k$ maximally entangled states, averaged over all possible unitary gates is 
 \begin{align}
 \rho_{\rm in}   =  \bigoplus_\lambda \,  p_{\lambda}  \, \frac{  P_\lambda\otimes P_\lambda}{d_{\lambda}^2}  \otimes  \Phi_\lambda  \, ,
 \end{align}   
  where $p_{\lambda}  =  d_\lambda m_\lambda/d^N$ is the Schur-Weyl measure, and $\Phi_\lambda$ is the maximally entangled state in $\spc M_\lambda\otimes \spc M_\lambda$. 
  
 The two output states corresponding to the two hypotheses are 
 \begin{align}
 \rho_{\rm out, 1}  =  (\rho_{\rm in})_{\bf BR} \otimes \left ( \frac I d \right)^{\otimes k}_{\bf C} \qquad {\rm and} \qquad  \rho_{\rm out, 2}  =\left ( \frac I d \right)^{\otimes k}_{\bf B}  \otimes   (\rho_{\rm in})_{\bf CR}  \, . 
 \end{align}
 
  The distance between them is 
  \begin{align}
 \nonumber \|  \rho_{\rm out, 1}   -   \rho_{\rm out, 2} \|_1   &  =  \left \|   \bigoplus_{\lambda, \mu, \nu}       \frac{  (P_\lambda)_{\bf B}\otimes  (P_\mu)_{\bf R}\otimes (P_\nu)_{\bf C}}{d_\lambda d_\mu d_\nu}  \otimes \left[  p_\lambda   p_\nu   \delta_{\lambda\mu}   (\Phi_\lambda)_{\bf BR} \otimes \left( \frac {Q_\nu}{m_\nu}\right)_{\bf C}    -   p_\lambda p_\mu  \delta_{\mu \nu}  \left(   \frac{Q_\lambda}{m_\lambda}  \right)_{\bf B}   \otimes  (\Phi_\mu)_{\bf RC}    \right]      \right\|_1 \\
\nonumber  &  =  2 \left( 1  -  \sum_\lambda   \, p_\lambda^2\right)  +  \sum_\lambda   p_\lambda^2  \left\|      (\Phi_\lambda)_{\bf BR} \otimes \left( \frac {Q_\lambda}{m_\lambda}\right)_{\bf C}    -  \left(   \frac{Q_\lambda}{m_\lambda}  \right)_{\bf B} \otimes (\Phi_\lambda)_{\bf CR}    \right\|_1  \\
\nonumber  &  =  2 \left( 1  -  \sum_\lambda   \, p_\lambda^2\right)  +2   \sum_\lambda   p_\lambda^2         \sqrt{ 1-m_\lambda^{-2}}  \, ,
  \end{align} 
  where the second term in the sum has been evaluated through Equation (\ref{crucial}).

Hence, we have the approximate expression  
  \begin{align}
 \nonumber \|  \rho_{\rm out, 1}   -   \rho_{\rm out, 2} \|_1  
 &  =  2 \left( 1  -  \sum_\lambda   \, p_\lambda^2\right)  +2   \sum_\lambda   p_\lambda^2       \left[1   -  \frac 1 {2 m_\lambda^2}  +  O(m_\lambda^4)\right]\\
 \nonumber &  =    2  -  \frac 1{d^{2 k}} \,\sum_\lambda   d^2_\lambda    +  O(  d^{-4k}) \\
&  =    2  -  \frac {{\rm Poly}  (k,d)}{d^{2k}}  +  O(  d^{-4k}) \, ,
  \end{align}
  having used the fact that the dimensions and the number of Young diagrams grow at most polynomially in $k$ and $d$  (see e.g. \cite{harrow2005applications,christandl2006spectra}).  
    Using the above expression, we obtain the expression $p_{{\rm err},  k}  = \frac {{\rm Poly}  (k,d)}{ d^{2k}}  + O(  d^{-4k})  $.  Summing over $k$ and averaging with the Bernoulli distribution we finally obtain $p_{\rm err}   =  {\rm Poly}  (N,d)  \,  \left(   \frac {1-p}{d^2}  +  p \right)^N     $ at the leading order.   
    
    In conclusion, the discrimination rate is $R  =  -  \log  \left(   \frac {1-p}{d^2}  +  p \right)$, which is larger than the  noiseless classical rate $\log d$  when $p$ is smaller than $1/(d+1)$.    The rate $R  =  -  \log  \left(   \frac {1-p}{d^2}  +  p \right)$ provides an upper bound to the achievable rate without heralding, for the simple strategy consisting in preparing $N$ copies of the maximally entangled state. When the probability of depolarisation exceeds $1/(d+1)$ this simple quantum strategy cannot beat the noiseless classical rate, and  comparison between quantum and classical strategies requires a more detailed analysis. 
     
It is worth noting  the above derivation provides an alternative strategy for the identification of the causal intermediary  in the noiseless case ($p=0$). 
    This strategy achieves the same rate of our universal strategy, although with a polynomially worse error probability.   While suboptimal, the present strategy is practically interesting because it does not require input states with  large-scale multipartite entanglement. 

\section{Proof of Equation (29) in the main text}\label{app8}

{\em Step 1.} Observe that the  channels $\map C_\pm   =    \frac 2 {d^N \pm 1}  \,  P_\pm  \left(\rho \otimes I^{\otimes N}  \right)  \,  P_\pm $ are no-signalling. Indeed, for every subset  $\set S  \subseteq \{1,\dots,  N\}$ one has that the input system ${\bf A} (S)   :=   \bigotimes_{  i \in \set S}  A_i $ cannot signal to the output system ${\bf BC}  (\overline {\set S})  : =  \bigotimes_{  i \not \in \set S}    (B_i \otimes C_i)$.   To check the no-signalling condition, we use the relation 
\begin{align}
P_\pm  =    \frac{I\pm   \text{SWAP} }2 =     \frac{   \left(     \bigotimes_{i\in\set S}   I_{  B_i C_i} \right)  \otimes   \left(   \bigotimes_{i\not \in \set S}   I_{B_i C_i}  \right)  \,  \pm  \,  \left(   \bigotimes_{i\in\set S}   \text{SWAP}_{  B_i C_i} \right)  \otimes \left(   \bigotimes_{i\not \in \set S}   \text{SWAP}_{B_i C_i}   \right)}2 \,,
\end{align} where  $I_{B_iC_i}$ is the identity operator on the composite system $B_i C_i$, and  $\text{SWAP}_{B_i C_i}$ is the unitary operator that swaps systems $B_i$ and $C_i$.  The  state of the output system ${\bf BC}  (\overline {\set S})$ is 
\begin{align}
\nonumber \left(  \bigotimes_{i\in \set S}  \Tr_{B_i C_i} \right)  [  \map C_{\pm}  (\rho)]  &  \propto  d^{|\set S|}\left(  \bigotimes_{i\in \set S}  \Tr_{B_i} \right)  \Big[\, \rho \, \Big]   \otimes \left(  \bigotimes_{i\not \in \set S}  I_{C_i} \right)   + d^{|\set S|} \left(  \bigotimes_{i\not \in \set S}  I_{C_i} \right) \otimes  \left(  \bigotimes_{i\in \set S}  \Tr_{B_i} \right)  \Big[\,  \rho\,\Big]  \\
\nonumber 
&  \quad  \pm \left[  \left(  \bigotimes_{i\in \set S}  \Tr_{B_i} \right)  \Big[\, \rho \, \Big]   \otimes    \left(  \bigotimes_{i\not \in \set S}  I_{C_i} \right)  \right]  \,   \left(   \bigotimes_{i\not \in \set S}   \text{SWAP}_{B_i C_i}   \right)\\
&  \quad  \pm    \left(   \bigotimes_{i\not \in \set S}   \text{SWAP}_{B_i C_i}   \right)  \,
 \left[  \left(  \bigotimes_{i\in \set S}  \Tr_{B_i} \right)  \Big[\, \rho \, \Big]   \otimes    \left(  \bigotimes_{i\not \in \set S}  I_{C_i} \right)  \right] \end{align}   
and depends only on the state of the input system ${\bf A}  (\overline {\set S})$. \\

{\em Step 2.}  Show that there exist coefficients $a$ and $b$ such that the maps $  a \,  \map C_+ +  b\, \map C_-  -  1/2  \,  \map C_{1, I}$ and $  a \, \map C_+ +  b\, \map C_-  -  1/2  \,  \map C_{2,I}$ are completely positive. 

Let us consider the $N=1$ case first. By definition, one has 
\begin{align}
 a \,  \map C_+ +  b\, \map C_-  -  1/2  \,  \map C_{1, I}     =      \map M  \circ (  \map I\otimes I) \qquad {\rm and}  \qquad
a \,  \map C_+ +  b\, \map C_-  -  1/2  \,  \map C_{1, I}      =      \map M  \circ (  I\otimes \map I) \, ,
\end{align}
where $\map M$ is the linear map defined by
\begin{align}  
\map M (A)   :=   &  m_{00}    \,    A    +   m_{01}  \,  A  \, (\text{SWAP})   +   m_{10} \,  (\text{SWAP})  \, A          +  m_{11} \,   (\text{SWAP}) \, A  \, (\text{SWAP}) 
    \end{align} 
    with 
    \begin{align}  
\nonumber    m_{00} &=    \frac  a{2(d+1)}  +  \frac  b{2(d-1)}    -  \frac 1{2d}       \qquad     & m_{10} &=   \frac  a{2(d+1)}  -  \frac  b{2(d-1)}\\
  m_{10}  &=  \frac  a{2(d+1)}  -  \frac  b{2(d-1)}     & m_{11}&=  
  \frac  a{2(d+1)}  +  \frac  b{2(d-1)}  \, .
\end{align}  
Now,  if the matrix $ M$ is positive, then the map $\map M$ is completely positive.  Defining 
\begin{align}
\alpha  :  =  \frac  a{2(d+1)} \qquad {\rm and}   \qquad  \beta  : = \frac  b{2(d-1)} \, ,
 \end{align}
the positivity condition becomes 
\begin{align}
\label{lin}\alpha+  \beta &\ge 1/(2d) \\
\label{quad}4\alpha \beta   & \ge  (\alpha  + \beta)/(2d) \, .    
\end{align}
As an ansatz, we choose $\alpha  =  \sqrt{d-1}  \, x$ and $\beta  =  \sqrt{d+1} \, x$, for some $x  > 0$.  
Then,  condition (\ref{quad}) becomes 
\begin{align}
x &  \ge  \frac 1{8 d} \,  \left( \frac 1 { \sqrt {d+1}}  + \frac 1 { \sqrt{d-1}} \right)  = :  x_0 \, .
\end{align}
Note that the choice $x=  x_0$ satisfies both conditions (\ref{quad}) and  (\ref{lin}). Finally, note that the above derivation holds for arbitrary $N$, by replacing $d$ with $d^N$. \\
 
 {\em Step 3.}  Define the constant $\lambda:  =  a+b$ and the no-signalling channel $\map C :  = ( a \,  \map C_+   +   b \,  \map C_-  )/\lambda$.  By construction,  the maps $\lambda \,\map C  - 1/2  \,  \map C_{1,I}$ and $\lambda \,\map C  - 1/2  \,  \map C_{2,I}$ are completely positive. 
Explicit evaluation yields 
\begin{align}
\lambda  
  &  = \frac{ \left( \sqrt{d^N+1} +  \sqrt {d^N-1}  \right)^2 }{4d^N} \, .   
\end{align} 
Finally, observe that the maps $\lambda \,\map C  - 1/2  \,  \map C_{1,I}$ and $\lambda \,\map C  - 1/2  \,  \map C_{2,I}$ are completely positive if and only if the Choi operators $C, C_{1,I}$, and $C_{2,I}$ corresponding to $\map C, \map C_{1,I}$, and $\map C_{2,I}$ satisfy the inequalities $\lambda \,  C  \ge 1/2 \,  C_{1,I}$ and $\lambda\, C \ge  1/2 \, C_{2,I}$.   Inserting the expression of $\lambda$ into Equation (26) of the main text, we then obtain the desired bound 
\begin{align}
p^{\rm ind}_{\rm err}  \ge  1-\lambda  =  \frac  {  1 -   \sqrt{1-\frac 1 {d^{2N}}} } {2} \, .
\end{align}
 \end{widetext}


\begin{thebibliography}{10}
\expandafter\ifx\csname url\endcsname\relax
  \def\url#1{\texttt{#1}}\fi
\expandafter\ifx\csname urlprefix\endcsname\relax\def\urlprefix{URL }\fi
\providecommand{\bibinfo}[2]{#2}
\providecommand{\eprint}[2][]{\url{#2}}

\bibitem{spirtes2000causation}
\bibinfo{author}{Spirtes, P.}, \bibinfo{author}{Glymour, C.~N.} \&
  \bibinfo{author}{Scheines, R.}
\newblock \emph{\bibinfo{title}{Causation, prediction, and search}}
  (\bibinfo{publisher}{MIT press}, \bibinfo{year}{2000}).

\bibitem{pearl2009}
\bibinfo{author}{Pearl, J.}
\newblock \emph{\bibinfo{title}{Causality}} (\bibinfo{publisher}{Cambridge
  University Press}, \bibinfo{year}{2009}).

\bibitem{pearl2014}
\bibinfo{author}{Pearl, J.}
\newblock \emph{\bibinfo{title}{Probabilistic reasoning in intelligent systems:
  networks of plausible inference}} (\bibinfo{publisher}{Morgan Kaufmann},
  \bibinfo{year}{2014}).

\bibitem{chaves2018quantum}
\bibinfo{author}{Chaves, R.} \emph{et~al.}
\newblock \bibinfo{title}{Quantum violation of an instrumental test}.
\newblock \emph{\bibinfo{journal}{Nature Physics}}
  \textbf{\bibinfo{volume}{14}}, \bibinfo{pages}{291} (\bibinfo{year}{2018}).

\bibitem{van2018quantum}
\bibinfo{author}{Van~Himbeeck, T.} \emph{et~al.}
\newblock \bibinfo{title}{Quantum violations in the instrumental scenario and
  their relations to the {Bell} scenario}.
\newblock \emph{\bibinfo{journal}{Preprint at
  https://arxiv.org/abs/1804.04119}}  (\bibinfo{year}{2018}).

\bibitem{leifer2006quantum}
\bibinfo{author}{Leifer, M.~S.}
\newblock \bibinfo{title}{Quantum dynamics as an analog of conditional
  probability}.
\newblock \emph{\bibinfo{journal}{Physical Review A}}
  \textbf{\bibinfo{volume}{74}}, \bibinfo{pages}{042310}
  (\bibinfo{year}{2006}).

\bibitem{chiribella-dariano-2009-pra}
\bibinfo{author}{Chiribella, G.}, \bibinfo{author}{D'Ariano, G.~M.} \&
  \bibinfo{author}{Perinotti, P.}
\newblock \bibinfo{title}{Theoretical framework for quantum networks}.
\newblock \emph{\bibinfo{journal}{Physical Review A}}
  \textbf{\bibinfo{volume}{80}}, \bibinfo{pages}{022339}
  (\bibinfo{year}{2009}).

\bibitem{coecke2012picturing}
\bibinfo{author}{Coecke, B.} \& \bibinfo{author}{Spekkens, R.~W.}
\newblock \bibinfo{title}{Picturing classical and quantum bayesian inference}.
\newblock \emph{\bibinfo{journal}{Synthese}} \textbf{\bibinfo{volume}{186}},
  \bibinfo{pages}{651--696} (\bibinfo{year}{2012}).

\bibitem{leifer2013towards}
\bibinfo{author}{Leifer, M.~S.} \& \bibinfo{author}{Spekkens, R.~W.}
\newblock \bibinfo{title}{Towards a formulation of quantum theory as a causally
  neutral theory of bayesian inference}.
\newblock \emph{\bibinfo{journal}{Physical Review A}}
  \textbf{\bibinfo{volume}{88}}, \bibinfo{pages}{052130}
  (\bibinfo{year}{2013}).

\bibitem{henson2014theory}
\bibinfo{author}{Henson, J.}, \bibinfo{author}{Lal, R.} \&
  \bibinfo{author}{Pusey, M.~F.}
\newblock \bibinfo{title}{Theory-independent limits on correlations from
  generalized bayesian networks}.
\newblock \emph{\bibinfo{journal}{New Journal of Physics}}
  \textbf{\bibinfo{volume}{16}}, \bibinfo{pages}{113043}
  (\bibinfo{year}{2014}).

\bibitem{pienaar2015graph}
\bibinfo{author}{Pienaar, J.} \& \bibinfo{author}{Brukner, {\v{C}}.}
\newblock \bibinfo{title}{A graph-separation theorem for quantum causal
  models}.
\newblock \emph{\bibinfo{journal}{New Journal of Physics}}
  \textbf{\bibinfo{volume}{17}}, \bibinfo{pages}{073020}
  (\bibinfo{year}{2015}).

\bibitem{costa2016quantum}
\bibinfo{author}{Costa, F.} \& \bibinfo{author}{Shrapnel, S.}
\newblock \bibinfo{title}{Quantum causal modelling}.
\newblock \emph{\bibinfo{journal}{New Journal of Physics}}
  \textbf{\bibinfo{volume}{18}}, \bibinfo{pages}{063032}
  (\bibinfo{year}{2016}).

\bibitem{portmann2017causal}
\bibinfo{author}{Portmann, C.}, \bibinfo{author}{Matt, C.},
  \bibinfo{author}{Maurer, U.}, \bibinfo{author}{Renner, R.} \&
  \bibinfo{author}{Tackmann, B.}
\newblock \bibinfo{title}{Causal boxes: Quantum information-processing systems
  closed under composition}.
\newblock \emph{\bibinfo{journal}{IEEE Transactions on Information Theory}}
  \textbf{\bibinfo{volume}{63}}, \bibinfo{pages}{3277--3305}
  (\bibinfo{year}{2017}).

\bibitem{allen2017quantum}
\bibinfo{author}{Allen, J.-M.~A.}, \bibinfo{author}{Barrett, J.},
  \bibinfo{author}{Horsman, D.~C.}, \bibinfo{author}{Lee, C.~M.} \&
  \bibinfo{author}{Spekkens, R.~W.}
\newblock \bibinfo{title}{Quantum common causes and quantum causal models}.
\newblock \emph{\bibinfo{journal}{Physical Review X}}
  \textbf{\bibinfo{volume}{7}}, \bibinfo{pages}{031021} (\bibinfo{year}{2017}).

\bibitem{maclean2017quantum}
\bibinfo{author}{MacLean, J.-P.~W.}, \bibinfo{author}{Ried, K.},
  \bibinfo{author}{Spekkens, R.~W.} \& \bibinfo{author}{Resch, K.~J.}
\newblock \bibinfo{title}{Quantum-coherent mixtures of causal relations}.
\newblock \emph{\bibinfo{journal}{Nature communications}}
  \textbf{\bibinfo{volume}{8}}, \bibinfo{pages}{15149} (\bibinfo{year}{2017}).

\bibitem{wood2015lesson}
\bibinfo{author}{Wood, C.~J.} \& \bibinfo{author}{Spekkens, R.~W.}
\newblock \bibinfo{title}{The lesson of causal discovery algorithms for quantum
  correlations: Causal explanations of bell-inequality violations require
  fine-tuning}.
\newblock \emph{\bibinfo{journal}{New Journal of Physics}}
  \textbf{\bibinfo{volume}{17}}, \bibinfo{pages}{033002}
  (\bibinfo{year}{2015}).

\bibitem{fitzsimons2015quantum}
\bibinfo{author}{Fitzsimons, J.~F.}, \bibinfo{author}{Jones, J.~A.} \&
  \bibinfo{author}{Vedral, V.}
\newblock \bibinfo{title}{Quantum correlations which imply causation}.
\newblock \emph{\bibinfo{journal}{Scientific reports}}
  \textbf{\bibinfo{volume}{5}}, \bibinfo{pages}{18281} (\bibinfo{year}{2015}).

\bibitem{ried2015}
\bibinfo{author}{Ried, K.} \emph{et~al.}
\newblock \bibinfo{title}{A quantum advantage for inferring causal structure}.
\newblock \emph{\bibinfo{journal}{Nature Physics}}
  \textbf{\bibinfo{volume}{11}}, \bibinfo{pages}{414--420}
  (\bibinfo{year}{2015}).

\bibitem{chaves2015information}
\bibinfo{author}{Chaves, R.}, \bibinfo{author}{Majenz, C.} \&
  \bibinfo{author}{Gross, D.}
\newblock \bibinfo{title}{Information--theoretic implications of quantum causal
  structures}.
\newblock \emph{\bibinfo{journal}{Nature communications}}
  \textbf{\bibinfo{volume}{6}} (\bibinfo{year}{2015}).

\bibitem{giarmatzi2018quantum}
\bibinfo{author}{Giarmatzi, C.} \& \bibinfo{author}{Costa, F.}
\newblock \bibinfo{title}{A quantum causal discovery algorithm}.
\newblock \emph{\bibinfo{journal}{npj Quantum Information}}
  \textbf{\bibinfo{volume}{4}}, \bibinfo{pages}{17} (\bibinfo{year}{2018}).

\bibitem{schuld2015introduction}
\bibinfo{author}{Schuld, M.}, \bibinfo{author}{Sinayskiy, I.} \&
  \bibinfo{author}{Petruccione, F.}
\newblock \bibinfo{title}{An introduction to quantum machine learning}.
\newblock \emph{\bibinfo{journal}{Contemporary Physics}}
  \textbf{\bibinfo{volume}{56}}, \bibinfo{pages}{172--185}
  (\bibinfo{year}{2015}).

\bibitem{biamonte2017quantum}
\bibinfo{author}{Biamonte, J.} \emph{et~al.}
\newblock \bibinfo{title}{Quantum machine learning}.
\newblock \emph{\bibinfo{journal}{Nature}} \textbf{\bibinfo{volume}{549}},
  \bibinfo{pages}{195} (\bibinfo{year}{2017}).

\bibitem{chiribella2013quantum}
\bibinfo{author}{Chiribella, G.}, \bibinfo{author}{D'Ariano, G.~M.},
  \bibinfo{author}{Perinotti, P.} \& \bibinfo{author}{Valiron, B.}
\newblock \bibinfo{title}{Quantum computations without definite causal
  structure}.
\newblock \emph{\bibinfo{journal}{Physical Review A}}
  \textbf{\bibinfo{volume}{88}}, \bibinfo{pages}{022318}
  (\bibinfo{year}{2013}).

\bibitem{oreshkov2012quantum}
\bibinfo{author}{Oreshkov, O.}, \bibinfo{author}{Costa, F.} \&
  \bibinfo{author}{Brukner, {\v{C}}.}
\newblock \bibinfo{title}{Quantum correlations with no causal order}.
\newblock \emph{\bibinfo{journal}{Nature communications}}
  \textbf{\bibinfo{volume}{3}}, \bibinfo{pages}{1092} (\bibinfo{year}{2012}).

\bibitem{hardy2001quantum}
\bibinfo{author}{Hardy, L.}
\newblock \bibinfo{title}{Quantum theory from five reasonable axioms}.
\newblock \emph{\bibinfo{journal}{Preprint at
  https://arxiv.org/abs/quant-ph/0101012}}  (\bibinfo{year}{2001}).

\bibitem{barnum2007generalized}
\bibinfo{author}{Barnum, H.}, \bibinfo{author}{Barrett, J.},
  \bibinfo{author}{Leifer, M.} \& \bibinfo{author}{Wilce, A.}
\newblock \bibinfo{title}{Generalized no-broadcasting theorem}.
\newblock \emph{\bibinfo{journal}{Physical Review Letters}}
  \textbf{\bibinfo{volume}{99}}, \bibinfo{pages}{240501}
  (\bibinfo{year}{2007}).

\bibitem{barrett2007information}
\bibinfo{author}{Barrett, J.}
\newblock \bibinfo{title}{Information processing in generalized probabilistic
  theories}.
\newblock \emph{\bibinfo{journal}{Physical Review A}}
  \textbf{\bibinfo{volume}{75}}, \bibinfo{pages}{032304}
  (\bibinfo{year}{2007}).

\bibitem{chiribella2010probabilistic}
\bibinfo{author}{Chiribella, G.}, \bibinfo{author}{D'Ariano, G.} \&
  \bibinfo{author}{Perinotti, P.}
\newblock \bibinfo{title}{Probabilistic theories with purification}.
\newblock \emph{\bibinfo{journal}{Phys. Rev. A}} \textbf{\bibinfo{volume}{81}},
  \bibinfo{pages}{062348} (\bibinfo{year}{2010}).

\bibitem{hardy2011foliable}
\bibinfo{author}{Hardy, L.}
\newblock \bibinfo{title}{Foliable operational structures for general
  probabilistic theories}.
\newblock \emph{\bibinfo{journal}{Deep Beauty: Understanding the Quantum World
  through Mathematical Innovation; Halvorson, H., Ed}} \bibinfo{pages}{409}
  (\bibinfo{year}{2011}).

\bibitem{chiribella2016quantumtheory}
\bibinfo{author}{Chiribella, G.} \& \bibinfo{author}{Spekkens, R.~W.}
\newblock \emph{\bibinfo{title}{Quantum Theory: Informational Foundations and
  Foils}} (\bibinfo{publisher}{Springer}, \bibinfo{year}{2016}).

\bibitem{hayashi2009discrimination}
\bibinfo{author}{Hayashi, M.}
\newblock \bibinfo{title}{Discrimination of two channels by adaptive methods
  and its application to quantum system}.
\newblock \emph{\bibinfo{journal}{IEEE Transactions on Information Theory}}
  \textbf{\bibinfo{volume}{55}}, \bibinfo{pages}{3807--3820}
  (\bibinfo{year}{2009}).

\bibitem{helstrom1969quantum}
\bibinfo{author}{Helstrom, C.~W.}
\newblock \bibinfo{title}{Quantum detection and estimation theory}.
\newblock \emph{\bibinfo{journal}{Journal of Statistical Physics}}
  \textbf{\bibinfo{volume}{1}}, \bibinfo{pages}{231--252}
  (\bibinfo{year}{1969}).

\bibitem{bartlett2007reference}
\bibinfo{author}{Bartlett, S.~D.}, \bibinfo{author}{Rudolph, T.} \&
  \bibinfo{author}{Spekkens, R.~W.}
\newblock \bibinfo{title}{Reference frames, superselection rules, and quantum
  information}.
\newblock \emph{\bibinfo{journal}{Reviews of Modern Physics}}
  \textbf{\bibinfo{volume}{79}}, \bibinfo{pages}{555} (\bibinfo{year}{2007}).

\bibitem{kimble2008quantum}
\bibinfo{author}{Kimble, H.~J.}
\newblock \bibinfo{title}{The quantum internet}.
\newblock \emph{\bibinfo{journal}{Nature}} \textbf{\bibinfo{volume}{453}},
  \bibinfo{pages}{1023} (\bibinfo{year}{2008}).

\bibitem{cormen2009introduction}
\bibinfo{author}{Cormen, T.~H.}, \bibinfo{author}{Leiserson, C.~E.},
  \bibinfo{author}{Rivest, R.~L.} \& \bibinfo{author}{Stein, C.}
\newblock \emph{\bibinfo{title}{Introduction to algorithms}}
  (\bibinfo{publisher}{MIT press}, \bibinfo{year}{2009}).

\bibitem{mozrzymas2018optimal}
\bibinfo{author}{Mozrzymas, M.}, \bibinfo{author}{Studzi{\'n}ski, M.},
  \bibinfo{author}{Strelchuk, S.} \& \bibinfo{author}{Horodecki, M.}
\newblock \bibinfo{title}{Optimal port-based teleportation}.
\newblock \emph{\bibinfo{journal}{New Journal of Physics}}
  \textbf{\bibinfo{volume}{20}}, \bibinfo{pages}{053006}
  (\bibinfo{year}{2018}).

\bibitem{massar2015hyperdense}
\bibinfo{author}{Massar, S.}, \bibinfo{author}{Pironio, S.} \&
  \bibinfo{author}{Pital{\'u}a-Garc{\'\i}a, D.}
\newblock \bibinfo{title}{Hyperdense coding and superadditivity of classical
  capacities in hypersphere theories}.
\newblock \emph{\bibinfo{journal}{New Journal of Physics}}
  \textbf{\bibinfo{volume}{17}}, \bibinfo{pages}{113002}
  (\bibinfo{year}{2015}).

\bibitem{zyczkowski2008quartic}
\bibinfo{author}{{\.Z}yczkowski, K.}
\newblock \bibinfo{title}{Quartic quantum theory: an extension of the standard
  quantum mechanics}.
\newblock \emph{\bibinfo{journal}{Journal of Physics A: Mathematical and
  Theoretical}} \textbf{\bibinfo{volume}{41}}, \bibinfo{pages}{355302}
  (\bibinfo{year}{2008}).

\bibitem{barnum2015some}
\bibinfo{author}{Barnum, H.}, \bibinfo{author}{Graydon, M.~A.} \&
  \bibinfo{author}{Wilce, A.}
\newblock \bibinfo{title}{Some nearly quantum theories}.
\newblock \emph{\bibinfo{journal}{Preprint at
  https://arxiv.org/abs/1507.06278}}  (\bibinfo{year}{2015}).

\bibitem{fuchs1999cryptographic}
\bibinfo{author}{Fuchs, C.~A.} \& \bibinfo{author}{Van De~Graaf, J.}
\newblock \bibinfo{title}{Cryptographic distinguishability measures for
  quantum-mechanical states}.
\newblock \emph{\bibinfo{journal}{IEEE Transactions on Information Theory}}
  \textbf{\bibinfo{volume}{45}}, \bibinfo{pages}{1216--1227}
  (\bibinfo{year}{1999}).

\bibitem{uhlmann1976transition}
\bibinfo{author}{Uhlmann, A.}
\newblock \bibinfo{title}{The “transition probability” in the state space
  of a $*$-algebra}.
\newblock \emph{\bibinfo{journal}{Reports on Mathematical Physics}}
  \textbf{\bibinfo{volume}{9}}, \bibinfo{pages}{273--279}
  (\bibinfo{year}{1976}).

\bibitem{wilde2013quantum}
\bibinfo{author}{Wilde, M.~M.}
\newblock \emph{\bibinfo{title}{Quantum information theory}}
  (\bibinfo{publisher}{Cambridge University Press}, \bibinfo{year}{2013}).

\bibitem{choi-1975}
\bibinfo{author}{Choi, M.-D.}
\newblock \bibinfo{title}{Completely positive linear maps on complex matrices}.
\newblock \emph{\bibinfo{journal}{Linear algebra and its applications}}
  \textbf{\bibinfo{volume}{10}}, \bibinfo{pages}{285--290}
  (\bibinfo{year}{1975}).

\bibitem{yuen1975optimum}
\bibinfo{author}{Yuen, H.}, \bibinfo{author}{Kennedy, R.} \&
  \bibinfo{author}{Lax, M.}
\newblock \bibinfo{title}{Optimum testing of multiple hypotheses in quantum
  detection theory}.
\newblock \emph{\bibinfo{journal}{IEEE Transactions on Information Theory}}
  \textbf{\bibinfo{volume}{21}}, \bibinfo{pages}{125--134}
  (\bibinfo{year}{1975}).

\bibitem{werner1998optimal}
\bibinfo{author}{Werner, R.~F.}
\newblock \bibinfo{title}{Optimal cloning of pure states}.
\newblock \emph{\bibinfo{journal}{Physical Review A}}
  \textbf{\bibinfo{volume}{58}}, \bibinfo{pages}{1827} (\bibinfo{year}{1998}).

\bibitem{buvzek1999optimal}
\bibinfo{author}{Bu{\v{z}}ek, V.}, \bibinfo{author}{Hillery, M.} \&
  \bibinfo{author}{Werner, R.}
\newblock \bibinfo{title}{Optimal manipulations with qubits: Universal-not
  gate}.
\newblock \emph{\bibinfo{journal}{Physical Review A}}
  \textbf{\bibinfo{volume}{60}}, \bibinfo{pages}{R2626} (\bibinfo{year}{1999}).

\bibitem{chiribella2014identifying}
\bibinfo{author}{Chiribella, G.}, \bibinfo{author}{D'Ariano, G.~M.} \&
  \bibinfo{author}{Roetteler, M.}
\newblock \bibinfo{title}{Identification of a reversible quantum gate:
  assessing the resources}.
\newblock \emph{\bibinfo{journal}{New Journal of Physics}}
  \textbf{\bibinfo{volume}{15}}, \bibinfo{pages}{103019}
  (\bibinfo{year}{2013}).

\bibitem{hausladen1994pretty}
\bibinfo{author}{Hausladen, P.} \& \bibinfo{author}{Wootters, W.~K.}
\newblock \bibinfo{title}{A pretty good measurement for distinguishing quantum
  states}.
\newblock \emph{\bibinfo{journal}{Journal of Modern Optics}}
  \textbf{\bibinfo{volume}{41}}, \bibinfo{pages}{2385--2390}
  (\bibinfo{year}{1994}).

\bibitem{holevo2011probabilistic}
\bibinfo{author}{Holevo, A.~S.}
\newblock \emph{\bibinfo{title}{Probabilistic and statistical aspects of
  quantum theory}}, vol.~\bibinfo{volume}{1} (\bibinfo{publisher}{Springer
  Science \& Business Media}, \bibinfo{year}{2011}).

\bibitem{chiribella2011group}
\bibinfo{author}{Chiribella, G.}
\newblock \bibinfo{title}{Group theoretic structures in the estimation of an
  unknown unitary transformation}.
\newblock In \emph{\bibinfo{booktitle}{Journal of Physics: Conference Series}},
  vol. \bibinfo{volume}{284}, \bibinfo{pages}{012001}
  (\bibinfo{organization}{IOP Publishing}, \bibinfo{year}{2011}).

\bibitem{fulton2013representation}
\bibinfo{author}{Fulton, W.} \& \bibinfo{author}{Harris, J.}
\newblock \emph{\bibinfo{title}{Representation theory: a first course}}, vol.
  \bibinfo{volume}{129} (\bibinfo{publisher}{Springer Science \& Business
  Media}, \bibinfo{year}{2013}).

\bibitem{li2014second}
\bibinfo{author}{Li, K.} \emph{et~al.}
\newblock \bibinfo{title}{Second-order asymptotics for quantum hypothesis
  testing}.
\newblock \emph{\bibinfo{journal}{The Annals of Statistics}}
  \textbf{\bibinfo{volume}{42}}, \bibinfo{pages}{171--189}
  (\bibinfo{year}{2014}).

\bibitem{harrow2005applications}
\bibinfo{author}{Harrow, A.~W.}
\newblock \bibinfo{title}{Applications of coherent classical communication and
  the schur transform to quantum information theory}.
\newblock \emph{\bibinfo{journal}{Preprint at
  https://arxiv.org/abs/quant-ph/0512255}}  (\bibinfo{year}{2005}).

\bibitem{christandl2006spectra}
\bibinfo{author}{Christandl, M.} \& \bibinfo{author}{Mitchison, G.}
\newblock \bibinfo{title}{The spectra of quantum states and the kronecker
  coefficients of the symmetric group}.
\newblock \emph{\bibinfo{journal}{Communications in mathematical physics}}
  \textbf{\bibinfo{volume}{261}}, \bibinfo{pages}{789--797}
  (\bibinfo{year}{2006}).

\bibitem{audenaert2007discriminating}
\bibinfo{author}{Audenaert, K.~M.} \emph{et~al.}
\newblock \bibinfo{title}{Discriminating states: The quantum chernoff bound}.
\newblock \emph{\bibinfo{journal}{Physical review letters}}
  \textbf{\bibinfo{volume}{98}}, \bibinfo{pages}{160501}
  (\bibinfo{year}{2007}).

\end{thebibliography}
\end{document}